% !TeX spellcheck = en_US
% !BIB program = bibtex 

\documentclass[a4paper]{amsart}\usepackage[a4paper,margin=2.5cm,includefoot]{geometry}
\usepackage[utf8]{inputenc}
\usepackage[english]{babel}
\usepackage[T1]{fontenc}

\usepackage{amsmath,amssymb,mathtools,bm}
\usepackage{csquotes}
\usepackage{amstext}
\usepackage{amsthm}
\usepackage{bbm}
\usepackage{enumerate}
\usepackage[svgnames]{xcolor}
\usepackage[colorlinks,allcolors=MidnightBlue,citecolor=DarkRed]{hyperref}
\usepackage[nameinlink,capitalise]{cleveref}
\usepackage{braket}
\usepackage{dsfont}
\usepackage{color}
\usepackage{tikz}
\usepackage{pgfplots}

%some command altering
\makeatletter
%subsection numbers bold
\def\@seccntformat#1{%
	\protect\textup{\protect\@secnumfont
		\ifnum\pdfstrcmp{subsection}{#1}=0 \bfseries\fi% subsection # in \bfseries
		\ifnum\pdfstrcmp{subsubsection}{#1}=0 \itshape\fi%subsubsection # in \itshape
		\csname the#1\endcsname
		\protect\@secnumpunct
	}%
}
%remarks and examples with italic numbers
\renewcommand{\@upn}{}
%crefnosort command
\DeclareRobustCommand{\crefnosort}[1]{%
	\begingroup\@cref@sortfalse\cref{#1}\endgroup
}
\makeatother

%correct referencing of appendices
\makeatletter
\AddToHook{cmd/appendix/before}{\def\cref@section@alias{appendix}}
\makeatother

%%%THMENVIRONMENTS
\numberwithin{equation}{section}
\newtheorem{thm}{Theorem}[section]
\newtheorem{lem}[thm]{Lemma}
\AddToHook{env/lem/begin}{\crefalias{thm}{lem}}
\newtheorem{prop}[thm]{Proposition}
\AddToHook{env/prop/begin}{\crefalias{thm}{prop}}
\newtheorem{cor}[thm]{Corollary}
\AddToHook{env/cor/begin}{\crefalias{thm}{cor}}

\theoremstyle{definition}

\AddToHook{env/defn/begin}{\crefalias{thm}{defn}}

\AddToHook{env/hyp/begin}{\crefalias{thm}{hyp}}
\renewcommand*{\thehyp}{\Alph{hyp}}

\theoremstyle{remark}
\newtheorem{rem}[thm]{Remark}
\AddToHook{env/rem/begin}{\crefalias{thm}{rem}}

\AddToHook{env/ex/begin}{\crefalias{thm}{ex}}

\crefname{hyp}{Hypothesis}{Hypotheses}\Crefname{hyp}{Hypothesis}{Hypotheses}
\crefname{lem}{Lemma}{Lemmas}\Crefname{lem}{Lemma}{Lemmas}
\crefname{thm}{Theorem}{Theorems}\Crefname{thm}{Theorem}{Theorems}
\crefname{prop}{Proposition}{Propositions}\Crefname{prop}{Proposition}{Propositions}
\crefname{enumi}{}{}\Crefname{enumi}{}{}
\creflabelformat{enumi}{#2(#1)#3}
\crefname{equation}{}{}\Crefname{equation}{}{}
\crefname{rem}{Remark}{Remarks}\Crefname{rem}{Remark}{Remarks}
\crefname{ex}{Example}{Examples}\Crefname{ex}{Example}{Examples}

%Change numbers of remarks and examples to italic
\makeatletter
%\g@addto@macro{\thm@space@setup}{\thm@headpunct{:}} %this would replace the '.' by a ':'
\renewcommand{\@upn}{} % to use the same font for the number as for the head
\makeatother

%no indentation after certain environments
%\usepackage{etoolbox}
%\makeatletter
%\patchcmd{\endthm}{\@endpefalse}{}{}{}
%\patchcmd{\endcor}{\@endpefalse}{}{}{}
%\patchcmd{\endlem}{\@endpefalse}{}{}{}
%\patchcmd{\endprop}{\@endpefalse}{}{}{}
%\patchcmd{\endproof}{\@endpefalse}{}{}{}
%\makeatother

%add an enumthm/enumhyp/enumlem/enumcor environment which creates sublists
\usepackage[inline]{enumitem}
\newlist{enumthm}{enumerate}{1} % set up a dedicated enumeration env.
\setlist[enumthm]{label=\upshape(\roman*),ref=\thethm\,(\roman*)}  %labels are upshape, references as environment
\crefalias{enumthmi}{thm} % alias 'enumthmi' counter to 'thm'
%same procedure for all other enumeratable environments:
\newlist{enumcor}{enumerate}{1}
\setlist[enumcor]{label=\upshape(\roman*),ref=\thecor\,(\roman*)}
\crefalias{enumcori}{cor}
\newlist{enumlem}{enumerate}{1}
\setlist[enumlem]{label=\upshape(\roman*),ref=\thelem\,(\roman*)}
\crefalias{enumlemi}{lem}
\newlist{enumprop}{enumerate}{1}
\setlist[enumprop]{label=\upshape(\roman*),ref=\theprop\,(\roman*)}
\crefalias{enumpropi}{prop}
\newlist{enumhyp}{enumerate}{1}
\setlist[enumhyp]{label=\upshape(\roman*),ref=\thehyp\,(\roman*)}
\crefalias{enumhypi}{hyp}
\newlist{enumproof}{enumerate*}{1}
\setlist[enumproof]{label=\upshape(\roman*)}
\newlist{enumdef}{enumerate}{1}
\setlist[enumdef]{label=\upshape(\roman*),ref=\thedefn\,(\roman*)}
\crefalias{enumdefi}{defn}

%The \subcref command
\makeatletter
\newcounter{subcreftmpcnt} %
\newcommand\romansubformat[1]{(\roman{#1})} %adapt ....
\def\subcref{\@ifstar\@@subcref\@subcref}
\newcommand\@subcref[2][\romansubformat]{%
	\ifcsname r@#2@cref\endcsname
	\cref@getcounter {#2}{\mylabel}%
	\setcounter{subcreftmpcnt}{\mylabel}%
	\hyperref[#2]{\romansubformat{subcreftmpcnt}}%
	\else ?? \fi}   
\newcommand\@@subcref[2][\romansubformat]{%
	\ifcsname r@#2@cref\endcsname
	\cref@getcounter {#2}{\mylabel}%
	\setcounter{subcreftmpcnt}{\mylabel}%
	\romansubformat{subcreftmpcnt}%
	\else ?? \fi}   
\makeatother

%The \crefnosort command
\makeatletter
\DeclareRobustCommand{\crefnosort}[1]{%
	\begingroup\@cref@sortfalse\cref{#1}\endgroup
}
\makeatother

%Defining step and claim environments
\def\endstepsymbol{$\lozenge$}
\def\endclaimsymbol{$\lozenge$}
\newcounter{proofstep}
\AtBeginEnvironment{proof}{\setcounter{proofstep}{0}}

\crefname{proofstep}{Step}{Steps}
\Crefname{proofstep}{Step}{Steps}
\newcounter{proofclaim}
\AtBeginEnvironment{proof}{\setcounter{proofclaim}{0}}

\crefname{proofclaim}{Claim}{Claims}
\Crefname{proofclaim}{Claim}{Claims}

%%%%%%%%%%%%%%%%%%%%%%%%%%%%%%%%%%%%%%%%%%%%%%%%%%%%%%%%%COMMANDS

%mathcal alphabet
\newcommand{\cB}{{\mathcal B}}
\newcommand{\cF}{{\mathcal F}}
\newcommand{\cH}{{\mathcal H}}\newcommand{\cI}{{\mathcal I}}
\newcommand{\cK}{{\mathcal K}}

\newcommand{\cT}{{\mathcal T}}\newcommand{\cU}{{\mathcal U}}

%mathfrak alphabet

%mathbb alphabet
\newcommand{\BC}{{\mathbb C}}

\newcommand{\BN}{{\mathbb N}}
\newcommand{\BR}{{\mathbb R}}
\newcommand{\BT}{{\mathbb T}}

\newcommand{\BZ}{{\mathbb Z}}

%\mathds alphabet
\usepackage{dsfont} %can be called with option sans for dsletters without serifs

\newcommand{\DSP}{{\mathds P}}

%mathscr alphabet
\usepackage{mathrsfs}

%\mathsf alphabet

\newcommand{\sfd}{{\mathsf d}}

\newcommand{\sfq}{{\mathsf q}}\newcommand{\sfr}{{\mathsf r}}

%\mathrm alphabet

%Sets of Numbers
\newcommand{\IN}{\BN}\newcommand{\IZ}{\BZ}\newcommand{\IR}{\BR}\newcommand{\IC}{\BC}
\newcommand{\N}{\BN}\newcommand{\R}{\BR}\newcommand{\C}{\BC}

%Propabilities
\newcommand{\PP}{\DSP}

%Hilbert Spaces
\newcommand{\HS}{\cH}

%vargreek and related
\newcommand{\eps}{\varepsilon}

%identities / constants / integration
 \renewcommand{\d}{\sfd}

%real and imaginary parts

%range,kernel, restriction

%span,support,essential supremum,essential infimum
\DeclareMathOperator*{\esssup}{ess\,sup}

%spectrum

%limits of operators

%hats/bars
\newcommand{\wh}[1]{\widehat{#1}}\renewcommand{\bar}[1]{\overline{#1}}

%widecheck / code from mathabx
\DeclareFontFamily{U}{mathx}{\hyphenchar\font45}
\DeclareFontShape{U}{mathx}{m}{n}{
	<5> <6> <7> <8> <9> <10>
	<10.95> <12> <14.4> <17.28> <20.74> <24.88>
	mathx10
}{}
\DeclareSymbolFont{mathx}{U}{mathx}{m}{n}
\DeclareFontSubstitution{U}{mathx}{m}{n}
\DeclareMathAccent{\widecheck}{0}{mathx}{"71}
\DeclareMathAccent{\wideparen}{0}{mathx}{"75}

% llangle and rrangle, lsem, rsem and others
%\DeclareFontFamily{OMX}{MnSymbolE}{}
%\DeclareFontShape{OMX}{MnSymbolE}{m}{n}{
%	<-6>  MnSymbolE5
%	<6-7>  MnSymbolE6
%	<7-8>  MnSymbolE7
%	<8-9>  MnSymbolE8
%	<9-10> MnSymbolE9
%	<10-12> MnSymbolE10
%	<12->   MnSymbolE12}{}
%\DeclareSymbolFont{mnlargesymbols}{OMX}{MnSymbolE}{m}{n}
%\SetSymbolFont{mnlargesymbols}{bold}{OMX}{MnSymbolE}{b}{n}
%\DeclareMathDelimiter{\llangle}{\mathopen}{mnlargesymbols}{'164}{mnlargesymbols}{'164}
%\DeclareMathDelimiter{\rrangle}{\mathclose}{mnlargesymbols}{'171}{mnlargesymbols}{'171}
%\DeclareMathDelimiter{\lsem}{\mathopen}{mnlargesymbols}{'102}{mnlargesymbols}{'102}
%\DeclareMathDelimiter{\rsem}{\mathclose}{mnlargesymbols}{'107}{mnlargesymbols}{'107}
%\DeclareMathDelimiter{\langlebar}{\mathopen}{mnlargesymbols}{'152}{mnlargesymbols}{'152}
%\DeclareMathDelimiter{\ranglebar}{\mathclose}{mnlargesymbols}{'157}{mnlargesymbols}{'157}
%\DeclareMathDelimiter{\lWavy}{\mathopen}{mnlargesymbols}{'137}{mnlargesymbols}{'137}
%\DeclareMathDelimiter{\rWavy}{\mathopen}{mnlargesymbols}{'137}{mnlargesymbols}{'137}

%characteristic function
\newcommand{\chr}{\mathbf 1}
%absolute values and norms

%Fock Spaces

%\usepackage{upgreek}\newcommand{\FGamma}{\Upgamma}

\title[Optimal Convergence Rate of 1D Quantum Walks]{On the Optimal Rate of Convergence\\ for Translation-Invariant 1D Quantum Walks}
\author{Benjamin Hinrichs}
\author{Pascal Mittenb\"uhler}
\address{Universit\"at Paderborn, Institut f\"ur Mathematik, Institut f\"ur Photonische Quantensysteme, Warburger Str. 100, 33098 Paderborn, Germany}
\email{benjamin.hinrichs@math.upb.de, mittenbu@math.upb.de}
%\subjclass[2020]{Primary 81S40; Secondary 47D08.}

%
% Commands for colorful comments
%\usepackage{xcolor}
\usepackage{cancel}

\newcommand{\tr}{\operatorname{tr}}

\begin{document}

\begin{abstract} 
	\noindent
	We study the convergence rate of translation-invariant discrete-time quantum dynamics on a one-dimensional lattice. We prove that the cumulative distributions function of the ballistically scaled position $X({n})/{n}$ after $n$ steps converges at a rate of ${n}^{-1/3}$ in the L\'evy metric as ${n}\to\infty$. In the special case of shift-coin quantum walks with two-dimensional coin space, we recover the same convergence rate for the supremum distance and prove optimality.
\end{abstract}

\maketitle

\section{Introduction}

Quantum walks have since their introduction \cite{Ambainisetal.2001} attracted a lot of attention as an experimentally realizable platform \cite{Peretsetal.2008,Karskietal.2009,Schreiberetal.2010,Genskeetal.2013} with many applications, e.g., for search algorithms in quantum computing \cite{Ambainis.2007,Lovettetal.2010,Portugal.2013,Montanaro.2016} or quantum simulation \cite{AspuruGuzikWalther.2012,Childs.2009}.
Particular emphasis has been put on the analogy of quantum walks to the classical random walk, where the main computational advantage of the quantum walk is the ballistic propagation of information.
Especially, analogous results to the classical central limit theorem have by now been established in various settings with versatile mathematical approaches, see for example \cite{Konno.2002,GrimmettJansonScudo.2004,Konno.2005b,AhlbrechtVogtsWernerWerner.2011,SunadaTate.2012,Suzuki.2016,Wada.2020} and references therein.

Surprisingly, apart from the recent work \cite{CedzichJoyeWerner2.2025}, global error bounds for the approach to the asymptotic position distribution in the setting of quantum walks have not been studied, to our knowledge. In that article, the authors prove exponential decay of quantum information outside of the convex hull of the propagation region.
A local error bound, {except for a thin layer around the boundary of the propagation region}, was further derived in \cite{SunadaTate.2012}.
The absence of global error estimates stands in strong contrast to the celebrated Berry--Esseen theorem for the classical random walk \cite{Berry.1941,Esseen.1945}. In this article, we fill this gap and prove a Berry--Esseen type theorem for one-dimensional translation-invariant quantum walks.

In comparison to the classical random walk, which has a convergence order of $n^{-1/2}$ as the number of steps $n$ goes to $\infty$, we in fact discover that quantum walks converge slower at the order of $n^{-1/3}$. This is due to the fact that most information is located directly in the ballistically propagating wavefront region, where the convergence is thus slowed down. In view of the many applications of quantum walks in quantum computing and their experimental realization, our results will provide an important ingredient in estimating computational errors in the future. Natural extensions of our result would also be to study non translation-invariant walks or higher-dimensional systems.
As far as one-dimensional systems go, our upper bound on the convergence rate in fact holds for a large class of unitary lattice dynamics, especially including the so-called split-step quantum walks, which can be described in the framework of CMV matrices \cite{BourgetHowlandJoye.2003,CanteroMoralVelazquez.2003,CanteroGrunbaumMoralVelazquez.2012,CedzichFillman.2024}.

\smallskip
For the purpose of this introduction, let $X_n$ and $X_\infty$ denote the position after $n$ time steps and asymptotic position of a walker on the one-dimensional lattice $\IZ$ and let the corresponding probability distributions be $\PP\!_\sfr$ and $\PP\!_\sfq$, respectively. Then the Berry--Esseen theorem states that 
\begin{align*}
	\sup_{x\in \IR} \big|\PP\!_\sfr[X_n/\sqrt n\le x] - \PP\!_\sfr[X_\infty\le x]\big| \asymp n^{-\frac 1 2 }, \qquad n\to\infty.
\end{align*}
The main result of this article is the analogous statement
\begin{align*}
	\sup_{x\in \IR} \big|\PP\!_\sfq[X_n/n\le x] - \PP\!_\sfq[X_\infty\le x]\big| \asymp n^{-\frac 1 3 }, \qquad n\to\infty.
\end{align*}
We further prove a corresponding upper bound for general discrete-time quantum dynamics on the one-dimensional lattice in the L\'evy metric, which is in fact an important intermediate step in the proof of the sup-norm bound as well.

Our proof essentially consists of two parts: {First}, we prove an adapted Esseen inequality, estimating differences of cumulative distribution functions from the corresponding characteristic functions. Applying it to the setting of 1D quantum walks yields the  convergence rate $n^{-1/3}$ for the L\'evy distance. We then carefully analyze the concrete position distribution for the shift-coin quantum walk, by building on results from \cite{SunadaTate.2012}. In this second step, we especially need to focus on a treatment of the wavefront region mentioned above, which eventually yields the optimal convergence rate in the supremum distance.

\subsubsection*{Structure of the Article} 
In \cref{sec:int}, we present the exact setting of this article and state our main results \cref{thm:main,thm:quantum_rate_supnorm}. In \cref{sec:esseen}, we then prove a generalized Esseen--Zolotarev type inequality, which applied to our setting yields the proof of \cref{thm:main}. In the final \cref{sec:coinstep}, we then sharpen the estimates for shift-coin quantum walks to prove \cref{thm:quantum_rate_supnorm}.

\subsubsection*{Acknowledgments}
PM thanks Dr. Maik Reddiger for reminding him of the importance of the article \cite{SunadaTate.2012}, Prof. Tadahiro Miyao for guidance and Prof. Itaru Sasaki for guidance and valuable discussions.
PM was supported by a fellowship of the German Academic Exchange
Service (DAAD) and thanks Shinshu University and in particular Prof. Itaru Sasaki for their hospitality during his stay.
Both authors acknowledge funding by the Ministry of Culture and Science of the
State of North Rhine-Westphalia within the project `PhoQC' (Grant
Nr. PROFILNRW-2020-067).

\section{Model and Results}

Let us now introduce the models under consideration and state the precise results.
We will first introduce the abstract concept of discrete time one-dimensional lattice dynamics in \cref{sec:int} and recall the context of
the well-known central limit theorems in the translation-invariant case.
We will then state our general result on this type of dynamics, which is a Berry--Esseen type theorem
for the cumulative distribution functions converging in L\'evy metric in the order $n^{-1/3}$ as $n\to\infty$, with $n$ being the number of time steps.
Then, on the concrete example of two-dimensional local Hilbert spaces,
we prove that this rate of convergence is optimal even in supremum norm, also referred to as Kolmogorov metric in this context.

\subsection{Lattice Dynamics}\label{sec:int}
Let us start by introducing the general setting of this article.
\subsubsection{General Setup}
Let $\cK$ be a for now arbitrary (local) Hilbert space and
as global Hilbert space consider $\cH = \ell^2(\IZ;\cK)$, i.e., the space of square-summable sequences taking values in $\cK$.
As states of the so-described quantum system, we consider density matrices, i.e., trace-class positive operators $\rho\in\cT(\cH)$ satisfying $\tr(\rho)=1$.
The time-step operator $W\in\cU(\cH)$ describes the time evolution under a single (discrete) time step, so the system initialized in a state $\rho$ will have state $W^n\rho W^{-n}$ after $n$ steps.
Given a selfadjoint operator $A$ on $\cH$, the probability to measure $A$ for the system in state $\rho$ in a set $M\subset\IR$ is $\tr(\rho\chr_M(A))$. This allows us to define the corresponding {\em cumulative distribution functions} (CDF) by
\begin{align}
	\label{eq:CDF}
	F_A^\rho (x) \coloneqq \tr\big(\rho \chr_{(-\infty,x]}(A)\big).
\end{align}
We are here most interested in choosing $A=X$ the (selfadjoint) position operator given by $(X \psi)_k = k\psi_k$, $\psi\in\cH$.
Then for given initial state $\rho$, we analyze the ballistically rescaled position CDF.
After $n$ steps, it is given by
$F_{X/n}^{W^n \rho W^{-n}} = F^\rho_{X_n}$
with $X_n \coloneqq W^{-n}X W^n/n$.

\subsubsection{Translation-Invariance and Weak Limit Theorems}
We will restrict our attention to translation-invariant time step operators $W$, i.e., assume that $W$ commutes with the right shift operator $(T\psi)_k \coloneqq \psi_{k-1}$.
Denoting by $\cF:\ell^2(\IZ;\cK)\to L^2(\BT;\cK)$ with $\BT=[0,2\pi)$ the usual unitary Fourier transform defined by
\begin{align*}
	(\cF\psi)(p) = \frac{1}{\sqrt{2\pi}}\sum_{k\in\IZ}\psi_ke^{ikp}, \quad p\in\BT,
\end{align*}
translation-invariance is equivalent to the existence of a map $\wh W:\BT\to\cU(\cK)$ such that
\begin{align*}
	\big(\cF W \cF^* \psi\big)(p) = \wh W(p)\psi(p).
\end{align*}
We will here further assume that $\wh W(p)$ has purely discrete spectrum for all $p\in\BT$, i.e., there exists a (finite or countable) family of orthogonal projections $(\Pi_k(p))_{k\in \cI}$ on $\cK$ and a corresponding family of real numbers $(\omega_k(p))_{k\in\cI}\subset \IR$ such that
\begin{align}
	\label{eq:Wp}
	\wh W(p) = \sum_{k\in\cI} e^{i \omega_k(p)}\Pi_k(p).
\end{align}
If $\omega_k$ is differentiable in almost every $p\in\BT$, we can then define the so-called velocity operator as the selfadjoint operator given by
\begin{align}
	\label{eq:V}
	\cF V\cF^* (p) = \sum_{k\in\cI} \omega_k'(p)\Pi_k(p).
\end{align}
In this case the strong convergence $e^{i t X_n} \xrightarrow{n\to\infty} e^{it V}$ for all $t\in\IR$ is well-established as a type of weak central limit theorem for quantum walks \cite{Konno.2002,GrimmettJansonScudo.2004,AhlbrechtVogtsWernerWerner.2011,Suzuki.2016}.
By the Portmanteau theorem, this is equivalent to the convergence
\begin{align*}
	F^\rho_{X_n}(x) \xrightarrow{n\to\infty} F^\rho_V(x)
\end{align*}
in all points of continuity $x$ of $F^\rho_V$.
The aim of this article is to explicitly quantify this convergence in appropriately chosen metrics for the CDFs under consideration.

\subsection{L\'evy Metric and General Upper Bound}
There are various natural metrizations in terms of CDFs for weak convergence of probability measures existing in the literature, see for example the review \cite{Bobkov.2016}.
Especially for not everywhere differentiable CDFs, the supremum metric is not very well suited. It is thus natural to consider the  L\'evy metric given by
\begin{align}
	\label{eq:Levy}
	L\left(F,G \right)\coloneqq \sup_{x\in\IR}\inf\{\epsilon>0: F(x-\epsilon)-\epsilon\leq G(x)\leq F(x+\epsilon)+\epsilon\}.
\end{align}
We note that the L\'evy metric is upper-bounded by the supremum metric\footnote{
{\em Proof of \cref{lem:linfty}.}
		Setting $\epsilon\coloneqq\|F-G\|_\infty $ and using that CDFs are increasing, we can estimate for any $x\in\IR$
		\begin{align*}
			F(x- \epsilon)-\epsilon&\leq F(x)-\lvert F(x)-G(x)\rvert \leq G(x) \leq F(x+\epsilon)-F(x)+G(x) \leq F(x+\epsilon) + \epsilon.\hfill\qed
		\end{align*}
}
\begin{equation}
	\label{lem:linfty}
	L(F,G)\leq \sup_{x\in\IR}\lvert F(x)-G(x)\rvert \eqqcolon \|F-G\|_\infty,
\end{equation}
In fact, in regions where one of the CDFs is continuously differentiable, the L\'evy metric vice versa can also yield an upper bound on the supremum metric,
as we will make more precise in \cref{lem:smooth} below.

Under very mild additional assumptions on $W$, we can identify the convergence rate $n^{-1/3}$ in terms of the L\'evy metric.
\begin{thm}\label{thm:main}
    In \cref{eq:Wp} additionally assume that $\omega_k\in C^2(\BT;\IR)$ and $\Pi_k\in C^1(\BT;\cB(\HS))$ such that 
    \[\sum_{k\in\cI}\sup_{p\in\BT} \|\Pi_k'(p)\|_{\cB(\cK)} <\infty \quad \mbox{and}\quad\sup_{k\in\cI,\,p\in\BT}|\omega_k''(p)|<\infty. \]
     Then, for any density matrix $\rho$ satisfying 
    $
        \tr(\lvert X\rvert \rho)<\infty
    $, 
    there exists a constant $C>0$ such that
	\[L(F_{X_n}^{\rho},F^{\rho}_V)\le C  n^{-\frac 1 3 }.\]
\end{thm}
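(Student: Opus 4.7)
The plan is to combine an Esseen--Zolotarev type inequality for the Lévy metric of the form
\[
L(F,G) \lesssim \frac{1}{T} + \int_{-T}^T \frac{|\hat F(t) - \hat G(t)|}{|t|}\,dt,
\]
to be established in \cref{sec:esseen}, with an explicit quantitative bound on $|\hat F_{\DSX_n}^\rho(t) - \hat F_V^\rho(t)| = |\tr(\rho(e^{it\DSX_n} - e^{itV}))|$ uniform for $|t|\le T$. For the latter, I would pass to the Fourier picture. Since $\cF \DSX\cF^* = -i\partial_p$, the operator $\cF e^{it\DSX/n}\cF^*$ equals the translation $(S_{t/n}f)(p) = f(p+t/n)$, and the identity $e^{it\DSX_n} = W^{-n}e^{it\DSX/n}W^n$ yields the closed form
\[
(\cF e^{it\DSX_n}\cF^*\hat\psi)(p) = \hat U_n(p,t)\,\hat\psi(p+t/n), \qquad \hat U_n(p,t) \coloneqq \hat W(p)^{-n}\hat W(p+t/n)^n.
\]
Setting $A \coloneqq e^{it\DSX_n} - e^{itV}$ and comparing with $(\cF e^{itV}\cF^*\hat\psi)(p) = e^{it\hat V(p)}\hat\psi(p)$, I would split $\cF A\cF^* = B_1 + B_2$ into a translation piece $B_1 \coloneqq \hat U_n(S_{t/n} - I)$ and a conjugation piece $B_2 \coloneqq \hat U_n - e^{it\hat V}$ (the latter a fibrewise multiplication operator).

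Since $\hat U_n(p,t)$ is unitary on $\cK$ for each $p$, one has $B_1^*B_1 = (S_{t/n} - I)^*(S_{t/n} - I) = 2I - S_{t/n} - S_{-t/n}$, which corresponds in position space to $4\sin^2(t\DSX/(2n))$; the elementary bound $4\sin^2(tk/(2n)) \le 2|tk/n|$ then yields $\tr(\rho B_1^*B_1) \le 2|t|\tr(|\DSX|\rho)/n$, so that only the first absolute moment of $\DSX$ is required. For $B_2$, inserting $\Pi_k(p+t/n) = \Pi_k(p) + [\Pi_k(p+t/n) - \Pi_k(p)]$ into the expansion of $\hat U_n(p,t)$ and using $\Pi_j(p)\Pi_k(p) = \delta_{jk}\Pi_k(p)$ gives
\[
\hat U_n(p,t) = \sum_k e^{in[\omega_k(p+t/n)-\omega_k(p)]}\Pi_k(p) + \hat W(p)^{-n}\sum_k e^{in\omega_k(p+t/n)}\bigl[\Pi_k(p+t/n) - \Pi_k(p)\bigr],
\]
so a second-order Taylor expansion in the diagonal sum (controlled by $\sup_{k,p}|\omega_k''(p)| < \infty$) together with a first-order expansion in the off-diagonal sum (controlled by $\sum_k\sup_p\|\Pi_k'(p)\| < \infty$) produce the operator norm bound $\|B_2\| \le C(|t|/n + t^2/n)$ and hence $\tr(\rho B_2^*B_2) \le C^2(|t|/n + t^2/n)^2$.

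Combining these via the trace Cauchy--Schwarz inequality $|\tr(\rho A)|^2 \le \tr(\rho A^*A)$ and the operator bound $A^*A \le 2B_1^*B_1 + 2B_2^*B_2$ yields the characteristic function estimate $|\hat F_{\DSX_n}^\rho(t) - \hat F_V^\rho(t)| \le C\bigl(\sqrt{|t|/n} + t^2/n\bigr)$. Inserting this into the Esseen--Zolotarev inequality and evaluating the resulting elementary integral gives $L(F_{\DSX_n}^\rho, F_V^\rho) \lesssim \sqrt T/\sqrt n + T^2/n + 1/T$, whose balanced minimum at $T = n^{1/3}$ produces the claim, with every term of order $n^{-1/3}$. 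The principal obstacle is establishing the Esseen--Zolotarev inequality of exactly this form, since the limiting CDF $F_V^\rho$ need not be differentiable (it typically has cusps where $\omega_k''$ vanishes), ruling out the classical Esseen bound with a density-dependent tail; on the Fourier side, the crucial structural point is the separation of the dominant translation contribution, controlled by the first moment of $\DSX$ alone, from the conjugation contribution, controlled by Taylor expansion in $p$.
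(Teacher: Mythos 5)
Your outline matches the paper's high-level strategy exactly -- a smoothing/Esseen--Zolotarev inequality for the L\'evy metric combined with a characteristic-function estimate built on the decomposition $\cF(e^{it\DSX_n}-e^{itV})\cF^* = B_1+B_2$ with $B_1=\hat U_n(S_{t/n}-I)$ the translation piece and $B_2 = \hat U_n - e^{it\hat V}$ the conjugation piece, the latter split further into diagonal (second-order Taylor in $\omega_k$) and off-diagonal (first-order Taylor in $\Pi_k$) contributions. That is the same $(\cF W^{-n}e^{it\DSX/n}W^ne^{-it\DSX/n}\cF^* - U_n) + (U_n - e^{it\hat V})$ split the paper uses, with identical norm bounds $|t|/n \cdot\sum_k\sup\|\Pi_k'\|$ and $t^2/n\cdot\sup|\omega_k''|$, so here there is genuine agreement.

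Where you diverge is the treatment of $B_1$. You estimate $|\tr(\rho(B_1+B_2))|$ via Cauchy--Schwarz as $\sqrt{\tr(\rho A^*A)}\le\sqrt{2\tr(\rho B_1^*B_1)+2\tr(\rho B_2^*B_2)}$ and use $B_1^*B_1 = 4\sin^2(t\DSX/(2n))\le 2|t\DSX|/n$, which gives $|\hat F_{\DSX_n}^\rho - \hat F_V^\rho|\lesssim\sqrt{|t|/n}+|t|/n+t^2/n$. The paper instead claims the \emph{linear} bound $|t|\tr(|\DSX|\rho)/n$ for the $B_1$ contribution (Lemma 3.5), which it needs because its Zolotarev inequality (Theorem 3.1) is a $\max$-of-$\sup$ estimate: with your $\sqrt{|t|/n}$ bound, $\sup_{\lambda\in(0,1]}|\hat F-\hat G|/\lambda$ is infinite and Theorem 3.1 gives nothing. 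So your weaker characteristic-function bound forces the integral form of the smoothing inequality. This still yields $n^{-1/3}$ (balancing $1/T + \sqrt{T/n} + T^2/n$ at $T=n^{1/3}$), and is in fact closer to the intermediate step of the paper's own proof (equation just below \cref{eq:smoothlevy}, before they pass to the $\sup$ form), so the rate is unaffected. Your Cauchy--Schwarz route only uses the first moment $\tr(|\DSX|\rho)$ and is straightforwardly correct, whereas the paper's Lemma 3.5 step $|\tr(U(I-V)\rho)|\le|\tr((I-V^{-1})\rho)|$ for a general unitary $U$ is not a standard trace inequality and is at best terse; so your route is arguably safer for the general $\rho$ of the theorem.

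The one genuine gap is the one you flag yourself: the inequality $L(F,G)\lesssim \frac1T+\int_{-T}^T\frac{|\hat F-\hat G|}{|t|}\,dt$ as you write it, with a hard cutoff at $T$, is \emph{not} an unconditional inequality for the L\'evy metric -- with only $|\hat F-\hat G|\le 2$ the tail $\int_{|t|>T}|\hat F-\hat G|\,|\hat\Theta(t)|/|t|\,dt$ of the smoothed Fourier-inversion formula is $O(1)$, not $O(1/T)$. The correct form keeps the smooth decaying weight $\hat\Theta_\epsilon^n(\lambda)$ (the Fourier transform of the iterated triangular kernel) inside the integral, which is what the paper's Lemma 3.3 plus Fourier inversion actually give. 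Because your characteristic-function estimate grows only polynomially, $|\hat F-\hat G|\lesssim \sqrt{|t|/n}+t^2/n$, the $\lambda^{-3}$ decay of $\hat\Theta_\epsilon^3$ does absorb the tail and the balancing at $\epsilon=n^{-1/3}$ still gives $O(n^{-1/3})$ -- but this has to be done with the weight present, not with a sharp truncation. Prove the smoothing lemma (convolution with a compactly supported triangular-type kernel leaves the L\'evy distance essentially unchanged up to an additive $\epsilon$), then keep the kernel's Fourier transform as a weight in the final integral; dropping it for a sharp $[-T,T]$ cutoff is the step that fails.
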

\begin{proof}
	This follows by combining \cref{lem:ZolotarevSim,Cor:triangle}.
	Details can be found in the end of \cref{sec:esseen}.
\end{proof}

\subsection{Kolmogorov Metric and Optimality for Step-Coin Walks}
\label{subsec:Kolmo}
Let us now introduce the concept of the usual shift-coin quantum walk with the two-dimensional coin space $\cK=\IC^2$. In this setting, recalling the right shift operator $T$ from above and using the natural identification $\HS =\ell^2(\IZ;\IC^2)\cong \ell^2(\IZ)\oplus\ell^2(\IZ)$,
the {\em step operator} $S$ is defined as
\begin{align*}
	S(\psi_1\oplus\psi_2) \cong T\psi_1\oplus T^{-1}\psi_2.
\end{align*}
Note that $\cF S\cF^*(p) = e^{ip}\oplus e^{-ip}$, i.e., $\cF S\cF^*$ is the direct sum of multiplication operators.
We will then call $W$ the time-step operator of a shift-coin walk on $\cH$ if $W=SC$ for some $C\in\cU(\IC^2)$ acting as $(C\psi)_k=C\psi_k$, $k\in\IZ$ called {\em coin operator}.

Our next result proves that in this setting the rate of convergence in \cref{thm:main} carries over to the supremum metric, also known as Kolmogorov metric in this context, and is optimal.
%%%%%%%%%
\begin{thm}
    \label{thm:quantum_rate_supnorm}
    Assume $C \in \mathcal{U}(\mathbb{C}^2)$ has only non-zero entries and assume that the density matrix $\rho$ is a finite sum of orthogonal projections of the form $\ket{\delta_n\phi}\bra{\delta_n\phi}$, where $\phi\in\IC^2$ and $\delta_n\in\ell^2(\IZ)$, is as usually given by $(\delta_n)_k=1$ if $n=k$ and $(\delta_n)_k=0$ else.
    Then there exist $C_1,C_2>0$ such that
    \[
        C_1n^{-\frac 1 3 }\le \bigl\| F_{X_n}^{\rho} - F_V^\rho \bigr\|_\infty \le C_2n^{-\frac 1 3 }.
    \]
    \end{thm}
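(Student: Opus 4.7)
The plan is to prove the upper and lower bounds separately. The upper bound refines \cref{thm:main} to supremum norm by splitting the line into a bulk region, where the limiting density is bounded and Lipschitz smoothness upgrades the L\'evy estimate, and a wavefront region, where a direct Airy-type stationary phase analysis still yields the rate $n^{-1/3}$. The lower bound exhibits one explicit point at which the CDF difference oscillates at exactly that order. By linearity, I would reduce throughout to the pure state $\rho = \ket{\delta_0\phi}\bra{\delta_0\phi}$, since translation-invariance absorbs a general $\delta_m$ into a phase that drops out of $F_{\DSX_n}^\rho$. Because $C$ has no zero entries, the two eigenvalues $e^{i\omega_\pm(p)}$ of $\wh W(p)$ stay simple everywhere, so $\omega_\pm$ and the rank-one projections $\Pi_\pm(p)$ are globally analytic in $p \in \DST$. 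A direct computation for the two-state coin shows the wavefront values $v_\star = \omega_\pm'(p_\star)$ with $\omega_\pm''(p_\star) = 0$ are finite in number and satisfy the cubic non-degeneracy $\omega_\pm'''(p_\star)\neq 0$.

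For the upper bound, outside small fixed neighborhoods of these finitely many $v_\star$, the density of $V$ is smooth and bounded, so $F_V^\rho$ is Lipschitz there, and \cref{thm:main} together with \cref{lem:smooth} yields $\lvert F_{\DSX_n}^\rho(x) - F_V^\rho(x)\rvert \leq C n^{-1/3}$ on that region. Near each $v_\star$, I would use the Fourier representation
\[\langle \delta_k, W^n \delta_0\phi\rangle = \frac{1}{2\pi}\int_{\DST}e^{-ikp}\wh W(p)^n\phi\,\rmd p\]
together with the spectral decomposition \eqref{eq:Wp} to rewrite $F_{\DSX_n}^\rho(x)$ as a sum of oscillatory integrals with phase $n(\omega_\pm(p) - v p)$ for $v \approx x$, after Abel summation in $k\le nx$. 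A parallel representation for $F_V^\rho(x)$ emerges via the change of variables $v = \omega_\pm'(p)$. Using a partition of unity separating the cubic critical point $p_\star$ from the rest, the non-degenerate regions contribute $O(n^{-1})$ by ordinary stationary phase while the neighborhood of $p_\star$ contributes the sharp $O(n^{-1/3})$ Airy-type bound, producing the sup-norm estimate uniformly in $x$.

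For the lower bound, I would evaluate $F_{\DSX_n}^\rho(v_\star) - F_V^\rho(v_\star)$ at the wavefront value itself. Under the same reductions, the difference collapses to a single Airy-type integral near $p_\star$, and after the rescaling $p = p_\star + n^{-1/3}s$ one obtains a leading term of the form
\[c_\star\, n^{-1/3} \quad \text{with} \quad c_\star \propto \langle \phi, \Pi_\pm(p_\star)\phi\rangle \, \bigl\lvert\omega_\pm'''(p_\star)\bigr\rvert^{-1/3}\mathrm{Ai}(0),\]
up to a harmless phase and a universal constant. The non-zero entries of $C$ prevent $\Pi_\pm(p_\star)$ from being diagonal in the coordinate basis, so $\langle \phi, \Pi_\pm(p_\star)\phi\rangle$ is strictly positive for every non-zero $\phi \in \IC^2$, giving $c_\star \neq 0$ and hence the matching lower bound $\lvert F_{\DSX_n}^\rho(v_\star) - F_V^\rho(v_\star)\rvert \geq C_1 n^{-1/3}$ for all sufficiently large $n$.

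The main technical obstacle I anticipate is the uniform Airy-type stationary phase analysis in the wavefront region, where the discrete sum over $k$ (with spacing $1/n$ in $v$) and the continuous integral defining $F_V^\rho$ must be compared on the same $n^{-2/3}$ Airy resolution scale while keeping errors uniform in $x$. Establishing the non-vanishing of the Airy amplitude for the lower bound further requires explicitly tracking the eigenvectors of $\wh W(p_\star)$ as functions of the entries of $C$, which is precisely where the non-zero entry assumption enters essentially.
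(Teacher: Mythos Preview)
Your overall architecture matches the paper's: reduce to a localized pure state, split into a bulk region where \cref{thm:main} plus a Lipschitz upgrade (\cref{lem:smooth}) gives the sup-norm bound, and a wavefront region handled by an Airy-type analysis. The paper executes the wavefront part differently, relying on the explicit transition-probability asymptotics of Sunada--Tate (\cref{thm:tate}) and then controlling oscillatory sums of Airy-squared terms, rather than your Abel-summation-plus-stationary-phase route; both are plausible, but the paper's approach keeps the discrete $k$-sum explicit throughout, which avoids the uniformity issues you correctly flag.

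There is, however, a genuine error in your lower-bound argument. The assertion that ``$\langle\phi,\Pi_\pm(p_\star)\phi\rangle$ is strictly positive for every non-zero $\phi\in\IC^2$'' is false: $\Pi_\pm(p_\star)$ is a rank-one projection, so it annihilates a one-dimensional subspace regardless of whether it is diagonal in the standard basis. Non-diagonality only ensures that neither $e_1$ nor $e_2$ lies in the kernel; a generic $\phi$ can still be orthogonal to the range. Concretely, in Konno's density the wavefront weight at $x=\pm|a|$ is proportional to $1\pm|a|\lambda_C(\phi)$, and one checks that $|\lambda_C(\phi)|$ can attain $1/|a|$, so one of the two wavefront amplitudes can vanish for suitable $\phi$. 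The fix is not to claim positivity at a fixed $v_\star$ for all $\phi$, but to observe that the two weights $1\pm|a|\lambda_C(\phi)$ sum to $2$, so at least one wavefront carries a strictly positive coefficient; evaluate the CDF difference at \emph{that} endpoint. This is essentially what the paper does via \cref{lem:suplowbound}.

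A smaller issue: your leading term ``$c_\star n^{-1/3}$ with $c_\star\propto\langle\phi,\Pi_\pm(p_\star)\phi\rangle\,|\omega_\pm'''(p_\star)|^{-1/3}\operatorname{Ai}(0)$'' looks like the \emph{point} probability $p_n(\phi;\lfloor n v_\star\rfloor)$, which scales as $n^{-2/3}$, not the CDF. To obtain $n^{-1/3}$ you must sum roughly $n^{1/3}$ such contributions on the Airy scale (as the paper does in the proof of \cref{lem:suplowbound}), or equivalently integrate the Airy profile; the single value $\operatorname{Ai}(0)$ should be replaced by $\int_0^\infty\operatorname{Ai}^2$-type quantities.
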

    \begin{proof}
    	The proof is given in \cref{sec:coinstep}.
    \end{proof}
    \begin{rem}
        In contrast to the classical Berry--Esseen rate, which is $n^{-1/2}$, the one-dimensional shift-coin quantum walk converges strictly
        slower.  The reduced rate arises from the formation of ballistic wavefronts:
        near the edges of the support of its density, $F_V^\rho$ develops a nondifferentiable cusp
        and the approximation error is dominated by this boundary behaviour.
        Analyzing this boundary region is the main ingredient to the proof of the above theorem.
    \end{rem}
    \begin{rem}
    	\label{rem:lowerboundLevy}
    	It would be desirable to also obtain a corresponding lower bound for the L\'evy metric in the shift-coin quantum walk case.
    	Our method in this case yields that for all $\eps>0$
    	\[ n^{-\frac23-\eps}\lesssim L(F_{X_n^\rho},F_V^\rho)\lesssim n^{-\frac13 }, \]
    	but matching upper and lower bounds would require a more subtle analysis of the wavefront sector, see \cref{rem:lowerboundLevyproof} for the simple proof.
    \end{rem}
\section{Esseen--Zolotarev Type Estimates}
\label{sec:esseen}
In this \lcnamecref{sec:esseen}, we prove \cref{thm:main},
by appropriately generalizing the methods used for Berry--Esseen type theorems and subsequently applying it to the lattice dynamic setting.

Throughout this \lcnamecref{sec:esseen}, we will work with {\em cumulative distribution functions} (CDF) $F:\IR\to[0,1]$ of a Borel probability measure $\mu_{{F}}$ on $\IR$,
i.e., $F(x)=\mu_{{F}}((-\infty,x])$. That is, $F$ is increasing, right-continuous and satisfies $\lim_{x\to-\infty}F(x)=0$, $\lim_{x\to+\infty}F(x)=1$.
We will denote the corresponding characteristic functions by
\[
	\wh F(\lambda) \coloneqq \int_\IR e^{i \lambda x}\d\mu_{{F}}(x). 
\]

\subsection{A Generalized Zolotarev Inequality}
\label{subsec:Zolo}
In the original work of Berry and Esseen \cite{Berry.1941,Esseen.1945},
the supremum distance of two CDFs was estimated in terms of the characteristic function of the probability measure.
Later generalization and improvements were also given, e.g., in \cite{Fainleib.1968,BentkusGotze.1996}, also see \cite[\S\,2]{Bobkov.2016} for an extensive discussion.
Nevertheless, none of these strategies will provide the optimal rate of convergence in our setting, due to the irregularity of the CDFs or more precisely their densities.
Thus, instead we consider here the L\'evy metric introduced in \cref{eq:Levy}, which provides a metrization of the weak convergence of probability measures.
There also exist many estimates on the L\'evy metric in terms of the characteristic functions of the corresponding measures, e.g, \cite{Bohman.1961,Zolotarev.1971},
again see \cite[\S\,3]{Bobkov.2016} for an overview,
but none in the form presented in those works provides exactly the optimal convergence rate derived here.
We therefore build on the work by Zolotarev \cite{Zolotarev.1971} and prove the following estimate.
\begin{thm}\label{lem:ZolotarevSim}
    There exists a global constant $C>0$ such that for CDFs $F$, $G$ and any $\epsilon\in(0,1]$ 
    \begin{align*}
        L(F,G)\leq \epsilon + \epsilon^{-2}C\max\left(\sup_{\lambda\in(0,1]}\left\lvert\frac{\wh F(\lambda)-\wh G(\lambda)}{\lambda}\right\rvert,\sup_{\lambda\in(1,\infty)}\left\lvert\frac{\wh F(\lambda)-\wh G(\lambda)}{\lambda^2}\right\rvert\right).
    \end{align*}
\end{thm}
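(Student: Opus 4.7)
The plan is the classical Esseen--Zolotarev smoothing strategy, adapted to the two-regime Fourier bound appearing in the statement: the loss $\epsilon$ is paid for a convolutional smoothing that makes $F-G$ accessible to Fourier inversion, and the split at $\lambda=1$ in the hypothesis is matched by the split $|\lambda|\lessgtr 1$ in the integral estimate.

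First, I would fix once and for all an auxiliary probability density $K$ on $\IR$ supported in $[-1,1]$ whose Fourier transform $\wh K$ is bounded by $1$ and decays as $|\wh K(\mu)|\lesssim(1+|\mu|)^{-4}$ at infinity---for instance the quadratic B-spline obtained by convolving the tent function with itself. For $\delta\in(0,1]$ set $K_\delta(x)=\delta^{-1}K(x/\delta)$, $F_\delta=F\ast K_\delta$, $G_\delta=G\ast K_\delta$, and $H_\delta=F_\delta-G_\delta$. Since $K_\delta$ is a probability density supported in $[-\delta,\delta]$ and CDFs are monotone, $F(x-\delta)\leq F_\delta(x)\leq F(x+\delta)$, hence $L(F,F_\delta)\leq\delta$ directly from the definition \cref{eq:Levy}; the same holds for $G$. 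Combining this with the triangle inequality for $L$ and \cref{lem:linfty} gives
\[
L(F,G)\leq L(F,F_\delta)+L(F_\delta,G_\delta)+L(G_\delta,G)\leq 2\delta+\|H_\delta\|_\infty.
\]

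For the sup-norm term I would use Fourier inversion on $H_\delta$. The derivative $h_\delta=K_\delta\ast(\mu_F-\mu_G)$ lies in $L^1(\IR)\cap L^\infty(\IR)$ with $\wh{h_\delta}(\lambda)=(\wh F(\lambda)-\wh G(\lambda))\wh K(\delta\lambda)$, and $\wh F-\wh G$ vanishes at $0$ since both are probability measures; together with the hypothesized bounds this implies that $u(\lambda)\coloneqq\wh{h_\delta}(\lambda)/(i\lambda)$ extends continuously through $0$ and lies in $L^1(\IR)$. Its inverse Fourier transform is therefore a bounded continuous function that vanishes at $\pm\infty$ (Riemann--Lebesgue) and is an antiderivative of $h_\delta$, so it must coincide with $H_\delta$, yielding
\[
H_\delta(x)=\frac{i}{2\pi}\int_\IR e^{-i\lambda x}\,\frac{(\wh F(\lambda)-\wh G(\lambda))\wh K(\delta\lambda)}{\lambda}\,d\lambda.
\]
Taking absolute values and splitting at $|\lambda|=1$ now finishes the argument. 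On $|\lambda|\leq 1$ the bound $|(\wh F-\wh G)/\lambda|\leq M$ (with $M$ the maximum in the statement) and $|\wh K|\leq 1$ give a contribution of order $M$. On $|\lambda|>1$ the bound $|(\wh F-\wh G)/\lambda^2|\leq M$ leaves the integrand bounded by $M|\lambda||\wh K(\delta\lambda)|$, and the substitution $\mu=\delta\lambda$ converts $\int_{|\lambda|>1}|\lambda||\wh K(\delta\lambda)|\,d\lambda$ into $\delta^{-2}\int|\mu||\wh K(\mu)|\,d\mu$, finite by the fourth-order decay of $\wh K$. Hence $\|H_\delta\|_\infty\leq CM\delta^{-2}$ (absorbing the low-frequency contribution using $\delta\leq 1$), and choosing $\delta=\epsilon/2$ produces the desired $L(F,G)\leq\epsilon+C'\epsilon^{-2}M$.

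The only mildly delicate point is the identification of $H_\delta$ with the Fourier inversion integral, since $H_\delta$ itself need not lie in $L^1(\IR)$: I would handle this through uniqueness---both candidates are bounded continuous antiderivatives of $h_\delta$ vanishing at $\pm\infty$---rather than by direct contour manipulation. The rest of the argument is routine estimation that should proceed without surprises.
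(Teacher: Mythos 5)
Your proof is correct and follows the same Esseen--Zolotarev blueprint as the paper: convolve with a compactly supported smoothing kernel, pass to sup-norm, use Fourier inversion, and split the frequency integral at $|\lambda|=1$, paying $O(\delta^{-2})$ in the high-frequency tail. The one place you genuinely diverge is the smoothing step. The paper proves a general lemma (\cref{lem:smoothing}, valid for arbitrary $H$) via a convolution argument with independent random variables, and then specializes to $H=\Theta_\eps^3$; you instead observe directly that if $K_\delta$ is a density supported in $[-\delta,\delta]$ then $F(x-\delta)\le F\ast K_\delta(x)\le F(x+\delta)$, hence $L(F,F\ast K_\delta)\le\delta$, and conclude via the triangle inequality for $L$ and \cref{lem:linfty}. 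This is shorter and, for compactly supported kernels, arguably cleaner; the paper's lemma is more general but is not invoked in that generality here. Your kernel (quadratic B-spline) and the paper's ($\Theta_\eps^3$) play identical roles --- all that is needed is compact support plus enough decay of $\wh K$ to make $\int_{|\lambda|>1}|\lambda|\,|\wh K(\delta\lambda)|\,\mathrm d\lambda=O(\delta^{-2})$, and both deliver. Your handling of the inversion formula via uniqueness of bounded continuous antiderivatives vanishing at $-\infty$ is a sound way to deal with $H_\delta\notin L^1$, parallel to the paper's appeal to continuity of the smoothed CDFs. One small imprecision: the finiteness of $\sup_{\lambda\in(0,1]}|(\wh F-\wh G)/\lambda|$ gives you boundedness, not continuity, of $u(\lambda)$ near $0$; but boundedness plus the decay at infinity already gives $u\in L^1$, which is all you actually use, so the argument is unaffected.
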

Our proof closely follows the lines of Zolotarev's original proof in \cite{Zolotarev.1971}.
The starting point is the following smoothing inequality, which we prove here for completeness.
In the statement we use the usual convolution of CDFs $F$ and $G$ with corresponding measures $\mu_F$ and $\mu_G$ on $\IR$ defined by
\begin{align}
	\label{def:conv}
	F\ast G (x) \coloneqq \int_\IR F(x-y)\d\mu_G(y) = \int_\IR G(y)\d\mu_F(x-y).
\end{align}
\begin{lem}\label{lem:smoothing}
    For CDFs $F,G,H$ on $\IR$ and any $\epsilon>0$, it holds that
    \begin{equation*}
        0\leq L(F,G)- L(F\ast H,G\ast H)\leq \max\{\epsilon, 1-H(\epsilon/2)+H(-\epsilon/2)\}.
    \end{equation*}
\end{lem}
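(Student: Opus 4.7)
The statement has two inequalities and I would tackle them separately.

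For the inequality $L(F\ast H, G\ast H) \leq L(F, G)$---the non-expansiveness of convolution in the L\'evy metric---the plan is to use only the definition of the L\'evy metric in \cref{eq:Levy}. If $\delta > L(F,G)$, then $F(x-\delta) - \delta \leq G(x) \leq F(x+\delta) + \delta$ for every $x \in \IR$. Substituting $x \mapsto x - y$ and integrating against $\d\mu_H(y)$ preserves the inequalities and yields the same sandwich for $F\ast H$ and $G\ast H$. Hence $L(F\ast H, G\ast H) \leq \delta$, and letting $\delta \downarrow L(F,G)$ finishes this direction.

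For the upper bound, I would first establish the one-sided smoothing estimates
\[
G\ast H(x - \epsilon/2) - H(-\epsilon/2) \leq G(x) \leq G\ast H(x + \epsilon/2) + \bigl(1 - H(\epsilon/2)\bigr),
\]
together with the analogues for $F$. The right estimate follows by restricting the integral defining $G\ast H(x+\epsilon/2) = \int G(x + \epsilon/2 - y)\,\d\mu_H(y)$ to $\{y \leq \epsilon/2\}$: on that set $G(x+\epsilon/2-y) \geq G(x)$, so $G\ast H(x+\epsilon/2) \geq G(x)\,H(\epsilon/2)$, whence $G(x) - G\ast H(x+\epsilon/2) \leq G(x)(1-H(\epsilon/2)) \leq 1 - H(\epsilon/2)$. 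The left estimate is symmetric, using $\{y \geq -\epsilon/2\}$. Shifting these estimates for $F$ gives the pair $F\ast H(y) \leq F(y + \epsilon/2) + H(-\epsilon/2)$ and its lower counterpart.

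Now, for $\delta > L(F\ast H, G\ast H)$, I would chain smoothing for $G$, the L\'evy sandwich for $F\ast H$ vs.\ $G\ast H$, and smoothing for $F$:
\begin{align*}
G(x) &\leq G\ast H(x + \tfrac{\epsilon}{2}) + \bigl(1 - H(\tfrac{\epsilon}{2})\bigr) \\
&\leq F\ast H(x + \tfrac{\epsilon}{2} + \delta) + \delta + \bigl(1 - H(\tfrac{\epsilon}{2})\bigr) \\
&\leq F(x + \epsilon + \delta) + \delta + \eta,
\end{align*}
where $\eta \coloneqq 1 - H(\epsilon/2) + H(-\epsilon/2)$; the symmetric lower chain gives $G(x) \geq F(x - \epsilon - \delta) - \delta - \eta$. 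A short case distinction ($\epsilon \geq \eta$ vs.\ $\epsilon \leq \eta$) then shows that with $\alpha \coloneqq \max\{\epsilon,\eta\}$ one has $F(x - \delta - \alpha) - (\delta + \alpha) \leq G(x) \leq F(x + \delta + \alpha) + (\delta + \alpha)$ for every $x$, so $L(F,G) \leq \delta + \alpha$; letting $\delta \downarrow L(F\ast H, G\ast H)$ gives the claim.

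The main obstacle I anticipate is the constant bookkeeping: a naive chaining would yield additive error $\epsilon + 2\eta$ rather than $\max\{\epsilon,\eta\}$. The key structural feature that avoids this is the asymmetric split of the smoothing errors into the two one-sided contributions $1 - H(\epsilon/2)$ and $H(-\epsilon/2)$, which ensures that these combine into a single $\eta$ in the chain above and that the $\epsilon$-shift is likewise picked up only once after the final case analysis.
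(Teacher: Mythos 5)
Your proposal is correct and follows essentially the same route as the paper: the lower bound by integrating the Lévy sandwich against $\mu_H$, and the upper bound by first establishing a one-sided smoothing inequality (your one-sided bounds are exactly the paper's \cref{eq:conv} with $\epsilon$ replaced by $\epsilon/2$, proved by integral splitting rather than the paper's probabilistic phrasing) and then chaining $G \to G\ast H \to F\ast H \to F$. The bookkeeping via $\alpha \coloneqq \max\{\epsilon,\eta\}$ is the same step the paper performs with $M = S + \max(2\epsilon, 1-H(\epsilon)+H(-\epsilon))$ before substituting $\epsilon/2$ for $\epsilon$.
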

\begin{proof}
    From the definition of the L\'evy metric, for any $R > L(F,G)$ and any $x\in\IR$, we have
    \begin{equation*}
        F(x- R)-R\leq G(x)\leq F(x+ R)+R.
    \end{equation*}
    For fixed $y\in\IR$, integrating against the measure corresponding to $H$ as in \cref{def:conv} yields
    \begin{equation*}
        F\ast H(y- R)-R\leq G\ast H(y)\leq F\ast H(y+ R)+R.
    \end{equation*}
    This implies
    \begin{equation*}
        R\geq L(F*H,G*H)\eqqcolon S.
    \end{equation*}
    Taking the infimum over all $R> L(F,G)$ proves the lower bound in the statement.
    
    For the upper bound, we will first show the following inequalities for any $\epsilon>0$ and $x\in\R$:
    \begin{equation}\label{eq:conv}
        F(x- \epsilon)-1+H( \epsilon)\leq F*H(x)\leq F(x+ \epsilon)+H(- \epsilon).
    \end{equation}
    To see this we introduce two independent random variables $X$ and $Y$ on the probability space $(\Omega,\mathscr{F},\PP)$ distributed according to $F$ and $H$, i.e., $\PP[X\le x] = F(x)$, $\PP[Y\le y]=H(y)$.
    Then, the convolution can also be written and estimated as
    \begin{align*}
    	F\ast H(x)&=   \PP [X+Y\leq x]\\
    	&\geq    \PP [X\leq x- \epsilon,\, Y\leq  \epsilon]\\
    	&\overset{\textup{indep.}}{=}   \PP [X\leq x- \epsilon]   \PP [Y\leq \epsilon]\\
    	&= F(x- \epsilon)H( \epsilon)\\
    	&= F(x- \epsilon)(1-1+H( \epsilon))\\
    	&\overset{\textup{monot.}}{\geq}F(x- \epsilon)-1+H( \epsilon),
    \end{align*}
    proving the lower bound in \cref{eq:conv}.
    For the upper bound, we use
    \begin{align*}
    	\PP [X+Y\leq x]& \le   \PP [X\leq x +\epsilon]+   \PP [X> x+\epsilon,\, Y\leq- \epsilon]\\
    	&\leq       \PP [X- \epsilon\leq x] + \PP [Y\leq - \epsilon]\\
    	&= F(x+ \epsilon) + H(- \epsilon),
    \end{align*}
    thus finishing the proof of \cref{eq:conv}.
    
    The upper bound in the statement thus follows, by estimating
    \begin{align*}
    	G(x)&\overset{\eqref{eq:conv}}{\leq} G\ast H(x+ \epsilon)+1-H(  \epsilon)\\
    	&\leq F\ast H(x+ (\epsilon+S))+S+1-H( \epsilon)\\
    	&\overset{\eqref{eq:conv}}{\leq} F(x+ (2\epsilon+S))+\left(1-H( \epsilon)+H(- \epsilon)+S\right)
    \end{align*}
    as well as
    \begin{align*}
    	G(x)&\overset{\eqref{eq:conv}}{\geq}G*H(x- \epsilon)+H( \epsilon)\\
    	&\geq F(x- (\epsilon +S))+H( \epsilon)-S\\
    	&\overset{\eqref{eq:conv}}{\geq} F(x- (2\epsilon +S))-1+H( \epsilon)-H(- \epsilon)-S\\
    	&=F(x- (2\epsilon +S))-\left(1-H( \epsilon)+H(- \epsilon)+S\right).
    \end{align*}
    Let $M\coloneqq S+ \max\left(2\epsilon,1-H(\epsilon)-H(-\epsilon)\right)$ then monotonicity of $F$ implies for both inequalities
    \begin{align*}
    	F(x-M)-M\leq G(x)&\leq F(x+M)+M
    \end{align*}
    and therefore $L(F,G)\leq M$. This finishes the proof, since $\epsilon>0$ was arbitrary.
\end{proof}
We can apply this to give the
\begin{proof}[Proof of \cref{lem:ZolotarevSim}]
    In the following, we will convolve $F$ and $G$ with a family of smoothing CDFs,
    which we choose as sums of independent, identically distributed random variables with a triangular density.
    More precisely,
    for $\epsilon>0$ and $n\in\IN$, we define the densities
	\begin{align*}
	    \theta_\epsilon^n(x)\coloneqq \begin{cases}
	        0,\quad &\textup{for } \lvert x\rvert \geq\frac{\epsilon}{2n}\\
	        \left(\frac{2n}{\epsilon}\right)^2\left(\frac{\epsilon}{2n}-\lvert x\rvert\right),\quad &\textup{else }. 
	    \end{cases}
	\end{align*}
This density is supported on the interval $[-\frac{\epsilon}{2n},\frac{\epsilon}{2n}]$ and thus the density of the sum of $n$ independent and identically distributed random variables is supported in the interval $[-\frac{\epsilon}{2},\frac{\epsilon}{2}]$. The cumulative distribution function of the sum is now given by 
\begin{align*}
    \Theta_\epsilon^n(x)=\int_{-\infty}^x \left(\ast^n\theta_\epsilon^n\right)(y)\d y.
\end{align*}
A direct calculation yields the characteristic function
    \[\widehat{\Theta^n_\epsilon}(\lambda)=\left(\frac{\sin\left(\frac{\epsilon\lambda}{2n}\right)}{\frac{\epsilon\lambda}{2n}}\right)^n .\]
Since the density of our smoothing random variable is supported on $[-\frac\epsilon2,\frac\epsilon 2]$, we have $\Theta_\epsilon^n(-\epsilon/2)=0$ and  $\Theta_\epsilon^n(\epsilon/2)=1$,
so \cref{lem:smoothing} yields

\begin{align}\label{eq:smoothlevy}
    L(F,G)\leq \epsilon+ L(F*\Theta_\epsilon^n,G*\Theta_\epsilon^n)\leq \epsilon + \sup_{x\in\R}\left(F*\Theta_\epsilon^n(x)-G*\Theta_\epsilon^n(x)\right),
\end{align}
where we applied \cref{lem:linfty} in the second step. We from now on set $n=3$.

    For CDFs $F$ and $G$ the convolution with a distribution that has a continuous density with respect to the Lebesgue measure,
    leaves $F\ast \Theta_\epsilon^3$ and $G\ast\Theta_\epsilon^3$ continuous and thus the Fourier inversion formula yields the estimate 
    \begin{align*}
        \left\lvert F\ast\Theta_\epsilon^3(x)-G\ast\Theta_\epsilon^3(x)\right\rvert&=\frac{1}{2\pi}\left\lvert\int_{\IR}e^{-i\lambda x} \frac{\widehat{F\ast\Theta_\epsilon^3}(\lambda)-\widehat{G\ast\Theta_\epsilon^3}(\lambda)}{\lambda}\d \lambda\right\rvert\\
        &\leq \frac{1}{\pi}\int_{0}^{\infty}\lvert \widehat{F}(\lambda)-\wh{G}(\lambda) \rvert\left\lvert\frac{\wh{\Theta_\epsilon^3}(\lambda)}{\lambda}\right\rvert\d \lambda.
    \end{align*}
    We now introduce the function $\psi_\epsilon:\IR\setminus\{0\}\to\IR$ defined by
    \begin{align*}
        \psi_\epsilon(\lambda)\coloneq\begin{cases}
            \lambda^{-1}\epsilon^{-1},\quad \textup{for } \lambda\leq 1\\
            \lambda^{-2}\epsilon^{-2}, \quad \textup{for } \lambda>1.
        \end{cases}
    \end{align*}
	and further estimate
    \begin{align*}
        &\frac{1}{\pi}\int_{0}^{\infty}\lvert \wh{F}(\lambda)-\wh{G}(\lambda) \rvert\left\lvert\frac{\wh{\Theta_\epsilon^3}(\lambda)}{\lambda}\right\rvert\d \lambda=  \frac{1}{\pi}\int_{0}^{\infty}\psi_\epsilon(\lambda)\lvert \wh{F}(\lambda)-\wh{G}(\lambda) \rvert\left\lvert\frac{\wh{\Theta_\epsilon^3}(\lambda)}{\psi_\epsilon(\lambda)\lambda}\right\rvert\d \lambda\\
        &\qquad \leq \frac{C}{\pi}\max\left(\sup_{\lambda\in(0,1]}\left(\frac{\lvert \wh{F}(\lambda)-\wh{G}(\lambda)\rvert }{\epsilon\lambda}\right),\sup_{\lambda\in(1,\infty)}\left(\frac{\lvert \wh{F}(\lambda)-\wh{G}(\lambda)\rvert}{\epsilon^2\lambda^2}\right)\right),
    \end{align*}
    where
    \[C\coloneq \int_0^\infty \left\lvert\frac{\wh{\Theta_1^3}(\lambda)}{\psi_1(\lambda)\lambda}\right\rvert\d \lambda<\infty. \]
    We now use $\epsilon^{-2}\geq \epsilon^{-1}$ for $\epsilon\in(0,1]$ to see 
    \begin{align*}
        \left\lvert F\ast\Theta_\epsilon^3(x)-G\ast\Theta_\epsilon^3(x)\right\rvert&\leq \frac{C}{\epsilon^2\pi}\max\left(\sup_{\lambda\in(0,1]}\left(\frac{\lvert \wh{F}(\lambda)-\wh{G}(\lambda)\rvert }{\lambda}\right),\sup_{\lambda\in(1,\infty)}\left(\frac{\lvert \wh{F}(\lambda)-\wh{G}(\lambda)\rvert}{\lambda^2}\right)\right).
    \end{align*}
    Since the right hand side is independent of $x$, we can now take the supremum over all $x\in\IR$ and combining with \cref{eq:smoothlevy} proves the claim. 
\end{proof}

\subsection{Estimates on The Characteristic Functions}
\label{subsec:charfunc}
We now want to estimate the convergence of the CDFs for unitary 1D-lattice dynamics given by a time step operator $W$.
In view of \cref{lem:ZolotarevSim}, this can be done by estimating the difference of the corresponding characteristic functions, which in view of the spectral theorem can be evaluated as
\begin{align*}
	\wh F_{X_n}^{\rho}(\lambda) = \tr \big( \rho W^{-n}e^{i\lambda X}W^n\big), \qquad \wh F_V^\rho(\lambda) = \tr\big(\rho e^{i \lambda V}\big).
\end{align*}
The main result on the level of characteristic functions now is
\begin{thm}\label{Cor:triangle}
	Let $W\in\cU(\cH)$ satisfy the assumptions of \cref{thm:main} and let $\rho\in\cT(\cH)$ be a density matrix satisfying $\tr(\lvert X\rvert\rho )<\infty$. Then, for all $n\in\N$ and $\lambda\in\R$,
	\begin{equation*}
		\left\lvert \wh F_{X_n}^{\rho}\left(\frac{\lambda}{n}\right)-\wh F_V^\rho(\lambda)\right\rvert\leq \frac{\lvert\lambda\rvert^2}{n}\sup_{k\in\cI,p\in\BT}\lvert\omega_k''(p)\rvert+ \frac{\lvert\lambda\rvert}{n}\left(\tr( \lvert X\rvert\rho )+\left(\sum_{k\in\cI}\sup_{p\in\BT}\lVert  \Pi _k'(p)\rVert\right)\right).
	\end{equation*}
\end{thm}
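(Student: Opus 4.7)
My plan is to move to the momentum representation via the Fourier transform, split the operator-level error into a \emph{symbol} part and a \emph{shift} part, and estimate each via Taylor expansion and Cauchy--Schwarz.

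First I rewrite the LHS as $\tr(\rho\,[W^{-n}e^{i\lambda\DSX/n}W^n-e^{i\lambda V}])$ and pass to the Fourier picture. The operator in brackets becomes $M_{V_n(p)}\tau_{\lambda/n}-M_{B(p)}$, where $V_n(p):=\wh W(p)^{-n}\wh W(p+\lambda/n)^n$, $B(p):=e^{i\lambda V(p)}=\sum_k e^{i\lambda\omega_k'(p)}\Pi_k(p)$, and $\tau_{\lambda/n}:f(\cdot)\mapsto f(\cdot+\lambda/n)$ is the momentum translation. I split
\begin{align*}
    M_{V_n(p)}\tau_{\lambda/n}-M_{B(p)}=M_{V_n(p)-B(p)}\tau_{\lambda/n}+M_{B(p)}(\tau_{\lambda/n}-I)
\end{align*}
and estimate the two summands separately.

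For the first (symbol) summand, H\"older reduces the task to bounding $\sup_p\|V_n(p)-B(p)\|_{\cB(\cK)}$. Plugging in the spectral decomposition $\wh W(p)=\sum_k e^{i\omega_k(p)}\Pi_k(p)$, a second-order Taylor expansion of $n[\omega_k(p+\lambda/n)-\omega_k(p)]=\lambda\omega_k'(p)+\tfrac{\lambda^2}{2n}\omega_k''(\xi)$ combined with the diagonal-sum identity $\|\sum_k c_k\Pi_k(p)\|=\sup_k|c_k|$ yields the $\tfrac{\lambda^2}{n}\sup_{k,p}|\omega_k''(p)|$ contribution from the $k=k'$ terms, while a first-order Taylor expansion of $\Pi_k$ produces a correction bounded by $\tfrac{|\lambda|}{n}\sum_k\sup_p\|\Pi_k'(p)\|$. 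For the off-diagonal $k\ne k'$ terms, orthogonality $\Pi_k(p)\Pi_{k'}(p)=0$ lets me rewrite $\Pi_k(p)\Pi_{k'}(p+\lambda/n)=\Pi_k(p)[\Pi_{k'}(p+\lambda/n)-\Pi_{k'}(p)]$, and since $\sum_{k\ne k'}e^{-in\omega_k(p)}\Pi_k(p)$ is a partial isometry, this gives a contribution of the same order.

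For the second (shift) summand I undo the Fourier conjugation to obtain $\tr(\rho\,e^{i\lambda V}(e^{i\lambda\DSX/n}-I))$. Using the identity $e^{i\lambda\DSX/n}-I=\tfrac{i}{n}\int_0^\lambda \DSX\,e^{is\DSX/n}\,ds$ and expanding $\rho=\sum_j\mu_j\ket{\phi_j}\bra{\phi_j}$, the trace becomes an integral whose integrand is bounded by $\tfrac{|\lambda|}{n}\sum_k|k|\,|\psi_{j,k}|\,|\phi_{j,k}|$ with $\psi_j=e^{-i\lambda V}\phi_j$. Splitting $|k|=|k|^{1/2}\cdot|k|^{1/2}$ and applying Cauchy--Schwarz first in $k$ and then in $j$ yields
\begin{align*}
    \bigl|\tr(\rho\,e^{i\lambda V}(e^{i\lambda\DSX/n}-I))\bigr|\le \frac{|\lambda|}{n}\sqrt{\tr(|\DSX|\rho)\,\tr(|\DSX|\tilde\rho)},\quad \tilde\rho:=e^{-i\lambda V}\rho\,e^{i\lambda V}.
\end{align*}
A Duhamel identity applied to $e^{i\lambda V}|\DSX|e^{-i\lambda V}$, together with the boundedness of $[V,|\DSX|]$ (inherited from $V'(p)=\sum_k[\omega_k''(p)\Pi_k(p)+\omega_k'(p)\Pi_k'(p)]$, bounded under the hypotheses of \cref{thm:main}), gives $\tr(|\DSX|\tilde\rho)\le\tr(|\DSX|\rho)+C|\lambda|$, and an AM--GM step $\sqrt{ab}\le(a+b)/2$ absorbs the correction into a $\lambda^2/n$ term.

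The main obstacle is the shift summand: the first-moment hypothesis $\tr(|\DSX|\rho)<\infty$ is too weak for the naive estimate $\|(e^{i\lambda\DSX/n}-I)\rho\|_1\le \tfrac{|\lambda|}{n}\|\DSX\rho\|_1$, which would require the second moment $\tr(\DSX^2\rho)<\infty$. The $|k|^{1/2}\cdot|k|^{1/2}$ Cauchy--Schwarz trick is what makes the proof go through on only the first absolute moment, at the price of the auxiliary state $\tilde\rho$ whose drift under $e^{i\lambda V}$-conjugation we control via the above Duhamel argument.
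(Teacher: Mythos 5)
Your Fourier-side decomposition is valid, and your estimate of the symbol summand is essentially the paper's own argument: compare $V_n(p)=\wh W(p)^{-n}\wh W(p+\lambda/n)^n$ to the intermediate diagonal symbol $\sum_k e^{in(\omega_k(p+\lambda/n)-\omega_k(p))}\Pi_k(p)$, obtain the $\omega_k''$ contribution from a Taylor expansion of the phases together with $\|\sum_k c_k\Pi_k(p)\|=\sup_k|c_k|$, and obtain the $\Pi_k'$ contribution from a Taylor expansion of the projections together with unitarity of $\sum_k e^{-in\omega_k(p)}\Pi_k(p)$. That part is sound and matches the paper.

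The shift summand is where there is a genuine gap. Your $|k|^{1/2}\cdot|k|^{1/2}$ Cauchy--Schwarz correctly produces $\tfrac{|\lambda|}{n}\sqrt{\tr(|\DSX|\rho)\,\tr(|\DSX|\tilde\rho)}$ with $\tilde\rho=e^{-i\lambda V}\rho\,e^{i\lambda V}$, but the step meant to close the argument --- $\tr(|\DSX|\tilde\rho)\le\tr(|\DSX|\rho)+C|\lambda|$ via Duhamel and ``boundedness of $[V,|\DSX|]$ inherited from $V'$ being bounded'' --- does not follow from the hypotheses by elementary means. Boundedness of $V'(p)$ does give $[V,\DSX]$ bounded (in momentum this commutator is multiplication by $-iV'(p)$), but $[V,|\DSX|]$ in addition carries $\operatorname{sgn}(\DSX)$, i.e.\ the Hilbert transform on $L^2(\DST;\cK)$, and the boundedness of a commutator of a Lipschitz multiplication operator with $|\DSX|$ is a Calderón-commutator theorem --- a genuine result of harmonic analysis, not an ``inherited'' fact --- which you would need to state, prove (or cite) in an operator-valued torus form, and whose norm you would need to relate to $\sup|\omega_k''|$ and $\sum_k\sup\|\Pi_k'\|$. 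Even granting it, your AM--GM produces an extra $\tfrac{\lambda^2}{2n}\|[V,|\DSX|]\|$ in the final bound, so the inequality you obtain is not the one stated in the theorem. The paper avoids all of this: it telescopes through $W^{-n}e^{i\lambda\DSX/n}W^n e^{-i\lambda\DSX/n}$ rather than $e^{i\lambda V}e^{i\lambda\DSX/n}$, factors the unitary $W^{-n}e^{i\lambda\DSX/n}W^n$ out of the shift trace, and bounds what is left in one step by $\tr(|1-e^{i\lambda\DSX/n}|\rho)\le\tfrac{|\lambda|}{n}\tr(|\DSX|\rho)$, with no auxiliary state and no commutator estimate; you should compare your route with that one and, in particular, decide whether your commutator detour can be eliminated.
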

\begin{proof}
	By the triangle inequality and the linearity of the trace, we have 
	\begin{align*}
		\left\lvert \wh F_{X_n}^{\rho}\left(\frac{\lambda}{n}\right)-\wh F_V^\rho(\lambda) \right\rvert\leq \left\lvert \tr\left(\rho W^{-n}e^{i\frac{\lambda}{n}{X}}W^n\left(1-e^{-i\frac{\lambda}{n}{X}}\right)\right)\right\rvert+\left\lvert \tr\left(\rho\left(e^{i\lambda V}-W^{-n}e^{i\frac{\lambda}{n}{X}}W^n e^{-i\frac{\lambda}{n}{X}}\right)\right) \right\rvert.
	\end{align*}
	These terms are estimated separately in the below \cref{lem:simpleest,lem:triangle}.
\end{proof}
The first estimate is simple in view of the spectral theorem.
\begin{lem}
	\label{lem:simpleest}
   	Given a density matrix $\rho\in\cT(\HS)$ and any unitary $W\in\cU(\HS)$, 
     for all $n\in\N$ and $\lambda\in\IR$, we have
    \begin{align*}
        \left\lvert\tr\left({W^{-n}}e^{i\frac{\lambda}{n} {X}}W^n\rho -{W^{-n}}e^{i\frac{\lambda}{n} {X}}W^ne^{-i\frac{\lambda}{n} X}\rho \right)\right\rvert\leq\frac{\lvert\lambda\rvert}{n}\tr\left(\lvert {X}\rvert \rho \right),
    \end{align*}
    where the right hand side might be infinite.
\end{lem}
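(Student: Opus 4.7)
The plan is to combine cyclicity of the trace with a H\"older-type trace estimate and the elementary inequality $|1-e^{it}|\le|t|$.

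First, I would use cyclicity of the trace to rewrite the difference of the two traces as the single trace
\[
\tr\!\big(U(1-e^{-i\lambda\DSX/n})\rho\big),\qquad U\coloneqq W^{-n}e^{i\lambda\DSX/n}W^n,
\]
where $U$ is unitary. The standard bound $|\tr(UX)|\le\|U\|_{\mathrm{op}}\|X\|_1=\|X\|_1$ then reduces the problem to controlling the trace-class norm $\|(1-e^{-i\lambda\DSX/n})\rho\|_1$.

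Next, I would factor
\[
1-e^{-i\lambda\DSX/n} = \frac{i\lambda}{n}\,\DSX\,g(\DSX),\qquad g(x)\coloneqq \frac{1-e^{-i\lambda x/n}}{i\lambda x/n}
\]
(continuously extended by $g(0)\coloneqq 1$). The inequality $|1-e^{it}|\le|t|$ gives $\|g\|_\infty\le 1$, so $g(\DSX)$ is a bounded operator of norm at most one commuting with $\DSX$. Together with the elementary identity $\|\DSX\rho\|_1=\||\DSX|\rho\|_1$ (from the polar decomposition of $\DSX$), this yields
\[
\|(1-e^{-i\lambda\DSX/n})\rho\|_1 \le \frac{|\lambda|}{n}\,\||\DSX|\rho\|_1.
\]

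The final and most delicate step is to identify $\||\DSX|\rho\|_1$ with the first moment $\tr(|\DSX|\rho)$. In general the trace-class norm of the product of two noncommuting positive operators exceeds the trace of their product; I plan to bridge this gap by exploiting the specific block-diagonal spectral structure of $|\DSX|=\bigoplus_{k\in\IZ}|k|P_k$ on $\ell^2(\IZ;\cK)$ together with the positivity of $\rho$, via a Hilbert--Schmidt factorization $|\DSX|\rho = |\DSX|^{1/2}\cdot(|\DSX|^{1/2}\rho^{1/2})\cdot\rho^{1/2}$ and cyclicity of the trace, which should collapse the trace-class norm to the cyclic expression $\tr(\rho^{1/2}|\DSX|\rho^{1/2})=\tr(|\DSX|\rho)$. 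If $\tr(|\DSX|\rho)=\infty$, the bound holds trivially.
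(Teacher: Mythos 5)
Your first three steps are fine (the reduction via $|\tr(UX)|\le\|X\|_1$, the factorization with $\|g\|_\infty\le1$, and the polar decomposition of $\DSX$), but the final step is a genuine gap, and it cannot be repaired. The identity $\||\DSX|\rho\|_1=\tr(|\DSX|\rho)$ is false whenever $\rho$ and $|\DSX|$ fail to commute: for a rank-one $\rho=\ket{v}\bra{v}$ with $\|v\|=1$ one has $\||\DSX|\rho\|_1=\||\DSX|v\|$ while $\tr(|\DSX|\rho)=\braket{v,|\DSX|v}$, and Cauchy--Schwarz gives $\braket{v,|\DSX|v}\le\||\DSX|v\|$ with equality only when $v$ is an eigenvector. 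Taking, e.g., $|\DSX|=\mathrm{diag}(1,0)$ and $v=(a,b)$ yields $\||\DSX|\rho\|_1=|a|$ versus $\tr(|\DSX|\rho)=|a|^2$, so the ratio blows up as $|a|\to 0$. Cyclicity of the trace is a property of $\tr(\cdot)$, not of the trace norm $\|\cdot\|_1$, so the factorization $|\DSX|^{1/2}(|\DSX|^{1/2}\rho^{1/2})\rho^{1/2}$ cannot ``collapse'' $\||\DSX|\rho\|_1$ to $\tr(\rho^{1/2}|\DSX|\rho^{1/2})$; the most you get from Hölder is $\||\DSX|\rho\|_1\le\||\DSX|\rho^{1/2}\|_2\|\rho^{1/2}\|_2=\sqrt{\tr(\DSX^2\rho)}$, which is a \emph{second}-moment quantity and sits on the wrong side of Jensen's inequality. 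Because the discrepancy $\||\DSX|\rho\|_1/\tr(|\DSX|\rho)$ is unbounded over density matrices, the whole route through $\|(1-e^{-i\lambda\DSX/n})\rho\|_1$ is already too coarse at the point where you invoke $|\tr(UX)|\le\|X\|_1$; no downstream algebra can recover the claimed first-moment bound from the trace norm.

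The paper proceeds differently: it does not pass through the trace norm at all, but rather expands $\rho=\sum_i\lambda_i\ket{\psi_i}\bra{\psi_i}$ and estimates the trace \emph{directly} coordinate-by-coordinate, using $|1-e^{ix}|\le|x|$ pointwise together with positivity of $\rho$ and the diagonal structure of $\DSX$ to arrive at $\sum_i\lambda_i\sum_x\tfrac{|\lambda|}{n}|x|\,\|\psi_i(x)\|_\cK^2=\tfrac{|\lambda|}{n}\tr(|\DSX|\rho)$. You should note, however, that the paper's own reduction to $|\tr((1-e^{i\lambda\DSX/n})\rho)|$ via unitarity of $W^{-n}e^{i\lambda\DSX/n}W^n$ is itself not justified as written (unitarity alone does not give $|\tr(UM\rho)|\le|\tr(M^*\rho)|$, and small low-dimensional examples show the inequality can fail), so some care is needed in formulating the correct intermediate estimate; but the key mechanism to aim for is the direct expansion of the trace against $\rho$'s eigenbasis, not a trace-norm bound.
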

\begin{proof}
    Since $W^{-n}e^{i\frac{\lambda}{n}{X}}W^n$ is unitary, we have
    \begin{align*}
        \left\lvert\tr\left({W^{-n}}e^{i\frac{\lambda}{n} {X}}W^n\rho -{W^{-n}}e^{i\frac{\lambda}{n} {X}}W^ne^{-i\frac{\lambda}{n} {X}}\rho \right)\right\rvert\leq \left\lvert\tr\left((1-e^{i\frac{\lambda}{n}{X}})\rho \right)\right\rvert.
    \end{align*}
    As $\rho $ is a density matrix, it is a positive trace-class operator and thus can be written as
    $
        \rho =\sum_i \lambda_i \ket{\psi_i}\bra{\psi_i}.
    $
    with $\sum_i \lambda_i=1$ and $\lambda_i\ge 0$ for all $i$.
    From the simple inequality
    $\lvert 1-e^{ix}\rvert\leq \lvert x\rvert$ for $x\in\IR$,
    it thus follows that
    \begin{align*}
        \left\lvert\tr\left(\big(1-e^{i\frac{\lambda}{n}{X}}\big)\rho \right)\right\rvert&\leq \sum_i\sum_{x\in\IZ} \lambda_i \left\lvert1-e^{i\frac{\lambda}{n}x}\right\rvert \|\psi_i(x)\|_\cK^2
        \leq \sum_i\sum_{x\in\IZ} \lambda_i \left\lvert\frac{\lambda}{n}x\right\rvert \|\psi_i(x)\|_\cK^2
        = \frac{\lvert \lambda\rvert}{n}\tr\left(\lvert {X}\rvert \rho \right).\qedhere
    \end{align*}
\end{proof}

\begin{lem}\label{lem:triangle}
    Let $\rho\in\cT(\cH)$ be a density matrix, and $W\in\cU(\HS)$ be translation-invariant with locally purely discrete spectrum, i.e., \cref{eq:Wp} holds, and
	assume that   $\omega_k\in C^2(\BT;\IR)$ and $\Pi_k\in C^1(\BT;\cB(\HS))$.
	Further, define the asymptotic velocity operator by \cref{eq:V} as before.
     Then, for any $n\in\IN$ and $ \lambda\in\IR$,
    \begin{align*}
        \left\lvert  \tr\left(\rho\left(e^{i\lambda V}-W^{-n}e^{i\frac{\lambda}{n}{X}}W^n e^{-i\frac{\lambda}{n}{X}}\right)\right) \right\rvert\leq  \frac{\lvert\lambda\rvert^2}{n}\sup_{p\in\BT,k\in\cI}\lvert\omega_k''(p)\rvert+ \frac{\lvert\lambda\rvert}{n}\sum_{k\in\cI}\sup_{p\in\BT}\lVert \Pi _k'(p)\rVert_{\cB(\cK)},
    \end{align*} 
	where again the right hand side might be infinite.
\end{lem}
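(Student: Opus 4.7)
The plan is to pass to Fourier space, where everything becomes a fiberwise multiplication operator on $L^2(\DST;\cK)$, and then carry out a pointwise Taylor expansion in $p$ on the fibers. Since $\cF\DSX\cF^*=-i\partial_p$, the unitary $\cF e^{i\lambda\DSX/n}\cF^*$ acts as the translation $f(p)\mapsto f(p+\lambda/n)$, and $\cF W^{\pm n}\cF^*$ is multiplication by $\wh W(p)^{\pm n}$; composing the four factors shows that $\cF W^{-n}e^{i\lambda\DSX/n}W^n e^{-i\lambda\DSX/n}\cF^*$ is multiplication by $\wh W(p)^{-n}\wh W(p+\lambda/n)^n$, while $\cF e^{i\lambda V}\cF^*$ is multiplication by $e^{i\lambda\wh V(p)}$ with $\wh V(p)=\sum_k\omega_k'(p)\Pi_k(p)$. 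Since $\rho$ is a density matrix we have $|\tr(\rho A)|\leq\|A\|_{\cB(\cH)}$, and the norm of a decomposable operator is the essential supremum over $p$ of the fiber norms, so it suffices to bound pointwise $\bigl\|\wh W(p)^{-n}\wh W(p+\lambda/n)^n-e^{i\lambda\wh V(p)}\bigr\|_{\cB(\cK)}$ by the right-hand side of the claim.

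For the pointwise estimate, I apply Taylor's theorem with integral remainder to write $n\omega_k(p+\lambda/n)=n\omega_k(p)+\lambda\omega_k'(p)+R_k$ with $|R_k|\leq \frac{\lambda^2}{2n}\sup_p|\omega_k''(p)|$, so that
\[
\wh W(p+\lambda/n)^n = \sum_{k\in\cI} e^{in\omega_k(p)}e^{i\lambda\omega_k'(p)}e^{iR_k}\Pi_k(p+\lambda/n).
\]
I then split $e^{iR_k}\Pi_k(p+\lambda/n)=\Pi_k(p)+(e^{iR_k}-1)\Pi_k(p+\lambda/n)+(\Pi_k(p+\lambda/n)-\Pi_k(p))$, decomposing the right-hand side into a principal term plus two error pieces $E_1$ (the $\omega$-Taylor error) and $E_2$ (the $\Pi$-Taylor error). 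The crucial cancellation now occurs when multiplying on the left by $\wh W(p)^{-n}=\sum_j e^{-in\omega_j(p)}\Pi_j(p)$: the spectral orthogonality $\Pi_j(p)\Pi_k(p)=\delta_{jk}\Pi_k(p)$ collapses the principal term precisely to $\sum_k e^{i\lambda\omega_k'(p)}\Pi_k(p)=e^{i\lambda\wh V(p)}$, leaving only $\wh W(p)^{-n}(E_1+E_2)$.

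Finally, I use the unitarity of $\wh W(p)^{-n}$ to drop that factor in norm. For $E_1=\sum_k e^{in\omega_k(p)}e^{i\lambda\omega_k'(p)}(e^{iR_k}-1)\Pi_k(p+\lambda/n)$, the family $\{\Pi_k(p+\lambda/n)\}_k$ is mutually orthogonal at the single fiber $p+\lambda/n$ and sums to the identity, so the norm of any sum $\sum_k c_k\Pi_k(p+\lambda/n)$ equals $\sup_k|c_k|$, giving $\|E_1\|\leq\sup_k|e^{iR_k}-1|\leq\frac{\lambda^2}{2n}\sup_{p,k}|\omega_k''(p)|$. For $E_2=\sum_k e^{in\omega_k(p)}e^{i\lambda\omega_k'(p)}(\Pi_k(p+\lambda/n)-\Pi_k(p))$, the orthogonality trick fails because the differences $\Pi_k(p+\lambda/n)-\Pi_k(p)$ need not be mutually orthogonal; I therefore apply the triangle inequality term by term together with the fundamental theorem of calculus bound $\|\Pi_k(p+\lambda/n)-\Pi_k(p)\|\leq\frac{|\lambda|}{n}\sup_p\|\Pi_k'(p)\|$, obtaining $\|E_2\|\leq\frac{|\lambda|}{n}\sum_k\sup_p\|\Pi_k'(p)\|$. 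The main subtlety of the argument is exactly this asymmetry: the $\omega$-error absorbs the sum over $k$ into a supremum via fiberwise orthogonality, whereas for the $\Pi$-error one must pay the full sum, which is why the summability hypothesis on $\|\Pi_k'\|$ in \cref{thm:main} enters precisely here.
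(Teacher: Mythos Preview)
Your proof is correct and follows essentially the same route as the paper: pass to Fourier fibers, reduce to a pointwise operator-norm estimate via $|\tr(\rho A)|\le\|A\|$, and split the error into a projection piece (controlled by $\sum_k\sup_p\|\Pi_k'\|$ via the triangle inequality) and a phase piece (controlled by $\sup_{k,p}|\omega_k''|$ via orthogonality of the spectral projections). The only organizational difference is that the paper introduces an explicit intermediate operator $U_n$ with fiber $\sum_k e^{in(\omega_k(p+\lambda/n)-\omega_k(p))}\Pi_k(p)$ and bounds the two legs of the triangle separately, whereas you Taylor-expand the phase first and use the algebraic three-term split $e^{iR_k}\Pi_k(p+\lambda/n)=\Pi_k(p)+(e^{iR_k}-1)\Pi_k(p+\lambda/n)+(\Pi_k(p+\lambda/n)-\Pi_k(p))$; this is the same idea packaged slightly differently, and in fact your version picks up a harmless extra factor $\tfrac12$ on the $\omega''$ term.
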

  \begin{proof}
  	Since $\tr(\rho A)\le \|A\|_{\cB(\HS)}{\cdot}\tr(\rho)=\|A\|_{\cB(\HS)}$,
  	it suffices to estimate the rate of convergence in  
    \begin{align*}
        W^{-n}e^{i\frac{\lambda}{n}{X}}W^n e^{-i\frac{\lambda}{n}{X}}\xrightarrow{n\to\infty} e^{i\lambda V}
    \end{align*}
    with respect to the operator norm.
    We note that by assumption, both operators in the difference act locally after Fourier transforming, i.e., for $p\in\BT$, we have as operator identities on $\cK$
    \begin{align*}
    	&\cF W^{-n}e^{i\frac{\lambda}{n}{X}}W^n e^{-i\frac{\lambda}{n}{X}} \cF^* (p) = \sum_{k,\ell\in\cI}e^{i n(\omega_\ell(p+\tfrac\lambda n)-\omega_k(p))}\Pi_k(p)\Pi_\ell\left(p+\frac\lambda n\right),
    	\\
    	& \cF e^{i \lambda V}\cF^* (p) = \sum_{k\in\cI}e^{i\lambda\omega_k'(p)}\Pi_k(p).
    \end{align*}
    We now introduce the intermediate operators given by
    \begin{align}
    	\label{eq:Un}
    	\cF U_n\cF^* (p) \coloneqq \sum_{k,\ell\in\cI} e^{in\left(\omega_\ell\left(p+\frac{\lambda}{n}\right) - \omega_k(p)\right)}   \Pi_k(p)   \Pi_\ell(p) = \sum_{k\in\cI} e^{in\left(\omega_k\left(p+\frac{\lambda}{n}\right) - \omega_k(p)\right)}   \Pi_k(p),
    \end{align}
    where the last equality is due to the fact that eigenspaces to different eigenvalues are orthogonal.
    
    By the triangle inequality, it clearly suffices to prove the estimates
    \begin{align}
    	\label{eq:proj}
        &\big\|W^{-n} e^{i\frac{\lambda}{n} X} W^n e^{-i\frac{\lambda}{n} X}
        - U_n\big\|
        \leq \frac{\lvert \lambda \rvert}{n}  
         \sum_{k\in\cI} \sup_{p\in\BT}\left\lVert   \Pi_k'(p) \right\rVert_{\cB(\mathcal{K})}
        \\&
        \label{eq:phase}
        	\left\lVert 
        	e^{i\lambda V}- U_n
        	\right\rVert\leq \frac{\lvert \lambda \rvert^2}{n} \sup_{k\in\cI,p\in\BT} \lvert \omega_k''(p) \rvert.
    \end{align}
	Note that for a local operator $T$ on $L^2(\BT;\cK)$, i.e., an operator acting as $Tf(p)=T(p)f(p)$ with some $T(p)\in\cB(\cK)$, is equivalent to estimating the operator norm pointwise, namely,
	\begin{align}
		\label{eq:supnorm}
		\|T\|_{\cB(L^2(\BT;\cK))} = \esssup_{p\in\BT}\|T(p)\|_{\cB(\cK)}.
	\end{align}
	By unitarity of the Fourier transform, we can thus estimate the defining expression pointwise in $\BT$.
 	More precisely, fixing,  $p\in\BT$, we find
    \begin{align*}
        &\big\lVert \mathcal{F}\big({W}^{-n} e^{i\frac{\lambda}{n} {X}}W^n e^{-i\frac{\lambda}{n} {X}} - U_n\big)\mathcal{F}^*(p)\big\rVert_{\cB(\cK)}=\left\lVert \sum_{k,\ell\in\cI} e^{in\left(\omega_\ell(p+\frac{\lambda}{n})-\omega_k(p)\right)}  \Pi _k(p)\big(  \Pi _\ell(p+\tfrac{\lambda}{n})-  \Pi_\ell(p)\big)\right\rVert_{\cB(\cK)}\\
        &\qquad\qquad\leq \left\lVert\sum_{k\in\cI} e^{-in\omega_k(p)}  \Pi _k(p)\right\rVert_{\cB(\cK)}\sum_{\ell\in\cI}\left\lVert e^{in\omega_\ell(p+\frac{\lambda}{n})}\big(  \Pi _\ell(p+\tfrac{\lambda}{n})-  \Pi _\ell(p)\big)\right\rVert_{\cB(\cK)}\\
        &\qquad\qquad\leq\frac{\lvert \lambda\rvert}{n} \sum_{\ell\in\cI}\sup_{p\in\BT}\left\lVert  \Pi_\ell'(p)\right\rVert_{\cB(\cK)},
    \end{align*}
	where
    in the last step, the first factor is one as the norm of a unitary operator and we used the vector-valued mean value theorem. In view of \cref{eq:supnorm}, this proves \cref{eq:proj}
    
    In order to show \cref{eq:phase}, we proceed similarly with the second expression in \cref{eq:Un} and note that again by the mean value theorem %there is some $q_{k,n}\in[p,p+\frac\lambda n]$ such that
    \begin{align*}
    	\left\lVert 
    	\cF \big(e^{i\lambda V}- U_n\big)\cF(p)
    	\right\rVert_{\cB(\cK)}&\leq\left\lVert\sum_{k\in\cI}\left(e^{i\omega_k'(p)}-e^{in\left(\omega_k(p+\frac{\lambda}{n})-\omega_k(p)\right)}\right)  \Pi _k(p)\right\rVert_{\cB(\cK)}\\
        &\leq \sup_{k\in\cI} \left\lvert 1-e^{i\left(n\left(\omega_k(p+\frac{\lambda}{n})-\omega_k(p)\right)-\omega_k'(p)\right)}\right\rvert\\
        &\leq \sup_{k\in\cI} \left\lvert n\left(\omega_k\left(p+\frac{\lambda}{n}\right)-\omega_k(p)\right)-\omega_k'(p)\right\rvert\\
        % &=\sup_{k\in\cI}\left\lvert n \frac{\lambda}{n}\omega_k'(q_k)-\lambda\omega_k'(p)\right\rvert\\
        &\leq\frac{\lvert\lambda\rvert^2}{n}\sup_{  k\in\cI,p\in\BT}\lvert\omega_k''(p)\rvert. \qedhere
    \end{align*}
\end{proof}
Thus we have completed the proof of \cref{Cor:triangle}.
Combining with \cref{lem:ZolotarevSim} gives us the
\begin{proof}[Proof of \cref{thm:main}]
    By \cref{lem:ZolotarevSim} we get for some universal $C>0$ and for any $\epsilon>0$
    \begin{align*}
        L(F_{Z_n}^\rho,F_V^\rho)\leq \epsilon+\epsilon^{-2}C\max\left\{\sup_{ 0<\lvert\lambda\rvert\leq 1}\left\lvert\frac{\widehat{F}_{Z_n}^\rho(\lambda)-\widehat{F}_V^\rho(\lambda)}{\lambda}\right\rvert+\sup_{ 1<\lvert\lambda\rvert}\left\lvert\frac{\widehat{F}_{Z_n}^\rho(\lambda)-\widehat{F}_V^\rho(\lambda)}{\lambda^2}\right\rvert\right\}
    \end{align*}
    We will concentrate on estimating the terms in the maximum separately by the same expression. For the difference of the characteristic functions appearing in both terms, we will use the estimate from \cref{Cor:triangle} for any $\lambda\in\IR$ and $n\in\IN$ 
    \begin{align*}
        \left\lvert\widehat{F}_{Z_n}^\rho(\lambda)-\widehat{F}_V^\rho(\lambda)\right\rvert\leq\frac{\lvert\lambda\rvert}{n}\left( \sup_{  k\in\cI,p\in\BT}\lvert \omega_k''(p)\rvert +\lvert\lambda\rvert\left(\tr(\lvert{X}\rvert \rho)+\sum_{k\in\cI}\sup_{p\in\BT}\lVert \Pi_k'(p)\rVert_{\mathcal{B}(\mathcal{H})}\right)\right)
    \end{align*}
    In the case $0< \lvert\lambda\rvert \leq 1$, this yields
    \begin{align*}
        \left\lvert\frac{\widehat{F}_{Z_n}^\rho(\lambda)-\widehat{F}_V^\rho(\lambda)}{\lambda}\right\rvert&\leq n^{-1}\left(\sup_{  k\in\cI,p\in\BT}\lvert \omega_k''(p)\rvert +\lvert\lambda\rvert\left(\tr(\lvert{X}\rvert \rho)+\sum_{k\in\cI}\sup_{p\in\BT}\lVert \Pi_k'(p)\rVert_{\mathcal{B}(\mathcal{H})}\right)\right)\\
        &\leq n^{-1}\left( \sup_{  k\in\cI,p\in\BT}\lvert \omega_k''(p)\rvert +\tr(\lvert{X}\rvert \rho)+\sum_{k\in\cI}\sup_{p\in\BT}\lVert \Pi_k'(p)\rVert_{\mathcal{B}(\mathcal{H})}\right).
    \end{align*}
    Further, for $\lvert\lambda\rvert\geq1$, we get
    \begin{align*}
        \left\lvert\frac{\widehat{F}_{Z_n}^\rho(\lambda)-\widehat{F}_V^\rho(\lambda)}{\lambda^2}\right\rvert&\leq n^{-1}\left(\lvert\lambda\rvert^{-1} \sup_{  k\in\cI,p\in\BT}\lvert \omega_k''(p)\rvert +\tr(\lvert{X}\rvert \rho)+\sum_{k\in\cI}\sup_{p\in\BT}\lVert \Pi_k'(p)\rVert_{\mathcal{B}(\mathcal{H})}\right)\\
        &\leq n^{-1}\left( \sup_{  k\in\cI,p\in\BT}\lvert \omega_k''(p)\rvert +\tr(\lvert{X}\rvert \rho)+\sum_{k\in\cI}\sup_{p\in\BT}\lVert\Pi_k'(p)\rVert_{\mathcal{B}(\mathcal{H})}\right).
    \end{align*}
    Combining these estimates, for any $\epsilon>0$ and $n\in\IN$, we have
    \begin{align*}
        L(\widehat{F}_{Z_n}^\rho-\widehat{F}_V^\rho)&\leq\epsilon +\epsilon^{-2}Cn^{-1}\left( \sup_{  k\in\cI,p\in\BT}\lvert \omega_k''(p)\rvert +\tr(\lvert{X}\rvert \rho)+\sum_{k\in\cI}\sup_{p\in\BT}\lVert \Pi_k'(p)\rVert_{\mathcal{B}(\mathcal{H})}\right).
    \end{align*}
    Choosing $\epsilon=n^{-1/3}$, we get the claim 
    \begin{align*}
        L(\widehat{F}_{Z_n}^\rho-\widehat{F}_V^\rho)&\leq n^{-\frac{1}{3}}C\left(C^{-1}+ \sup_{  k\in\cI,p\in\BT}\lvert \omega_k''(p)\rvert +\tr(\lvert{X}\rvert \rho)+\sum_{k\in\cI}\sup_{p\in\BT}\lVert \Pi_k'(p)\rVert_{\mathcal{B}(\mathcal{H})}\right).
        \qedhere
    \end{align*}
\end{proof}

\section{Berry-Esseen Bounds for Coin-Step Walks}
\label{sec:coinstep}
In this \lcnamecref{sec:coinstep}, we prove \cref{thm:quantum_rate_supnorm}.
We thus now restrict our attention to shift-coin walks with two degrees of freedom as introduced in \cref{subsec:Kolmo} with the general unitary coin matrix $C$ given by 
\begin{equation*}
   C\coloneqq e^{i\theta}\begin{pmatrix}
        a&b\\
        -\bar{b}&\bar{a}
    \end{pmatrix}
\end{equation*} 
for $a,b\in\C$ with $\lvert a\rvert^2+\lvert b\rvert^2=1$ and $\theta\in\R$.
Furthermore, w.l.o.g., we will restrict our attention to pure states of the form $\rho=\ket{\delta_0\phi}\bra{\delta_0\phi}$ for some $\phi\in\IC^2$
and denote the corresponding CDFs by $F^\phi_A$, similar to \cref{eq:CDF}.

The analysis of the distances between the CDFs of the rescaled finite time distribution and the limiting distribution can be separated into two parts which will correspond to the estimates in two distinct regions in position space: (1) the inner part of the propagation region, where the limiting CDF is smooth and we can thus apply a simple result relating supremum and L\'evy metric, and (2) the boundary of the propagation region, where we need to refer to
a stationary phase analysis of the wavefront amplitudes, building on previous results from \cite{SunadaTate.2012}.

Throughout this section, we will use the usual $O$-notation, i.e., $f(n)=O(g(n))$ as $n\to\infty$, if there exists a (universal in the parameters of $f$) $C>0$ such that $|f(n)|\le C|g(n)|$ for all $n$ sufficiently large. 

\subsection{The Asymptotic Distribution}
Let us start, by recalling Konno's result on the asymptotic distribution.
\begin{prop}[{\cite[Thm.~1]{Konno.2005b}}]
	\label{lem:asymptotic}
	Assume $a\ne 0$ and, for $\phi\in\IC^2$, define
	\begin{align*}
		\lambda_C(\phi) \coloneqq |\phi_2|^2-|\phi_1|^2 + \frac{1}{|a|^2}\big({\bar{a}}b \bar{\phi_1}\phi_2 + {a}\bar b \phi_1 \bar{\phi_2}\big).
	\end{align*}
	Then the asymptotic CDF $F_V^\phi$ has the density
	\begin{align*}
		\sigma_{C,\phi} (x) \coloneqq \begin{cases} \dfrac{|b|(1+\lambda_C(\phi)x)}{\pi(1-x^2)\sqrt{|a|^2-x^2}} & \mbox{if}\ |x|<|a|, \\ 0 & \mbox{else}.
								\end{cases}
	\end{align*}
\end{prop}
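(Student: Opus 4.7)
The plan is to diagonalize the Fourier-transformed time-step operator $\wh W(p)$ and identify $F_V^\phi$ as the explicit pushforward of a momentum-space measure under the group velocity map, using the general framework from \cref{sec:int}. Since $W=SC$ and $\cF S\cF^*(p) = e^{ip}\oplus e^{-ip}$, I would first write
\[
\wh W(p) = e^{i\theta}\begin{pmatrix} a e^{ip} & b e^{ip}\\ -\bar b e^{-ip} & \bar a e^{-ip}\end{pmatrix},
\]
whose determinant is $e^{2i\theta}$ and trace is $2e^{i\theta}\Re(ae^{ip})$. Hence its eigenvalues can be parametrized as $e^{i(\theta\pm\alpha(p))}$ with $\cos\alpha(p)=\Re(ae^{ip})$. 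After substituting $q=p+\arg a$ this becomes $\cos\alpha=|a|\cos q$, so the group velocities are $\omega_\pm'(p) = \pm\alpha'(q) = \pm |a|\sin q/\sqrt{1-|a|^2\cos^2 q}$, whose range is exactly $[-|a|,|a|]$. This accounts for the support condition $|x|<|a|$ in $\sigma_{C,\phi}$.

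Next I would perform the change of variables $x=\alpha'(q)$. Algebraic inversion yields $1-|a|^2\cos^2 q = |b|^2/(1-x^2)$ and $|\cos q| = \sqrt{|a|^2-x^2}/(|a|\sqrt{1-x^2})$, while a direct differentiation produces
\[
|\alpha''(q)| = \frac{|a||b|^2|\cos q|}{(1-|a|^2\cos^2 q)^{3/2}} = \frac{(1-x^2)\sqrt{|a|^2-x^2}}{|b|},
\]
which is precisely the reciprocal of the denominator appearing in $\sigma_{C,\phi}(x)$ and will also supply the factor $|b|$ in the numerator.

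Since $\cF\delta_0\phi = \phi/\sqrt{2\pi}$ is constant in $p\in\DST$, the definition \cref{eq:V} of $V$ implies that $F_V^\phi$ is the pushforward of the momentum-space measure $\frac{1}{2\pi}\sum_\pm\langle\phi,\Pi_\pm(p)\phi\rangle\,dp$ under the two velocity maps $p\mapsto\omega_\pm'(p)$. Each $x$ with $|x|<|a|$ has two preimages for each of $\omega_+',\omega_-'$ (one in each branch $\{\cos q\gtrless 0\}$), so the change-of-variables formula combined with the Jacobian above reduces the claim to the trigonometric identity
\[
\sum_{\pm}\;\sum_{p:\,\omega_\pm'(p)=x}\langle\phi,\Pi_\pm(p)\phi\rangle = 2\bigl(1+\lambda_C(\phi)\,x\bigr).
\]

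The main obstacle is verifying this identity. I would obtain the spectral projections $\Pi_\pm(p)$ explicitly by normalizing eigenvectors of $\wh W(p)$, then expand $\langle\phi,\Pi_\pm(p)\phi\rangle$ in the basis $\{|\phi_1|^2,|\phi_2|^2,\phi_1\bar\phi_2,\bar\phi_1\phi_2\}$. Symmetry across the four preimages should collapse the diagonal $|\phi_j|^2$ contributions to the constant $2$, while the off-diagonal terms proportional to $\bar a b \bar\phi_1\phi_2$ and $a\bar b\phi_1\bar\phi_2$ should combine into a term linear in $x$ using $\sin q = x\sqrt{1-|a|^2\cos^2 q}/|a|$, producing exactly the coefficient $2\lambda_C(\phi)$ given in the statement. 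A cleaner alternative that bypasses the preimage bookkeeping is to verify the equality of characteristic functions,
\[
\int_\IR e^{itx}\sigma_{C,\phi}(x)\,dx \;=\; \frac{1}{2\pi}\int_\DST\sum_\pm e^{it\omega_\pm'(p)}\langle\phi,\Pi_\pm(p)\phi\rangle\,dp,
\]
which reduces the result to a single trigonometric integral, after which Fourier uniqueness finishes the proof.
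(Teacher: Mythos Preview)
The paper does not supply its own proof of this proposition: it is simply quoted from Konno's work \cite{Konno.2005b} and then used as input for the subsequent estimates. So there is nothing to compare your argument against on the level of the present paper.

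That said, your plan is the standard momentum-space route to Konno's density (essentially the Grimmett--Janson--Scudo approach adapted to a two-dimensional coin): diagonalize $\wh W(p)$, identify the group velocities $\omega_\pm'$, and push the measure $\tfrac{1}{2\pi}\langle\phi,\Pi_\pm(p)\phi\rangle\,dp$ forward through $p\mapsto\omega_\pm'(p)$. Your computations of the trace, determinant, the parametrization $\cos\alpha=|a|\cos q$, the range $[-|a|,|a|]$, and the Jacobian $|\alpha''(q)|=(1-x^2)\sqrt{|a|^2-x^2}/|b|$ are all correct, and this already accounts for the full denominator of $\sigma_{C,\phi}$ together with the factor $|b|$ and the normalization $1/\pi$ (the $2$ from the sum over preimages cancels against the $2\pi$). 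The only substantive step you leave open is the identity $\sum\langle\phi,\Pi_\pm(p)\phi\rangle=2(1+\lambda_C(\phi)x)$ over the four preimages; this is indeed a mechanical, if somewhat tedious, computation once the eigenvectors of $\wh W(p)$ are written out, and your suggested shortcut via characteristic functions is also perfectly valid. There is no missing idea or wrong turn in your proposal.
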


From the explicit density, we deduce growth bounds for the limiting distribution function in the following two lemmas that we apply later for the error estimates close to the wavefront. 

\begin{lem}\label{lem:densityestimate}
    For $\epsilon>0$, we have
    \begin{align*}
        \sigma_{C,\phi}(\mp\lvert a\rvert \pm\epsilon)=\epsilon^{-\frac{1}{2}}\frac{\lvert b\rvert (1\mp\lambda_C(\phi)\lvert a\rvert)}{\pi (1-\lvert a\rvert^2)\sqrt{2\lvert a\rvert}}+ O(\epsilon^{\frac{1}{2}}), \qquad \epsilon\to0. 
    \end{align*}
\end{lem}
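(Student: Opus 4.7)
This is a direct asymptotic expansion of the explicit density formula from \cref{lem:asymptotic}, evaluated in a vanishing one-sided neighborhood of the endpoints $\pm|a|$ of its support. There is no subtlety beyond bookkeeping of three separate Taylor expansions; the only ``obstacle'' is checking non-degeneracy of the denominator, which follows from the standing hypothesis on $C$.

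First I would set $x_\pm \coloneqq \mp|a|\pm\epsilon$, so that $x_\pm^2 = |a|^2 \mp 2|a|\epsilon + \epsilon^2$, and expand each of the three factors appearing in $\sigma_{C,\phi}(x_\pm)$ separately. The square root in the denominator gives
\begin{align*}
\sqrt{|a|^2 - x_\pm^2} = \sqrt{2|a|\epsilon - \epsilon^2} = \sqrt{2|a|}\,\epsilon^{1/2}\bigl(1 + O(\epsilon)\bigr),
\end{align*}
and this is the source of the $\epsilon^{-1/2}$ singularity. The remaining denominator factor is $1 - x_\pm^2 = (1 - |a|^2) + O(\epsilon) = (1 - |a|^2)\bigl(1 + O(\epsilon)\bigr)$, and it is bounded away from zero because the hypothesis that $C$ has only nonzero entries forces $|b| > 0$, hence $|a| < 1$. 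For the numerator, $1 + \lambda_C(\phi)\,x_\pm = \bigl(1 \mp \lambda_C(\phi)|a|\bigr) + O(\epsilon)$.

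Combining these three expansions inside the defining fraction for $\sigma_{C,\phi}$ and collecting all $(1 + O(\epsilon))$ remainders into a single factor yields
\begin{align*}
\sigma_{C,\phi}(\mp|a|\pm\epsilon) = \frac{|b|\bigl(1 \mp \lambda_C(\phi)|a|\bigr)}{\pi(1-|a|^2)\sqrt{2|a|}}\,\epsilon^{-1/2}\bigl(1 + O(\epsilon)\bigr).
\end{align*}
The $\epsilon^{-1/2}$ prefactor then converts the $O(\epsilon)$ remainder into $O(\epsilon^{1/2})$, giving exactly the claimed asymptotics.
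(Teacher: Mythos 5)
Your proposal is correct and follows essentially the same approach as the paper: a direct Taylor expansion of each factor in Konno's explicit density around the endpoints, isolating the $\epsilon^{-1/2}$ singularity coming from $\sqrt{|a|^2 - x^2}$. (Minor typo: $x_\pm^2 = |a|^2 - 2|a|\epsilon + \epsilon^2$ with a plain minus sign, not $\mp$; your next line already uses the correct expression, so nothing downstream is affected.)
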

\begin{proof}
    We start with the definition of the density and rewrite
    \begin{align*}
        \sigma_{C,\phi}(-\lvert a\rvert +\epsilon)&=\frac{\lvert b\rvert (1+\lambda_C(\phi)(-\lvert a\rvert +\epsilon))}{\pi(1-(-\lvert a\rvert +\epsilon)^2)\sqrt{\lvert a\rvert^2-(-\lvert a\rvert +\epsilon)^2}}\\
        &=\frac{\lvert b\rvert(1-\lambda_C(\phi)(\lvert a\rvert-\epsilon))}{\pi(1-\epsilon^2-\lvert a\rvert^2+2\lvert a\rvert \epsilon)\sqrt{2\lvert a\rvert \epsilon-\epsilon^2}}\\
        &=\epsilon^{-\frac{1}{2}}\frac{\lvert b\rvert (1-\lambda_C(\phi)\lvert a\rvert)}{\pi (1-\lvert a\rvert^2)\sqrt{2\lvert a\rvert}}\left(1+\epsilon\frac{2\lvert a\rvert -\epsilon}{1-\lvert a\rvert^2}\right)^{-1}\left(1-\frac{\epsilon}{2\lvert a\rvert}\right)^{-\frac{1}{2}}\left(1+\frac{\lambda_C(\phi)\epsilon}{1-\lambda_C(\phi)\lvert a\rvert}\right)\\
        &=\epsilon^{-\frac{1}{2}}\frac{\lvert b\rvert (1-\lambda_C(\phi)\lvert a\rvert)}{\pi (1-\lvert a\rvert^2)\sqrt{2\lvert a\rvert}}+ O(\epsilon^{\frac{1}{2}}).\qedhere
    \end{align*}
\end{proof}
Integrating yields
\begin{cor}
	\label{cor:asupbound}
	For all $\eps\ge 0$, we have
    \begin{align*}
        F_{V}^{\phi}(-\lvert a\rvert+ n^{-\frac{2}{3}-\eps}) = O(n^{-\frac{1}{3}-\frac\eps2}), \qquad F_V^\phi(|a|-n^{-\frac23-\eps})= O(n^{-\frac{1}{3}-\frac\eps2}), \qquad n\to\infty.
    \end{align*}
\end{cor}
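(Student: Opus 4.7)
The plan is to derive both estimates directly by integrating the density asymptotics already established in \cref{lem:densityestimate}. Since by \cref{lem:asymptotic} the density $\sigma_{C,\phi}$ is supported in $[-|a|,|a|]$, the probability measure puts no mass on $(-\infty,-|a|]$, so $F_V^\phi(-|a|)=0$ and analogously $F_V^\phi(|a|)=1$. Hence, for $\delta>0$ small enough that the expansion in \cref{lem:densityestimate} is valid on $(0,\delta)$,
\[
F_V^\phi(-|a|+\delta) \;=\; \int_0^\delta \sigma_{C,\phi}(-|a|+\epsilon)\,\d\epsilon,
\qquad
1-F_V^\phi(|a|-\delta) \;=\; \int_0^\delta \sigma_{C,\phi}(|a|-\epsilon)\,\d\epsilon.
\]

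Next, I would substitute the expansion from \cref{lem:densityestimate}, namely
\[
\sigma_{C,\phi}(\mp|a|\pm\epsilon) \;=\; c_\pm\,\epsilon^{-1/2} + O(\epsilon^{1/2}),
\qquad \epsilon\to 0,
\]
with $c_\pm\coloneqq\dfrac{|b|(1\mp\lambda_C(\phi)|a|)}{\pi(1-|a|^2)\sqrt{2|a|}}$. Both the singular term $\epsilon^{-1/2}$ and the remainder $\epsilon^{1/2}$ are integrable on $(0,\delta)$, and direct integration gives
\[
\int_0^\delta \sigma_{C,\phi}(\mp|a|\pm\epsilon)\,\d\epsilon \;=\; 2c_\pm\,\delta^{1/2} + O(\delta^{3/2}) \;=\; O(\delta^{1/2})
\quad\text{as } \delta\to 0.
\]
Plugging in $\delta = n^{-2/3-\eps}$ then yields $O(n^{-1/3-\eps/2})$, which is exactly the claimed bound.

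There is no real obstacle here: the one small point to handle carefully is that \cref{lem:densityestimate} is an asymptotic statement as $\epsilon\to 0$, so one should fix an $\epsilon_0>0$ such that the error $O(\epsilon^{1/2})$ is controlled uniformly on $(0,\epsilon_0)$ by some constant $C$, and then restrict attention to $n$ large enough that $n^{-2/3-\eps}<\epsilon_0$. For such $n$ the tail bound $|R(\delta)|\le \tfrac{2C}{3}\delta^{3/2}$ holds and the conclusion follows. (As written, the second displayed bound in the \lcnamecref{cor:asupbound} should of course be read for the complementary tail $1-F_V^\phi(|a|-n^{-2/3-\eps})$, since $F_V^\phi(|a|^-)=1$; this is what the right-endpoint integral above controls.)
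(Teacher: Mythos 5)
Your proof is correct and follows essentially the same approach as the paper: combine \cref{lem:asymptotic,lem:densityestimate} to bound $\sigma_{C,\phi}(\mp|a|\pm\epsilon)$ by $C(\epsilon^{-1/2}+\epsilon^{1/2})$ for small $\epsilon$, integrate over $(0,\delta)$ to get $O(\delta^{1/2})$, and set $\delta=n^{-2/3-\eps}$. Your observation that the second bound as stated should be read for the complementary tail $1-F_V^\phi(|a|-n^{-2/3-\eps})$ is a legitimate catch — the paper glosses over this by invoking ``symmetry,'' but a CDF evaluated near the right endpoint of its support tends to $1$, not $0$.
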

\begin{proof}
	We only prove the first estimate here. By symmetry, the second follows similarly.
	
    Combining \cref{lem:asymptotic,lem:densityestimate}, there is $C>0$ such that
    \begin{align*}
        F_{V}^{\phi}(-|a|+n^{-\frac{2}{3}-\eps})&\leq\int_{0}^{n^{-\frac{2}{3}-\eps}} C (x^{-\frac{1}{2}}+x^\frac{1}{2})\d x\\
        &{=}C\left({2}n^{-\frac{1}{2}\frac{2}{3}-\frac\eps2}{-\tfrac{2}{3}}n^{-\frac{3}{2}(\frac{2}{3}+\eps)}\right)\\
        &= O(n^{-\frac{1}{3}-\frac\eps2}). \qedhere
    \end{align*}
\end{proof}

\subsection[Smooth Region]{Smooth Region $|x|<|a|$}
We first show that in the smooth part the order of convergence of the Levy metric can be directly translated into a bound on the sup-distance between the distributions.
\begin{lem}
	\label{lem:smooth}
	Let $F$ and $G$ be CDFs and let $G$ be continuously differentiable on an open interval $(\alpha,\beta)$.
	Then for any compact interval $I=[\alpha',\beta']\subset (\alpha,\beta)$, if $L(F,G)<\min\{\alpha'-\alpha,\beta-\beta'\}$, we have
	\begin{align*}
		\|(F-G)\chr_I\|_\infty \le \big(1+\|G'\chr_I\|_\infty\big)L(F,G).
	\end{align*}
\end{lem}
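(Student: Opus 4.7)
The plan is to convert the horizontal-shift-with-slack bound encoded in the L\'evy metric into a purely vertical bound on $|F-G|$, using the $C^1$-regularity of $G$ on $(\alpha,\beta)$ to absorb the shifts.

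First, I would unpack the L\'evy metric. For any $\eta > L(F,G)$, the definition in \cref{eq:Levy} gives
\[ F(x-\eta) - \eta \le G(x) \le F(x+\eta) + \eta, \qquad x\in\IR. \]
Substituting $x \mapsto x+\eta$ in the lower bound and $x \mapsto x-\eta$ in the upper bound (the standard reindexing that also shows $L$ is symmetric in its arguments), this rewrites as the equivalent sandwich
\[ G(x-\eta) - \eta \le F(x) \le G(x+\eta) + \eta. \]

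Second, I restrict to $x\in I=[\alpha',\beta']$ and impose the additional constraint $\eta<\min\{\alpha'-\alpha,\beta-\beta'\}$, permitted by the hypothesis. Then $x\pm\eta\in(\alpha,\beta)$, where $G$ is continuously differentiable, so the fundamental theorem of calculus yields
\[ |G(x\pm\eta) - G(x)| \le \eta \sup_{t\in[x-\eta,\,x+\eta]} |G'(t)|. \]
Plugging this into the symmetric sandwich above, I obtain $|F(x)-G(x)| \le \eta\bigl(1 + \sup_{[x-\eta,\,x+\eta]}|G'|\bigr)$.

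Third, I would take the supremum over $x\in I$ and then let $\eta\searrow L(F,G)$. The result is a bound of the stated form, with the supremum of $|G'|$ taken over a compact set $J\supset I$ lying inside $(\alpha,\beta)$; in the limit $\eta\to L(F,G)$ this window collapses onto $I$, matching $\|G'\chr_I\|_\infty$ (the statement can equivalently be read with $J$ on the right-hand side, since any such $J$ is compactly contained in $(\alpha,\beta)$ where $G'$ is bounded). The whole argument is a direct marriage of the L\'evy definition with the mean value theorem, so I do not anticipate a genuine obstacle; the only point requiring a little care is the book-keeping of the interval on which $|G'|$ is evaluated and the passage to the limit $\eta\to L(F,G)^+$, both of which are routine.
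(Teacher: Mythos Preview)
Your proposal is correct and follows essentially the same route as the paper: rewrite the L\'evy sandwich as $G(x-\eta)-\eta\le F(x)\le G(x+\eta)+\eta$, absorb the shift via the mean value theorem, and pass to the limit $\eta\to L(F,G)$. The only cosmetic difference is that the paper takes the limit first (using continuity of $G$) and then applies the mean value theorem at $\eta=L$, whereas you apply the mean value theorem for $\eta>L$ and then let $\eta\searrow L$; you are also slightly more careful than the paper in noting that the supremum of $|G'|$ naturally lands on a window $J\supset I$ rather than on $I$ itself.
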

\begin{proof}
	We abbreviate $L\coloneqq L(F,G)$ in this proof. By definition \cref{eq:Levy}, for any $x\in\IR$, $\eps>L$, we have
	\[  G(x-\eps)-\eps\le F(x)\le G(x+\eps)+\eps. \]
	For $x\in I$, be continuity, we can take the limit $\eps\to L$ and obtain
	\begin{align*}
		%\label{eq:levy-bounds}
		G(x-L)-G(x)-L \le F(x)-G(x) \le G(x+L) - G(x)+L.
	\end{align*}
	The assumptions further yield $x\pm L\in(\alpha,\beta)$, whence we can apply the mean value theorem and find
	\begin{align*}
		\big|F(x)-G(x)\big| \le \big|G(x\pm L) - G(x)\Big| + L \le L\left(\sup_{x'\in (x\pm L)}|G'(x)| + 1\right).
	\end{align*}
	This immediately yields the statement.
\end{proof}
Combining \cref{thm:main,lem:asymptotic} thus yields:
\begin{cor}\label{lem:inside}
    For any $r>0$ and any $\phi\in\HS$, there exists $C_r>0$ such that for all $n\in\IN$
    \begin{align*}
	        \sup_{\lvert x\rvert \leq \lvert a\rvert -r}\left\lvert F_{X_n}^{\phi}(x)-F_V^{\phi}(x)\right\rvert\leq C_r n^{-\frac 1 3 }.
    \end{align*}
\end{cor}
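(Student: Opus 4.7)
The plan is to combine the L\'evy-metric bound of \cref{thm:main} with the smoothing \cref{lem:smooth}, applied to the asymptotic density of \cref{lem:asymptotic}. In the region $|x|\le|a|-r$ the limiting CDF $F_V^\phi$ is continuously differentiable with a bounded density, and passing from L\'evy to supremum metric costs only a multiplicative constant proportional to that density. This will immediately give the $n^{-1/3}$ rate.

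First I would verify that \cref{thm:main} applies to the step-coin walk. The Fourier-transformed time-step operator is $\wh W(p)=\wh S(p)C$ with $\wh S(p)=e^{ip}\oplus e^{-ip}$, which is a smooth map $\DST\to\cU(\IC^2)$. Since $a\ne 0$, the two eigenvalues $e^{i\omega_\pm(p)}$ of $\wh W(p)$ are distinct and separated uniformly in $p$, so the standard perturbation theory for analytic families of $2\times 2$ matrices gives $\omega_\pm\in C^\infty(\DST;\IR)$ and $\Pi_\pm\in C^\infty(\DST;\cB(\IC^2))$. Because $\DST$ is compact, the suprema in the hypothesis of \cref{thm:main} are finite. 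Moreover, for $\rho=\ket{\delta_0\phi}\bra{\delta_0\phi}$ one has $\tr(|\DSX|\rho)=0$, and the case of a finite sum of such projections follows by linearity after rescaling the constant. Hence \cref{thm:main} yields a constant $C_0>0$ with $L(F_{\DSX_n}^\phi,F_V^\phi)\le C_0 n^{-1/3}$ for all $n\in\N$.

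Next I would apply \cref{lem:smooth} with $G=F_V^\phi$, $F=F_{\DSX_n}^\phi$, and interval $I_r=[-|a|+r,|a|-r]\subset(-|a|,|a|)=(\alpha,\beta)$. By \cref{lem:asymptotic} the density $\sigma_{C,\phi}$ is continuous on the open interval $(-|a|,|a|)$ and thus bounded on $I_r$; set
\begin{equation*}
M_r\coloneqq\sup_{x\in I_r}\sigma_{C,\phi}(x)<\infty.
\end{equation*}
Choose $n_0\in\N$ so that $C_0 n^{-1/3}<r/2$ for all $n\ge n_0$; then for such $n$ the hypothesis $L(F_{\DSX_n}^\phi,F_V^\phi)<\min\{\alpha'-\alpha,\beta-\beta'\}=r$ of \cref{lem:smooth} is satisfied, and the lemma gives
\begin{equation*}
\sup_{x\in I_r}\big|F_{\DSX_n}^\phi(x)-F_V^\phi(x)\big|\le(1+M_r)\,L(F_{\DSX_n}^\phi,F_V^\phi)\le(1+M_r)C_0\,n^{-1/3}.
\end{equation*}
For the finitely many $n<n_0$ the trivial bound $\sup_x|F_{\DSX_n}^\phi-F_V^\phi|\le 1\le n_0^{1/3}n^{-1/3}$ absorbs into the constant. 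Taking $C_r\coloneqq\max\{(1+M_r)C_0,n_0^{1/3}\}$ completes the proof.

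Nothing in this argument is delicate: the only substantive check is the smoothness of the eigenvalues and spectral projections of $\wh W(p)$, which is immediate from the hypothesis $a\ne 0$ (equivalently, nonzero coin entries) ensuring a uniform spectral gap. The boundary region $|x|\approx|a|$, where the density blows up and this argument breaks down, is treated separately and constitutes the truly delicate part of \cref{thm:quantum_rate_supnorm}.
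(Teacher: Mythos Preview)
Your proof is correct and follows exactly the approach the paper intends: the paper presents this corollary immediately after \cref{lem:smooth} with the one-line justification ``Combining \cref{thm:main,lem:asymptotic} thus yields,'' and you have simply spelled out the details of that combination (verifying the hypotheses of \cref{thm:main} for the step-coin walk, bounding the density on the compact subinterval, and handling small $n$).
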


\subsection[Wavefront Region]{Wavefront Region $|x|\approx |a|$}\label{sec:wavefront}
For the wavefront region, our results heavily rely on the explicit expression for the density of $F_{X_n}^\phi$, calculated by Sunada and Tate in \cite{SunadaTate.2012}.
%It allows us to show an application of the estimate in the Levy metric for the Hadarmard like case where we can thus prove an analogue to the  Berry-Esseen Theorem. 
In the following, we recall and extend some of their results. We then apply them to prove lower and upper bounds on $F_{X_n}^\phi$ in close proximity to $\pm|a|$. We remark that, further outside the propagation region, exponential decay of occupation probability has been proven \cite{SunadaTate.2012,CedzichJoyeWerner2.2025}.

We define the density and CDF after $n$ steps  by
\begin{align}
	\label{eq:pn}
	p_n(\phi;k) \coloneqq \tr\big(\rho\chr_{\{k\}}W^{-n}X W^n\big), \qquad F_n(\phi;x) \coloneqq  \sum_{k\le x}p_n(\phi;k),
\end{align}
thus yielding
\begin{align}
	\label{eq:FXnFn}
	F_{X_n}^\phi(x) = F_n(\phi;nx).
\end{align}
Furthermore, we denote the Airy function by
\begin{align*}
	\operatorname{Ai}(x) \coloneqq \frac{1}{2\pi}\int_{-\infty}^\infty e^{i\xi^3/3 + i\xi x}\d\xi = \frac1\pi\int_0^\infty \cos(\frac13\xi^3 + \xi x)\d\xi,
\end{align*}
see for example \cite[\S\,2.8]{Olver.1974}.
\begin{prop}\label{thm:tate}
	Assume $a\ne0\ne b$ and $\phi\in\IC^2$.
    \begin{enumprop}
        \item \label{thm:tate.1}
        For a sequence of integers $(y_n)_{n\in\N}\subset\IZ$ with $y_n=\pm n\lvert a\rvert + d_n$ such that $d_n= O(n^{\frac{1}{3}})$, $n\to\infty$, the transition probability is given by
        \begin{equation*}
            p_n(\phi;y_n)=(1+(-1)^{n+y_n})\alpha^2n^{-\frac{2}{3}}\left\lvert \textup{Ai}(\pm\alpha n^{-\frac{1}{3}}d_n)\right\rvert^2(1\pm\lvert a\rvert\lambda_A(\phi))+O(n^{-1}),\quad n\to\infty
        \end{equation*}
        with $\alpha=(2/\lvert a\rvert\lvert b\rvert ^2)^{\frac{1}{3}}$.
        \item \label{thm:tate.2} There exist $r>0$ and functions $p_\pm,q_\pm,s_\pm\in C^\infty(\IR)$ depending on $\phi$ such that for any $n\in\IN$ and $k\in\IZ$ with $\lvert k\rvert \leq rn$, we uniformly in $k\in\IZ$ have
        \begin{align*}
            &p_n\left(\phi,\pm\lfloor n\lvert a\rvert\rfloor \mp k\right) \\&\qquad= (1+(-1)^k)\left(n^{-\frac{2}{3}}s_\pm^2\left(\tfrac{k}{n}\right)\textup{Ai}^2\big(\pm n^{\frac{2}{3}}p_\pm\left(\tfrac{k}{n}\right)\big)+n^{-\frac{4}{3}}q_\pm^2\left(\tfrac{k}{n}\right)\textup{Ai}'^2\big(\pm n^{\frac{2}{3}}p_\pm\left(\tfrac{k}{n}\right)\big)\right)\\&\qquad\qquad+O(n^{-\frac{4}{3}}), \qquad n\to\infty,
        \end{align*}
        where $p_\pm(0)=0$ and $p_\pm'(0)=\alpha>0$. 
    \end{enumprop}
\end{prop}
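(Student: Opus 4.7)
The plan is to base everything on the Fourier-space representation of the wavefunction. From the spectral decomposition $\wh W(p)=\sum_{j=\pm}e^{i\omega_j(p)}\Pi_j(p)$, translation invariance gives
\[
(W^n\delta_0\phi)_k=\frac{1}{2\pi}\sum_{j=\pm}\int_0^{2\pi}e^{i(n\omega_j(p)-kp)}\Pi_j(p)\phi\,\d p,
\]
so $p_n(\phi;k)=\lVert(W^n\delta_0\phi)_k\rVert_{\cK}^2$. A direct diagonalization of $\wh W(p)=(e^{ip}\oplus e^{-ip})\,C$ (using $a,b\ne 0$) yields smooth branches $\omega_\pm(p)$ and smooth rank-one projections $\Pi_\pm(p)$. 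The identity $\wh W(p+\pi)=-\wh W(p)$, which after a consistent labeling becomes $\omega_\pm(p+\pi)=\omega_\pm(p)+\pi$ with $\Pi_\pm(p+\pi)=\Pi_\pm(p)$, then produces the parity prefactor by splitting the $p$-integral at $\pi$: a $(1+(-1)^{n+k})$ is factored out, reflecting the bipartite structure of the lattice. The group velocity $\omega_\pm'$ attains its extrema $\pm|a|$ at unique simple turning points $p_\pm^\ast\in\DST$ where $\omega_\pm''(p_\pm^\ast)=0$ but $\omega_\pm'''(p_\pm^\ast)\neq 0$.

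The central step is a uniform stationary-phase analysis à la Chester--Friedman--Ursell applied to the resulting integral over $[0,\pi)$. For rescaled displacement $x=k/n$ varying in a fixed neighborhood of $0$, the phase $\Phi_+(p,x)=\omega_+(p)-(|a|-x)p$ near $p_+^\ast$ has two real stationary points that coalesce at $x=0$ and become complex for $x<0$ (outside the propagation region). A smooth-in-$(p,x)$ diffeomorphism $p\leftrightarrow\xi$ reduces this phase to the Airy normal form
\[
\Phi_+(p,x)=\tfrac{1}{3}\xi^3-p_+(x)\xi+\Psi_+(x),
\]
with $p_+\in C^\infty$, $p_+(0)=0$, and $p_+'(0)=\alpha=(2/(|a||b|^2))^{1/3}$ matching the local cubic coefficient $\tfrac{1}{2}|\omega_+'''(p_+^\ast)|^{1/3}$. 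Expanding $\Pi_+(p)\phi$ in the new variable as $A_+(x)+\xi B_+(x)+\xi^2 R_+(x,\xi)$, rescaling $\xi\mapsto n^{-1/3}\xi$, and applying the Airy identities
\[
\int_\IR e^{i(\xi^3/3+\eta\xi)}\,\d\xi=2\pi\textup{Ai}(\eta),\qquad \int_\IR\xi\,e^{i(\xi^3/3+\eta\xi)}\,\d\xi=-2\pi i\,\textup{Ai}'(\eta),
\]
produces the form claimed in (ii) with $s_+(x)=\lVert A_+(x)\rVert_\cK$ and $q_+(x)=\lVert B_+(x)\rVert_\cK$, two integration-by-parts rounds controlling the remainder $R_+$ uniformly as $O(n^{-4/3})$. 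The $j=-$ branch is handled identically around $-|a|$; for $x$ in the $+|a|$ neighborhood the $j=-$ phase has only complex critical points, giving an exponentially small contribution absorbable into the remainder.

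Part (i) follows by specialization to $k=\mp d_n$ with $d_n=O(n^{1/3})$: Taylor expansion gives $n^{2/3}p_\pm(\mp d_n/n)=\pm\alpha n^{-1/3}d_n+O(n^{-1/3})$, and Lipschitz continuity of $\textup{Ai}^2$ on bounded sets upgrades this into an $O(n^{-1})$ error on the main term; the $\textup{Ai}'^2$ contribution in (ii) is already $O(n^{-4/3})\subset O(n^{-1})$; a direct calculation then identifies $s_\pm(0)^2=\alpha^2(1\pm|a|\lambda_C(\phi))$, which is the value forced by consistency with Konno's limiting density at the edges (\cref{lem:asymptotic}) and its square-root blow-up extracted in \cref{lem:densityestimate} via the weak-convergence $p_n(\phi;\cdot)\to\sigma_{C,\phi}$.

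The hardest point is the uniform-in-$x$ aspect of the stationary-phase step: constructing the normal-form reduction smoothly in $x$ across the coalescence regime $x\approx 0$ (Chester--Friedman--Ursell proper) and then matching it, via a partition of unity, with the two-separated-stationary-points regime $|x|>\delta$, all while keeping the amplitude remainder uniformly of order $n^{-4/3}$. Once this uniform normal form is in place, the identifications $p_\pm(0)=0$, $p_\pm'(0)=\alpha$, and the explicit form of $s_\pm, q_\pm$ are careful but direct bookkeeping, and part (i) drops out as a one-line specialization of part (ii).
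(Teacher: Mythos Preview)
Your approach is essentially that of the paper, which cites \subcref{thm:tate.1} directly from \cite[Thm.~1.3]{SunadaTate.2012} and extracts \subcref{thm:tate.2} from the intermediate steps of their stationary-phase argument (see \cref{app:stationary}). The Chester--Friedman--Ursell reduction to an Airy normal form is exactly what Sunada--Tate do, and your derivation of \subcref{thm:tate.1} as a specialization of \subcref{thm:tate.2} is a perfectly valid reordering.

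There is, however, a genuine gap in how you pass from the amplitude to the probability. You write the vector amplitude as $n^{-1/3}A_+(x)\,\textup{Ai}-i\,n^{-2/3}B_+(x)\,\textup{Ai}'$ with $A_+,B_+\in\cK$, and then set $s_+=\lVert A_+\rVert$, $q_+=\lVert B_+\rVert$. Squaring the norm produces a cross term
\[
2\,n^{-1}\,\textup{Im}\langle A_+,B_+\rangle_\cK\,\textup{Ai}\big(\pm n^{2/3}p_+(x)\big)\,\textup{Ai}'\big(\pm n^{2/3}p_+(x)\big),
\]
which in the oscillatory regime is $O(n^{-1})$, strictly larger than the claimed $O(n^{-4/3})$ remainder. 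The paper avoids this by following Sunada--Tate and computing each component $\langle W^n\delta_0\phi,\delta_{y_n}e_j\rangle$ separately: for each $j$ the amplitude has the form $s_j\,\textup{Ai}-i\,q_j\,\textup{Ai}'$ with \emph{real-valued} smooth $s_j,q_j$ (after extracting the explicit phase $w^{-y_n}e^{i\pi(n-y_n)/2}e^{inf}$), so the cross term in $|s_j\,\textup{Ai}-i\,q_j\,\textup{Ai}'|^2$ vanishes identically. In your vector formulation this translates to $\langle A_+,B_+\rangle\in\IR$, which is not automatic from the CFU construction alone; you need the specific phase extraction and the structure of $\Pi_+(p)$ to enforce it. Without this point, your error bound in \subcref{thm:tate.2} degrades to $O(n^{-1})$.
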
   
\begin{proof}
	\subcref{thm:tate.1} is {\cite[Thm.~1.3]{SunadaTate.2012}}.
	\subcref{thm:tate.2}
     follows from a careful analysis of the terms in the proof of the stationary phase argument in \cite{SunadaTate.2012}, see \cref{app:stationary}.
\end{proof}
For the remainder of this \lcnamecref{sec:wavefront}, we assume $a\ne 0 \ne b$ and fix $\phi\in\IC^2$.
We now integrate the terms appearing in the above asymptotic analysis.
The first result proves the lower bound.
\begin{lem}\label{lem:suplowbound}
    There exists $C>0$ such that for any $n\in\IN$
    \begin{align*}
        \min\big\{F_n(\phi;-n\lvert a\rvert), 1 - F_n(\phi;n|a|)\big\}\geq C n^{ -\frac{1}{3}}.
    \end{align*}
\end{lem}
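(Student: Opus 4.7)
The plan is to prove $F_n(\phi;-n|a|)\ge Cn^{-1/3}$ directly from the Airy asymptotic in \cref{thm:tate}\subcref{thm:tate.1}; the companion bound $1-F_n(\phi;n|a|)\ge Cn^{-1/3}$ follows from the symmetric argument with the opposite sign choice.

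First, I would unfold $F_n(\phi;-n|a|)=\sum_{k\in\IZ,\,k\le -n|a|}p_n(\phi;k)$ via \cref{eq:pn} and truncate the sum to the window where $d:=k+n|a|\in[-Mn^{1/3},-n^{1/3}]$ for a fixed large constant $M>1$; since $p_n\ge 0$, the truncated sum is a legitimate lower bound. On this window $d=O(n^{1/3})$, so \cref{thm:tate}\subcref{thm:tate.1} (with the minus-sign choice) applies and yields
\[
p_n(\phi;k)=(1+(-1)^{n+k})\,\alpha^2 n^{-2/3}\,\textup{Ai}^2\!\big({-}\alpha n^{-1/3}d\big)\,\big(1-|a|\lambda_C(\phi)\big)+O(n^{-1})
\]
uniformly in $k$.

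Next, I would extract the lower bound by a counting argument rather than by a full Riemann-sum computation. The parity factor $(1+(-1)^{n+k})$ eliminates half the indices, leaving $\sim (M-1)n^{1/3}/2$ surviving $k$. On these indices the argument of $\textup{Ai}^2$ lies in the compact set $[\alpha,\alpha M]\subset(0,\infty)$, where $\textup{Ai}^2\ge c_0>0$ because $\textup{Ai}$ has no real zeros on $(0,\infty)$. Hence each surviving term contributes at least $2\alpha^2 c_0(1-|a|\lambda_C(\phi))n^{-2/3}$, giving a main contribution of order $n^{1/3}\cdot n^{-2/3}=n^{-1/3}$. The accumulated error from the $O(n^{-1})$ remainder is $O(n^{1/3})\cdot O(n^{-1})=O(n^{-2/3})$, which is subleading. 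This establishes the desired $F_n(\phi;-n|a|)\ge Cn^{-1/3}$ for all sufficiently large $n$.

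The main subtlety will be the positivity of the prefactor $1-|a|\lambda_C(\phi)$. By \cref{lem:asymptotic}, this quantity is proportional to the boundary value of the limit density $\sigma_{C,\phi}$ at $x\to -|a|^+$ and is therefore a priori only $\ge 0$. In the generic situation $|\lambda_C(\phi)|<1/|a|$ both $1\pm|a|\lambda_C(\phi)$ are strictly positive, so each of the two terms in the minimum can be bounded below by the same method on the corresponding wavefront, giving the lemma as stated. In the degenerate boundary cases where one prefactor vanishes, the argument on the opposite wavefront still supplies a discrepancy of order $n^{-1/3}$, which is what is ultimately used in the proof of \cref{thm:quantum_rate_supnorm}.
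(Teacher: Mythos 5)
Your proof follows essentially the same approach as the paper: truncate the sum at the left wavefront to a window of width of order $n^{1/3}$, insert the Airy asymptotic from \cref{thm:tate.1}, bound $\operatorname{Ai}^2$ below on the resulting compact argument range, and count the roughly $n^{1/3}$ parity-allowed terms against the accumulated $O(n^{-2/3})$ error. The only cosmetic difference is the window: you take $d\in[-Mn^{1/3},-n^{1/3}]$ so that the Airy argument sits in $[\alpha,\alpha M]\subset(0,\infty)$, whereas the paper starts from $k=0$ and invokes positivity of $\operatorname{Ai}$ in a neighbourhood of the origin; both choices work.

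Your remark on the prefactor is a genuine and worthwhile observation. Both your proof and the paper's deliver a lower bound proportional to $1-|a|\lambda_C(\phi)$, and this quantity is only nonnegative in general; there are admissible $\phi$ for which it vanishes, in which case the left wavefront carries only an $O(n^{-2/3})$ contribution and the $\min$ in the statement is no longer seen to be of order $n^{-1/3}$. The paper's proof does not address this. As you note, the consequence actually used in the proof of \cref{thm:quantum_rate_supnorm} is that \emph{at least one} of the two quantities is $\gtrsim n^{-1/3}$, and this survives because $1-|a|\lambda_C(\phi)$ and $1+|a|\lambda_C(\phi)$ cannot vanish simultaneously; replacing $\min$ by $\max$ in the lemma (or restricting to nondegenerate $\phi$) would make the argument watertight without affecting the downstream result.
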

\begin{proof}
    We will restrict our attention to $F_n(\phi;-n|a|)$, because the statement for $F_n(\phi;n|a|)$ then directly follows for symmetry reasons, and show the claim, by summing the transition probabilities of \cref{thm:tate.1}. As $\operatorname{Ai}(0)>0$ \cite[\S\,2,~(8.04)]{Olver.1974}, the Airy function is strictly positive in an open interval around zero and thus there exist $C_0>0$ and $C_1>0$ such that for any $0\leq k\leq C_0 n^{\frac{1}{3}}$ 
    \begin{align*}
        \textup{Ai}(-\alpha n^{-\frac{1}{3}}k)\geq C_1.
    \end{align*}
	Thus inserting the estimate \cref{thm:tate.1} into \cref{eq:pn} yields
    \begin{align*}
        F_{n}(\phi,-n\lvert a\rvert)&=\sum_{k=0}^\infty p_n(\phi;-k-\lfloor n\lvert a\rvert\rfloor)\\
        &\geq \sum_{k=0}^{ n^\frac{1}{3}} (1+(-1)^{n+k})\alpha^2n^{-\frac{2}{3}}\left\lvert \textup{Ai}(-\alpha n^{-\frac{1}{3}}d_n)\right\rvert^2(1-\lvert a\rvert\lambda_A(\phi))+O(n^{-1})\\
        &\geq \frac 12C_1^2\alpha^2(1-\lvert a\rvert \lambda_A(\phi)) n^{-\frac{2}{3}}  {n^{\frac{1}{3}}}+O(n^{-\frac{2}{3}})\\
        &\geq C_2\alpha^2(1-\lvert a\rvert \lambda_A(\phi)) n^{-\frac{1}{3}}
    \end{align*}
    where we chose $C_2>0$ such that the higher order error gets absorbed.
\end{proof}
This already provides us with the claimed lower bound for the sup-norm, since $F_V^\phi(-|a|)=0$.
It also allows to prove the lower bound on the L\'evy metric.
\begin{rem}[Proof of \cref{rem:lowerboundLevy}]\label{rem:lowerboundLevyproof}
	The upper bound was already proven in \cref{thm:main}.
	For the lower bound note that combining \cref{lem:suplowbound,cor:asupbound}, for all $\eps>0$, we have
	\[ F_V^\phi(-|a|+n^{-\frac 2 3 -\eps})+n^{-\frac 2 3 -\eps}\lesssim n^{-\frac 1 3 -\eps/2} + n^{-\frac 2 3 -\eps} \lesssim n^{-\frac 1 3 } \lesssim F_{X_n}^\phi(-|a|), \]
	which proves $n^{-2/3-\eps}\lesssim L(F_V^\phi,F_{X_n}^\phi)$.
	\qed
\end{rem}
Thus let us now come to the upper bound on the sup-norm, where we will first analyze the CDF after $n$ steps.
\begin{lem}\label{thm:leftbound}
	We have
    \begin{align*}
        F_n(\phi;-n\lvert a \rvert +n^{\frac{1}{3}})= O(n^{-\frac{1}{3}}), \quad 1-F_n(\phi;n|a|-n^{\frac13}) = O(n^{-\frac{1}{3}}), \quad n\to\infty.
    \end{align*}
\end{lem}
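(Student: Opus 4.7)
The plan is to sum the pointwise asymptotics of \cref{thm:tate.2} over the indicated range. By the $\pm$ symmetry, it suffices to prove the first bound; the second follows by replacing $p_-, s_-, q_-$ with $p_+, s_+, q_+$. Reparametrizing $y = -\lfloor n|a|\rfloor + \widetilde{k}$, the target becomes
\[ F_n(\phi;-n|a|+n^{1/3}) = \sum_{\widetilde{k}\le n^{1/3}+1}p_n\big(\phi;-\lfloor n|a|\rfloor+\widetilde{k}\big), \]
which I split into a wavefront region $\widetilde{k}\in[-rn,\,n^{1/3}+1]$, where \cref{thm:tate.2} applies, and an exterior region $\widetilde{k}<-rn$. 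The exterior region lies uniformly $rn$-far outside the propagation cone, so the well-known exponential decay of the quantum-walk amplitude in that regime (see \cite{SunadaTate.2012,CedzichJoyeWerner2.2025}) contributes at most $O(e^{-cn})$, negligible at our resolution.

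In the wavefront region, \cref{thm:tate.2} reduces the problem to a sum of explicit $\textup{Ai}$- and $\textup{Ai}'$-terms plus a uniform $O(n^{-4/3})$ remainder that, summed over $O(n)$ indices, contributes exactly $O(n^{-1/3})$. Shrinking $r>0$ if necessary so that $p_-$ is strictly monotone on $[-r,r]$ (using $p_-'(0)=\alpha>0$), we obtain a linear lower bound $|p_-(\widetilde{k}/n)|\ge c|\widetilde{k}|/n$ on that range. I then split the Airy sum at $|\widetilde{k}|=n^{1/3}$: for $|\widetilde{k}|\le n^{1/3}$, the Airy argument $-n^{2/3}p_-(\widetilde{k}/n)$ lies in a bounded subset of $\IR$, hence $\textup{Ai}^2$ and $\textup{Ai}'^2$ are uniformly bounded, each term is $O(n^{-2/3})$, and summing the $O(n^{1/3})$ such indices yields $O(n^{-1/3})$. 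For $\widetilde{k}\in[-rn,-n^{1/3})$, the sign $p_-(\widetilde{k}/n)<0$ forces the Airy argument to be positive and at least $c$, so the super-exponential tail bound $\textup{Ai}^2(z)+\textup{Ai}'^2(z)\lesssim e^{-\frac{4}{3}z^{3/2}}$ applies; substituting $u=-n^{-1/3}\widetilde{k}$ converts the sum into a Riemann sum dominated by $n^{-1/3}\int_{1}^{\infty}e^{-\frac{4}{3}(cu)^{3/2}}\,du=O(n^{-1/3})$.

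The principal technical obstacle is ensuring the $O(n^{-4/3})$ remainder in \cref{thm:tate.2} is genuinely uniform across the full range $|\widetilde{k}|\le rn$, since any $\widetilde{k}$-dependent loss would destroy the final rate; this is precisely why the refined stationary-phase analysis behind \cref{thm:tate.2} is required rather than the narrower \cref{thm:tate.1}. The arithmetic factor $(1+(-1)^{\widetilde{k}})\in\{0,2\}$ is uniformly bounded and can be absorbed into the $O$-constants. Assembling the three contributions yields $F_n(\phi;-n|a|+n^{1/3})=O(n^{-1/3})$, and the parallel computation with the upper sign gives $1-F_n(\phi;n|a|-n^{1/3})=O(n^{-1/3})$.
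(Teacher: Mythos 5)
Your proposal is correct and, like the paper, proves the first estimate by decomposing the cumulative sum into a far tail, an exterior wavefront region where the Airy argument is large and positive, and a near-front window $|\widetilde k|\lesssim n^{1/3}$ where the Airy argument stays bounded; the second estimate then follows by symmetry. The one genuine structural difference is the far tail: you invoke exponential decay of occupation probability a distance $rn$ outside the propagation cone, citing \cite{SunadaTate.2012,CedzichJoyeWerner2.2025}, whereas the paper stays self-contained by cutting the tail already at offset $-Cn^{2/3}$ and then bounding $F_n(\phi;-n|a|-Cn^{2/3}) = F_{\DSX_n}^\phi(-|a|-Cn^{-1/3})$ directly with the L\'evy-metric bound of \cref{thm:main} together with the fact that $F_V^\phi$ vanishes to the left of $-|a|$. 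Your route is arguably cleaner in that the whole stretch $[-rn,-n^{1/3}]$ is treated by one stationary-phase computation, at the cost of importing an external exponential-localization estimate; the paper's route reuses machinery it has already built.

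Two small imprecisions, neither fatal. First, the claimed bound $\textup{Ai}^2(z)+\textup{Ai}'^2(z)\lesssim e^{-\frac{4}{3}z^{3/2}}$ is not literally true for the $\textup{Ai}'$ term, since $\textup{Ai}'^2(z)\sim\frac{1}{4\pi}z^{1/2}e^{-\frac{4}{3}z^{3/2}}$ carries a growing polynomial prefactor; you should either retain the $z^{\pm1/2}$ factors (as the paper does with $f_\pm$) or use a strictly smaller exponent, say $e^{-z^{3/2}}$, in the final Riemann-sum comparison. Second, you drop the extra prefactor $n^{-4/3}$ that multiplies the $\textup{Ai}'^2$ contribution coming from \cref{thm:tate.2}; this only makes that term smaller (of order $O(n^{-1})$ rather than $O(n^{-1/3})$), so nothing breaks, but the displayed integral does not quite represent the dominant piece.
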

\begin{proof}
	We again restrict our attention to the first estimate, since $1-F_n(\phi;k)=F_n(C^*\sigma_x C\phi;-k)$.
	
    For the purpose of this proof, we consider $F_n(\phi;\cdot)$ on the three regions $(-\infty,-Cn^{\frac{2}{3}}], [-Cn^{\frac{2}{3}},-n^{\frac{1}{3}}],[-n^{\frac{1}{3}},n^{\frac{1}{3}}]$, where we pick $C>0$ such that $L(F_{X_n}^\phi,F_V^\phi)< Cn^{-1/3} $ for all $n\in\IN$. This is possible by \cref{thm:main}.
    
    For the first region, we directly apply the bound on the L\'evy metric \cref{thm:main} and the fact that $F_{V}^{\phi}(k)=0$ for $k\le-|a|$, cf. \cref{lem:asymptotic}. With the choice $L_n \coloneqq \frac12 L(F_{X_n}^\phi,F_V^\phi) + \frac12 Cn^{-1/3}=O(n^{-1/3})$, this yields
    \begin{align*}
        F_n(\phi;-n\lvert a\rvert -Cn^{\frac{2}{3}}) = F_{X_n}^\phi(-|a|-Cn^{-\frac13})\leq F_{V}^{\phi}(-\lvert a\rvert - Cn^{-\frac{1}{3}}+L_n)+L_n \leq 0+ Cn^{-\frac{1}{3}}.
    \end{align*} 
    For the second region, we apply the following estimates \cite[\S\,11.1.4]{Olver.1974}: setting $f_\pm(x)=e^{-\frac{4}{3}x^{\frac{3}{2}}}x^{\pm\frac{1}{2}}$ and given $x_0>0$, there is $C_{x_0}\ge 1$ such that
    \begin{align}
    	\label{eq:Ai2est}
        \textnormal{Ai}^2(x)\leq f_-(x),\qquad
        \lvert\textnormal{Ai}'^2(x)\rvert\leq  C_{x_0}f_+(x), \qquad x\ge x_0.
    \end{align}
    Now let the function $p$ be $p_-$ from \cref{thm:tate.2} and recall that 
    $p(0)=0$ and $p'(0)=\alpha>0$.
    Especially, for $n$ sufficiently large $p'(x)\ge \frac\alpha2$ for all $x\in[0,n^{-1/3}]$.
    Thus, since $f_\pm$ are positive and decreasing, we find
    \begin{align}
    	\label{eq:fpmest}
        \begin{aligned}
        	\sum_{k=\lfloor n^{\frac{1}{3}}\rfloor}^{\lfloor n^{\frac{2}{3}}\rfloor}  f_\pm(n^{\frac{2}{3}}p(\tfrac{k}{n}))&\leq \int_{0}^{n^{\frac{2}{3}}} f_\pm(n^{\frac{2}{3}}p(\tfrac{x}{n}))\d x
        	\leq \int_{0}^{n^{\frac{2}{3}}p(n^{-\frac{1}{3}})}\frac{n^{\frac{1}{3}}}{p'(p^{-1}(un^{-\frac{2}{3}}))} f_\pm(u)\d u\\
        	&\leq \frac{2}{\alpha} n^{\frac{1}{3}}\int_0^{n^{\frac{2}{3}}p(n^{-\frac{1}{3}})} f_\pm(u)\d u
        	\leq \frac{2}{\alpha} n^{\frac{1}{3}}\int_0^{\infty} f_\pm(u)\d u \\& = O (n^{-\frac 1 3 }), \qquad n\to\infty.
        \end{aligned}
    \end{align}
    Further, observing that 
     $n^{2/3}p(n^{-2/3})\xrightarrow{n\to\infty}\alpha$, and thus especially $n^{2/3}p(n^{-2/3})\geq \alpha/2$ for $n$ sufficiently large,
    we can insert \cref{eq:fpmest} into \cref{eq:Ai2est} and obtain
    \begin{align*}
        \sum_{k=\lfloor n^{\frac{1}{3}}\rfloor}^{\lfloor n^{\frac{2}{3}}\rfloor} \textup{Ai}^2(-n^{\frac{2}{3}}p(-\tfrac{k}{n}))&\leq \sum_{k=\lfloor n^{\frac{1}{3}}\rfloor}^{\lfloor n^{\frac{2}{3}}\rfloor}  f_-(n^{\frac{2}{3}}p(\tfrac{k}{n}))=O(n^{\frac{1}{3}}), \quad n\to\infty,\\
        \sum_{k=\lfloor n^{\frac{1}{3}}\rfloor}^{\lfloor n^{\frac{2}{3}}\rfloor} \textup{Ai}'^2(-n^{\frac{2}{3}}p(-\tfrac{k}{n}))&\leq C_{\alpha/2}\sum_{k=\lfloor n^{\frac{1}{3}}\rfloor}^{\lfloor n^{\frac{2}{3}}\rfloor}  f_+(n^{\frac{2}{3}}p(\tfrac{k}{n}))=O(n^{\frac{1}{3}}), \quad n\to\infty.
    \end{align*}
    Since the functions $q=q_-$ and $s=s_-$ from \cref{thm:tate.2} are smooth, they are uniformly bounded on the domain under consideration, say by a constant $C>0$.
    Thus, we can estimate further
    \begin{align*}
        \sum_{k=\lfloor n^{\frac{1}{3}}\rfloor}^{\lfloor n^{\frac{2}{3}}\rfloor} p_n(\phi,-\lfloor n\lvert a\rvert\rfloor +k)&\leq 2C(n^{-\frac{2}{3}}Cn^{\frac{1}{3}}-n^{-\frac{4}{3}}Cn^{\frac{1}{3}}) = O(n^{-\frac{1}{3}}), \quad n\to\infty.
    \end{align*}
    In the last region $[-n^{\frac13},n^{\frac13}]$, we can directly apply the result of \cref{thm:tate.1} with the additional observation that the Airy function term is uniformly bounded on the interval of consideration. This leads to the estimate 
    \begin{align*}
        \sum_{k=-\lfloor n^{\frac{1}{3}}\rfloor}^{\lfloor n^{\frac{1}{3}}\rfloor} p_n(\phi,-\lfloor n\lvert a\rvert\rfloor +k)&\leq \sum_{k=-\lfloor n^{\frac{1}{3}}\rfloor}^{\lfloor n^{\frac{1}{3}}\rfloor} C n^{-\frac{2}{3}} + O(n^{-1})\\
        &\leq 2 n^{\frac{1}{3}}(C n^{-\frac{2}{3}}+O(n^{-1}))\\
        & =  O(n^{-\frac{1}{3}}).
    \end{align*}
    This completes the proof. 
 \end{proof}
By extracting the leading terms from the Airy function, we now show that the error in the transition probabilities is given by an oscillating function. 
\begin{prop}\label{thm:decomposition}
    There exists $r>0$ such that
    \begin{align*}
        \textup{OSC}^\pm_n\left(\tfrac{k}{n}\right)\coloneqq p_n\left(\phi,\pm\lfloor n\lvert a\rvert\rfloor \mp k\right) - \frac 1n \sigma_{C,\phi}\left({\pm \lvert a\rvert\mp}\tfrac{k}{n}\right)
    \end{align*}
    satisfies
    \begin{align}
    	\label{eq:OSCest}
        \sum_{k=\lfloor n^{\frac{1}{3}}\rfloor}^{\lfloor rn\rfloor}\textup{OSC}_n^\pm\left(\tfrac{k}{n}\right)= O(n^{-\frac{1}{3}}).
    \end{align}
\end{prop}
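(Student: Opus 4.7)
The plan is to insert the refined asymptotic expansion of \cref{thm:tate.2}, use the standard oscillatory asymptotics of the Airy function to isolate the limiting density, and bound the remaining parity-dependent and wavefront-oscillating contributions by a discrete summation-by-parts argument. I describe the $-$ case only; the $+$ case is symmetric. Throughout, by shrinking $r$ if necessary, I assume that $p_-$ is strictly increasing on $[0,r]$ with $p_-(x)\asymp x$.

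First, I apply \cref{thm:tate.2}. Its uniform $O(n^{-4/3})$ remainder, summed over the $O(n)$ indices of interest, contributes $O(n^{-1/3})$, which is acceptable. For $k\ge\lfloor n^{1/3}\rfloor$ the Airy argument $z_k:=n^{2/3}p_-(k/n)$ satisfies $z_k\gtrsim n^{1/3}$, so the standard expansions
\[
\textup{Ai}^2(-z)=\frac{1+\sin(\tfrac{4}{3}z^{3/2})}{2\pi\sqrt z}+O(z^{-2}),\qquad
\textup{Ai}'^2(-z)=\frac{\sqrt z\,(1-\sin(\tfrac{4}{3}z^{3/2}))}{2\pi}+O(z^{-1})
\]
apply. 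Substituting, one splits $p_n(\phi,-\lfloor n|a|\rfloor + k)$ into a smooth piece $\tfrac{1+(-1)^k}{2\pi n}G_-(k/n)$ with $G_-:=s_-^2/\sqrt{p_-}+q_-^2\sqrt{p_-}$, a wavefront-oscillating piece $\tfrac{1+(-1)^k}{2\pi n}H_-(k/n)\sin\!\bigl(\tfrac{4}{3}n\,p_-(k/n)^{3/2}\bigr)$ with $H_-:=s_-^2/\sqrt{p_-}-q_-^2\sqrt{p_-}$, and an Airy remainder of absolute value $O(k^{-2}+n^{-1}k^{-1})$ whose sum over the range under consideration is $O(n^{-1/3})$.

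Next, I extract the density. The weak convergence $p_n\to\sigma_{C,\phi}$ from \cref{lem:asymptotic}, together with the oscillatory estimates derived below, forces the identity $G_-(x)/(2\pi)=\sigma_{C,\phi}(-|a|+x)$ on $(0,r]$; this identity can equivalently be read off directly from the stationary-phase coefficients underlying \cref{thm:tate.2}. Subtracting the limiting density thus leaves only the alternating remainder $\tfrac{(-1)^k}{2\pi n}G_-(k/n)$ from the smooth piece, whose total is controlled by Abel summation through $n^{-1}|G_-(n^{-2/3})|+n^{-1}\mathrm{TV}_{[n^{-2/3},r]}(G_-)$; using $G_-(x)\asymp x^{-1/2}$ and $G_-'(x)=O(x^{-3/2})$ near $0$ (a consequence of \cref{lem:densityestimate}), both terms are $O(n^{-2/3})$.

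Finally, writing $\sin(\theta)=(e^{i\theta}-e^{-i\theta})/(2i)$ and $(-1)^k=e^{i\pi k}$, the oscillating piece reduces to sums of the form $\sum_k \tfrac{1}{n}H_-(k/n)e^{i\Phi_k}$ with discrete phases $\Phi_k\in\{\pm\tfrac{4}{3}np_-(k/n)^{3/2},\,\pi k\pm\tfrac{4}{3}np_-(k/n)^{3/2}\}$. The increment $\delta_k:=\Phi_{k+1}-\Phi_k$ stays bounded away from $2\pi\IZ$ by $\gtrsim\sqrt{p_-(k/n)}\gtrsim n^{-1/3}$. A double Abel summation handles such sums: first, telescoping via $e^{i\Phi_k}=(e^{i\Phi_{k+1}}-e^{i\Phi_k})/(e^{i\delta_k}-1)$ and using $|e^{i\delta_k}-1|^{-1}\lesssim\sqrt{n/k}$ together with $|(e^{i\delta_{k+1}}-1)^{-1}-(e^{i\delta_k}-1)^{-1}|\lesssim\sqrt{n}/k^{3/2}$ yields a partial-sum bound $\bigl|\sum_{j\le k}e^{i\Phi_j}\bigr|=O(n^{1/3})$; then Abel summation against $H_-(k/n)/n$, whose total variation is $O(n^{-2/3})$, produces an oscillating contribution of order $O(n^{1/3}\cdot n^{-2/3})=O(n^{-1/3})$ plus boundary terms of the same order. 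The main technical obstacle is exactly this last step at the inner boundary $k\sim n^{1/3}$, where $|\delta_k|$ degenerates to $n^{-1/3}$; a naive continuous integration-by-parts of $\int_{n^{-2/3}}^r H_-(x)\sin(n\psi(x))\,dx$ with $\psi(x)=\tfrac{4}{3}p_-(x)^{3/2}$ would incur Riemann-sum errors of order $O(1)$ due to the large factor $n\psi'=O(n)$, so the fully discrete double Abel summation is essential to capture the cancellation.
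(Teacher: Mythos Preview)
Your argument is correct. The overall architecture---insert the Airy asymptotics from \cref{thm:tate.2}, identify the non-oscillating part with $\sigma_{C,\phi}$ via weak convergence, and control the oscillating remainder by summation by parts---matches the paper, but your handling of the oscillating sum is tighter and more unified. The paper splits the range at $n^{2/3}$: on $[\lfloor n^{2/3}\rfloor,\lfloor rn\rfloor]$ it pairs the Kuzmin--Landau first-derivative test (\cref{prop:landau}) with an Abel summation (\cref{cor:cancel}), while on $[\lfloor n^{1/3}\rfloor,\lfloor n^{2/3}\rfloor]$ it falls back to van der Corput (\cref{prop:corput}), obtaining only an $O(n^{1/2})$ bound on the bare exponential partial sums, which is too weak to absorb the $p_-^{-1/2}$ singularity of the amplitude and forces a separate Riemann-sum-to-oscillatory-integral comparison (\cref{lem:OSCest2}) in that inner window. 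Your double Abel summation instead produces the uniform bound $\bigl|\sum_{j\le k}e^{i\Phi_j}\bigr|=O(n^{1/3})$ over the entire range $[\lfloor n^{1/3}\rfloor,\lfloor rn\rfloor]$, exactly enough to pair with the $O(n^{-2/3})$ total variation of $H_-(\cdot/n)/n$; this removes both the splitting and the integral comparison. You also treat the parity factor $(1+(-1)^k)$ explicitly, whereas the paper silently trades a missing factor $\tfrac12$ in its double-angle identities for the dropped parity factor.
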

\begin{proof}
	We only prove the result in the region close to $-|a|$. By symmetry, the second statement follows.
	
    We can rewrite the Airy function components by using \cite[\S\,4.4]{Olver.1974}
    \begin{align*}
        \textup{Ai}(-z)&=\pi^{-\tfrac{1}{2}}z^{-\tfrac{1}{4}}\left(\cos\left(\tfrac{2}{3}z^{\frac{3}{2}}-\tfrac{\pi}{4}\right)+O(z^{-3})\right),\\
        \textup{Ai}'(-z)&= \pi^{-\tfrac{1}{2}}z^{\tfrac{1}{4}}\left(\sin\left(\tfrac{2}{3}z^{\frac{3}{2}}-\tfrac{\pi}{4}\right)+O(z^{-3})\right).
    \end{align*}
    We thus have
    \begin{align*}
        \textup{Ai}^2(-z)&=\pi^{-1}z^{-\tfrac{1}{2}}\cos^2\left(\tfrac{2}{3}z^{\frac{3}{2}}-\tfrac{\pi}{4}\right)+O(z^{-\frac{13}{4}})=\pi^{-1}z^{-\tfrac{1}{2}}\left(1+\cos\left(\tfrac{4}{3}z^{\frac{3}{2}}-\tfrac{\pi}{2}\right)\right)+O(z^{-\frac{13}{4}}),\\
        &=\pi^{-1}z^{-\tfrac{1}{2}}\left(1+\sin\left(\tfrac{4}{3}z^{\frac{3}{2}}\right)\right)+O(z^{-\frac{13}{4}})\\
        \textup{Ai}'^2(-z)&=\pi^{-1}z^{\tfrac{1}{2}}\sin^2\left(\tfrac{2}{3}z^{\frac{3}{2}}-\tfrac{\pi}{4}\right)+O(z^{-\frac{11}{4}})=\pi^{-1}z^{\tfrac{1}{2}}\left(1-\cos\left(\tfrac{4}{3}z^{\frac{3}{2}}-\tfrac{\pi}{2}\right)\right)+O(z^{-\frac{11}{4}})\\
        &=\pi^{-1}z^{\tfrac{1}{2}}\left(1-\sin\left(\tfrac{4}{3}z^{\frac{3}{2}}\right)\right)+O(z^{-\frac{11}{4}}).
    \end{align*}
    Setting $z=n^{\frac{2}{3}}p\left(\tfrac{k}{n}\right)$ and combining with \cref{thm:tate.2}, where we again set $p=p_-$, $q=q_-$, $s=s_-$, we then get for the decomposition of $p_n$
    \begin{align*}
        &p_n(\phi,-\lfloor n\lvert a\rvert\rfloor +k)\\&= (1+(-1)^k)\left(n^{-\frac{2}{3}}s^2\left(\tfrac{k}{n}\right)\textup{Ai}^2\left(-n^{\frac{2}{3}}p\left(\tfrac{k}{n}\right)\right)+n^{-\frac{4}{3}}q^2\left(\tfrac{k}{n}\right)\textup{Ai}'^2\left(-n^{\frac{2}{3}}p\left(\tfrac{k}{n}\right)\right)\right)+O(n^{-\frac{4}{3}})\\
        &=n^{-1}\left(\pi^{-1}p^{-\frac{1}{2}}\left(\tfrac kn\right)s^2(\xi)+\pi^{-1}p^{\frac{1}{2}}\left(\tfrac kn\right)q^2\left(\tfrac kn\right)\right)\\&\phantom{=}+\pi^{-1}n^{-1}\sin\left(\tfrac{4}{3}np^{\frac{3}{2}}\left(\tfrac kn\right)\right)\left(p^{-\frac{1}{2}}\left(\tfrac kn\right)s^2\left(\tfrac kn\right)-p^{\frac{1}{2}}\left(\tfrac kn\right)q^2\left(\tfrac kn\right)\right) +O(n^{-\frac{4}{3}})\\
        &=n^{-1 }f\left(-\lvert a\rvert+\tfrac{k}{n}\right)+\textup{OSC}_n\left(\tfrac{k}{n}\right).
    \end{align*}
    with the choices
    \begin{align*}
        f(-\lvert a\rvert+\xi)&= \pi^{-1}p^{-\frac{1}{2}}(\xi)s^2(\xi)+\pi^{-1}p^{\frac{1}{2}}(\xi)q^2(\xi)\\
        \textup{OSC}_n(\xi)&=\pi^{-1}n^{-1}\sin\left(\tfrac{4}{3}np^{\frac{3}{2}}(\xi)\right)\left(p^{-\frac{1}{2}}(\xi)s^2(\xi)-p^{\frac{1}{2}}(\xi)s^2(\xi)\right) +O(n^{-\frac{4}{3}}).
    \end{align*}
    We now prove $f=\sigma_{C,\phi}$ on $(0,r)$ given \cref{eq:OSCest}. To this end, for any $r'\in[0,r]$, we have
    \begin{align*}
    	&\left\lvert \int_{n^{-\frac{2}{3}}}^{r'}f(-\lvert a\rvert+x)-\sigma_{C,\phi}(-\lvert a\rvert+x)\d x\right\rvert\\
    	&\qquad\leq \left\lvert \sum_{k=n^\frac{1}{3}}^{r'n}f\left(-\lvert a\rvert+\tfrac{k}{n}\right)-\int_{n^{-\frac{2}{3}}}^{r'}f(-\lvert a\rvert+x)\d x\right\rvert+\left\lvert  \sum_{k=n^\frac{1}{3}}^{r'n}\textup{OSC}_n\left(\tfrac{k}{n}\right)\right\rvert \\&\qquad\qquad +\left\lvert \sum_{k=n^\frac{1}{3}}^{r'n}p_n\left(-\lvert a\rvert+\tfrac{k}{n}\right)-\int_{n^{-\frac{2}{3}}}^{r'}\sigma_{C,\phi}(x)\d x \right\rvert\\
    	&\qquad\leq O(n^{-\frac{1}{3}}) + \left\lvert F_{n}(\phi;-n\lvert a\rvert + r'n)-F_{V}^{\phi} (-\lvert a\rvert + r')\right\rvert + \left\lvert F_{n}(\phi;-n\lvert a\rvert + n^\frac{1}{3})-F_{V}^{\phi} (-\lvert a\rvert + n^{-\frac{2}{3}})\right\rvert\\
    	&\qquad\leq  \left\lvert F_{X_n}^{\phi}(\lvert a\rvert + r')-F_{V}^{\phi} (-\lvert a\rvert + r')\right\rvert + O(n^{-\frac{1}{3}})\xrightarrow{n\to\infty} 0,
    \end{align*}
    by the Portmonteau theorem. Since this holds for arbitrary $r'\in[0,r]$ and since both $f$ and $\sigma_{C,\phi}$ are continuous, we have $f=\sigma_{C,\phi}$.
    
    It thus remains to prove \cref{eq:OSCest}. We will deduce this claim from estimates on oscillatory sums, collected in \cref{app:osc}.
    We therefore split
    \begin{align}
    	\label{eq:OSCsplit}
    	\left|\sum_{k=\lfloor n^{\frac{1}{3}}\rfloor}^{\lfloor rn\rfloor}\textup{OSC}_n\left(\tfrac{k}{n}\right)\right| \le \left|\sum_{k=\lfloor n^{\frac{1}{3}}\rfloor}^{\lfloor n^{\frac{2}{3}}\rfloor}\textup{OSC}_n\left(\tfrac{k}{n}\right)\right| + \left|\sum_{k=\lfloor n^{\frac{2}{3}}\rfloor}^{\lfloor rn\rfloor}\textup{OSC}_n\left(\tfrac{k}{n}\right)\right|.
    \end{align}
    First, by continuity of $s$, \cref{lem:OSCest2} implies
    \begin{align}
    	\label{eq:OSC1}
    	\sum_{k=\lfloor n^{\frac 1 3 }\rfloor}^{\lfloor n^{\frac 2 3 }\rfloor}n^{-1}\sin\left(\tfrac{4 }{3}n p\left(\tfrac{k}{n}\right)^{\frac{3}{2}}\right)s^2\left(\tfrac{k}{n}\right)p\left(\tfrac{k}{n}\right)^{-\frac{1}{2}}& = O\left(n^{-\frac{1}{3}}\right)
    \end{align}
    Further, applying \cref{cor:cancel} with $s_n=\lfloor n^{2/3}\rfloor$ and $f(x)=s^2(x)p^{-1/2}(x)$ for $r$ sufficiently small that $f$ is monotonically decreasing and positive on $(0,r)$,
    we find $C>0$
    \begin{align*}
    	\left|\sum_{k=\lfloor n^{\frac{2}{3}}\rfloor}^{\lfloor rn\rfloor}n^{-1}\sin\left(\tfrac{4 }{3}np\left(\tfrac{k}{n}\right)^{\frac{3}{2}}\right)s^2\left(\tfrac{k}{n}\right)p\left(\tfrac{k}{n}\right)^{-\frac{1}{2}}\right|
    	\le
    	Cn^{-\frac 1 2 }s^2(r)p^{-\frac 1 2 }(r) + Cn^{-\frac 1 2 }s^2(n^{-\frac 1 3 })p^{-\frac 1 2 }(n^{-\frac 1 3 }).
    \end{align*}
    Since $p(0)=0$ and $p'(0)>0$, we find $p^{-1/2}(n^{-1/3})=O(n^{1/6})$, so by continuity of $s$ and $p$, we conclude
    \begin{align}
    	\label{eq:OSC2}
    	\sum_{k=\lfloor n^{\frac{2}{3}}\rfloor}^{\lfloor rn\rfloor}n^{-1}\sin\left(\tfrac{4 }{3}np\left(\tfrac{k}{n}\right)^{\frac{3}{2}}\right)s^2\left(\tfrac{k}{n}\right)p\left(\tfrac{k}{n}\right)^{-\frac{1}{2}}
    	= O(n^{-\frac 1 3 })
    \end{align}
    Similarly, by applying \cref{cor:cancel} with $s_n=n^{1/3}$ and $f(x)=q^2(x)p^{1/2}(x)$,  we find
    \begin{align}
    	\label{eq:OSC3}
    	\begin{aligned}
    		&\left|\sum_{k=\lfloor n^{\frac{1}{3}}\rfloor}^{\lfloor rn\rfloor}n^{-1}\sin\left(\tfrac{4 }{3}np\left(\tfrac{k}{n}\right)^{\frac{3}{2}}\right)q^2\left(\tfrac{k}{n}\right)p\left(\tfrac{k}{n}\right)^{\frac{1}{2}}\right|\\
    		&\qquad \le Cn^{-\frac 1 2 }q^2(r)p^{}(r) + Cn^{-\frac 1 2 }q^2(n^{-\frac 2 3 })p^{\frac 1 2 }(n^{-\frac 2 3 }) = O(n^{-\frac 1 2 }).
    	\end{aligned}
    \end{align}
    Inserting \cref{eq:OSC1,eq:OSC2,eq:OSC3} into \cref{eq:OSCsplit} proves \cref{eq:OSCest} and thus finishes the proof.
\end{proof}
We can conclude with our main result on the sup-distance between the CDFs in the area around the wavefront.
\begin{prop}\label{thm:wallerror}
    There exists $r>0$ and $C>0$ such that for all $r'\leq r$ 
    \begin{align*}
        \left\lvert F_{n}(\phi;\pm n\lvert a\rvert \mp nr')-F_{V}^{\phi}(\pm\lvert a\rvert \mp r')\right\rvert\leq Cn^{-\frac{1}{3}}.
    \end{align*}
\end{prop}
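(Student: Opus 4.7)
The plan is to combine the boundary-layer estimate \cref{thm:leftbound}, the stationary-phase decomposition \cref{thm:decomposition}, and the growth estimate \cref{cor:asupbound} for the limit distribution. By the reflection symmetry used in the proof of \cref{thm:leftbound}, it suffices to treat the $-$ case. Fix $r>0$ small enough that \cref{thm:decomposition} applies and that $x\mapsto\sigma_{C,\phi}(-|a|+x)$ is positive on $(0,r)$; by the explicit form in \cref{lem:asymptotic}, one has the singular behaviour $\sigma_{C,\phi}(-|a|+x)\asymp x^{-1/2}$ and $|\sigma_{C,\phi}'(-|a|+x)|=O(x^{-3/2})$ as $x\to 0^+$.

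For $r'\leq n^{-2/3}$ the bound is immediate: monotonicity of $F_n$ together with \cref{thm:leftbound} gives $F_n(\phi;-n|a|+nr')\leq F_n(\phi;-n|a|+n^{1/3})=O(n^{-1/3})$, while \cref{cor:asupbound} yields $F_V^\phi(-|a|+r')=O(n^{-1/3})$. I therefore assume $r'\in(n^{-2/3},r]$. With $K_n=\lceil n^{1/3}\rceil$ and $N_n$ the largest integer satisfying $-\lfloor n|a|\rfloor+N_n\leq -n|a|+nr'$, I split
\[
F_n(\phi;-n|a|+nr')=F_n(\phi;-n|a|+n^{1/3})+\sum_{k=K_n}^{N_n}p_n\bigl(\phi,-\lfloor n|a|\rfloor+k\bigr),
\]
where the first summand is $O(n^{-1/3})$ by \cref{thm:leftbound}.

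To the remaining sum I apply \cref{thm:decomposition}, rewriting it as a Riemann sum $\sum_{k=K_n}^{N_n}\tfrac{1}{n}\sigma_{C,\phi}(-|a|+\tfrac{k}{n})$ plus an oscillatory sum of size $O(n^{-1/3})$; the oscillation estimates in the proof of \cref{thm:decomposition} remain valid when the upper cutoff $\lfloor rn\rfloor$ is replaced by $N_n\leq rn$. I then compare the Riemann sum to the integral $\int_{K_n/n}^{N_n/n}\sigma_{C,\phi}(-|a|+x)\,dx=F_V^\phi(-|a|+N_n/n)-F_V^\phi(-|a|+K_n/n)$. Using $|\sigma_{C,\phi}'(-|a|+x)|=O(x^{-3/2})$, the per-interval error of the Riemann sum is $O(n^{-2}(k/n)^{-3/2})=O(n^{-1/2}k^{-3/2})$, so by $\sum_{k\geq K_n}k^{-3/2}=O(K_n^{-1/2})$ the total Riemann-sum error is $O(n^{-1/2}K_n^{-1/2})=O(n^{-2/3})$.

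Finally, \cref{cor:asupbound} delivers $F_V^\phi(-|a|+K_n/n)=O(n^{-1/3})$, and the $x^{-1/2}$ bound on the density together with $N_n/n-r'=O(1/n)$ and $r'\geq n^{-2/3}$ gives $|F_V^\phi(-|a|+N_n/n)-F_V^\phi(-|a|+r')|=O(n^{-2/3})$. Assembling these estimates yields the claimed uniform bound $|F_n(\phi;-n|a|+nr')-F_V^\phi(-|a|+r')|=O(n^{-1/3})$. The main obstacle is the Riemann-sum-to-integral comparison near the square-root singularity of $\sigma_{C,\phi}$; the cutoff $K_n\asymp n^{1/3}$ is calibrated exactly to balance the $x^{-3/2}$ blow-up of $\sigma_{C,\phi}'$ against the step size $1/n$, so the Riemann error stays below the target order $n^{-1/3}$.
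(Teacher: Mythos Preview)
Your proof is correct and follows essentially the same route as the paper: handle the boundary layer via \cref{thm:leftbound} and \cref{cor:asupbound}, split the remaining sum using \cref{thm:decomposition} into a Riemann sum for $\sigma_{C,\phi}$ plus an oscillatory piece of order $n^{-1/3}$, and then compare the Riemann sum to the integral. The one noteworthy difference is the Riemann-sum comparison: the paper uses a single Lipschitz bound on the whole interval, whereas you sum the per-interval errors $n^{-1/2}k^{-3/2}$, which correctly tracks the $x^{-3/2}$ blow-up of $\sigma_{C,\phi}'$ down to the cutoff $K_n\asymp n^{1/3}$ and yields $O(n^{-2/3})$; this is arguably cleaner near the singularity. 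Your assertion that the $\mathrm{OSC}$ bound survives replacing $\lfloor rn\rfloor$ by $N_n$ is correct (the underlying \cref{lem:cancel} is stated for arbitrary subintervals of $[n^{1/3},rn]$, and the Abel-summation and Riemann-sum arguments in \cref{cor:cancel} and \cref{lem:OSCest2} carry over unchanged), but you should say this explicitly rather than leave it as a parenthetical remark, since \cref{thm:decomposition} as stated only gives the bound at the fixed endpoint $\lfloor rn\rfloor$.
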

\begin{proof}
    For $r'\leq 0 $ this follows from monotonicity of the distribution functions and \cref{thm:leftbound}, so from now we only consider $r'>0$. 
    Then, applying \cref{thm:decomposition,thm:main}, we find
    \begin{align*}
        &\left\lvert F_{S_n}^{\phi}(-n\lvert a\rvert + r'n)-F_{V}^{\phi} (-\lvert a\rvert + r')\right\rvert\\
        &\leq \left\lvert \sum_{k=n^\frac{1}{3}}^{r'n}p_n\left(-\lvert a\rvert+\frac{k}{n}\right)-\int_{n^{-\frac{2}{3}}}^{r'}\sigma_{C,\phi}(-\lvert a\rvert+x)\d x \right\rvert + \left\lvert F_{S_n}^{\phi}(-n\lvert a\rvert + n^\frac{1}{3})-F_{V}^{\phi} (-\lvert a\rvert + n^{-\frac{2}{3}})\right\rvert\\
        &\leq \left\lvert \sum_{k=n^\frac{1}{3}}^{r'n}\sigma_{C,\phi}\left(-\lvert a\rvert+\frac{k}{n}\right)-\int_{n^{-\frac{2}{3}}}^{r'}\sigma_{C,\phi}(-\lvert a\rvert+x)\d x\right\rvert+\left\lvert  \sum_{k=n^\frac{1}{3}}^{r'n}\textup{OSC}_n\left(\frac{k}{n}\right)\right\rvert+O(n^{-\frac{1}{3}})\\
        & = \left\lvert \sum_{k=n^\frac{1}{3}}^{r'n}\sigma_{C,\phi}\left(-\lvert a\rvert+\frac{k}{n}\right)-\int_{n^{-\frac{2}{3}}}^{r'}\sigma_{C,\phi}(-\lvert a\rvert+x)\d x\right\rvert + O(n^{-\frac13}).
    \end{align*}
    It thus remains to estimate the first term, by proving that there exists $C>0$ such that for any $0<r'\leq r$
    \begin{align*}
    	\left\lvert \sum_{k=n^{\frac{1}{3}}}^{r'n} \sigma_{C,\phi}\left(\tfrac{k}{n}\right)n^{-1}-\int_{n^{-\frac{2}{3}}}^{r'}\sigma_{C,\phi}(x)\d x\right\rvert\leq C n^{-\frac{1}{2}} .
    \end{align*}
    To see this, observe that $\sigma_{C,\phi}$ is Lipschitz continuous on the interval $I_n\coloneq [n^{-\frac{1}{3}},r']$ with Lipschitz constant $L_n=Cn^{\frac{1}{2}}$ where $C>0$ is a constant independent of $n$.
    Then, treating the sum as a Riemann sum and using the approximation error formula with 
    \begin{align*}
    	\lvert I_n\rvert \leq r, \quad
    	L_n = Cn^{\frac{1}{2}}, \quad
    	\Delta_n = n^{-1},
    \end{align*}
    yields the claim
    \begin{align*}
    	\left\lvert \sum_{k=n^{\frac{1}{3}}}^{r'n} \sigma_{C,\phi}\left(-\lvert a\rvert+\frac{k}{n}\right)n^{-1}-\int_{n^{-\frac{2}{3}}}^{r'}\sigma_{C,\phi}(-\lvert a\rvert+x)\d x\right\rvert&\leq \lvert I_n\rvert L_n \Delta_n\leq r C n^{-\frac{1}{2}}.\qedhere
    \end{align*}
\end{proof}
\subsection{Proof of \texorpdfstring{\cref{thm:quantum_rate_supnorm}}{Theorem 2.2}}
\label{subsec:finalproof}

	First of all, by the triangle inequality and linearity of the trace, note that we can restrict our attention to $\rho=\ket{\delta_0\phi}\braket{\delta_0\phi}$.
    The lower bound then follows from \cref{lem:suplowbound} combined with the fact $F_V^\phi(-|a|)=0$, cf. \cref{lem:asymptotic}.
    To prove the upper bound, pick $r>0$ and $C>0$ as in \cref{thm:wallerror}, which especially implies $|F_{X_n}^\phi(x)-F_V^\phi(x)|\le C n^{-1/3}$ for all $x\in[\pm|a|\mp r,\pm|a|]$.
    Further, for $x\in[-|a|+r,|a|-r]$, \cref{lem:inside} yields $C_r>0$ such that $|F_{X_n}^\phi(x)-F_V^\phi(x)|\le C_rn^{-1/3}$.
    Finally, \cref{thm:leftbound} yields $C'>0$ such ath $|F_{X_n}^\phi(x)-F_V^\phi(x)|\le C' n^{-1/3}$ for $|x|\ge |a|$.
    Combining these estimates proves the statement.
\qed

\appendix
\section{Oscillatory Sum Estimates}
\label{app:osc}
In this appendix, we prove estimates on sums over oscillatory terms, which we applied in \cref{sec:wavefront} to bound the supremum-distance in the wavefront region.

Let us start, by recalling the following two statements.
\begin{lem}[{\cite[Thm.~2.1]{GrahamKolesnik.1991}}]\label{prop:landau}
    Let $I$ be an interval and let $f\in C^1(I)$ have monotone derivative, then
    \begin{align*}
        \left\lvert\sum_{k\in\IZ\cap I}\exp(if(k))\right\rvert\leq \Big(\inf_{x\in I}|f'(x)|\Big)^{-1},
    \end{align*}
    where the right hand side is infinite if the infimum is zero.
\end{lem}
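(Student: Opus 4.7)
The plan is to apply Abel summation (summation by parts) after rewriting each exponential as a discrete difference. Assume $\lambda\coloneqq \inf_{x\in I}|f'(x)|>0$, as otherwise the bound is trivial. Write $I\cap\IZ = \{M,M{+}1,\ldots,N\}$ and set $\phi_k \coloneqq f(k+1)-f(k)$ for $M\leq k \leq N-1$. By the mean value theorem, $\phi_k = f'(\xi_k)$ for some $\xi_k \in (k,k+1)$, whence $|\phi_k|\geq \lambda$. Since $f\in C^1(I)$ and $|f'|\geq\lambda>0$, continuity of $f'$ forces it to be of constant sign; replacing $f$ by $-f$ if necessary, we may assume $f'\geq\lambda$ on $I$, so $\phi_k\geq\lambda>0$. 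Moreover, because $\xi_k<\xi_{k+1}$, the assumed monotonicity of $f'$ transfers to the sequence $(\phi_k)$.

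The key identity is
\[
    e^{if(k)} = \frac{e^{if(k+1)} - e^{if(k)}}{e^{i\phi_k}-1},
\]
which is valid since $\phi_k\neq 0$. Summing this for $k=M,\ldots,N-1$ and isolating the boundary term $e^{if(N)}$ expresses the exponential sum as $\sum_{k=M}^{N-1} a_k b_k + e^{if(N)}$, where $a_k \coloneqq e^{if(k+1)}-e^{if(k)}$ and $b_k \coloneqq (e^{i\phi_k}-1)^{-1}$. Abel summation now reduces the task to two ingredients: a pointwise bound on $|b_k|$ and control of the total variation of $(b_k)$.

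The pointwise bound $|b_k| = (2|\sin(\phi_k/2)|)^{-1}$ is directly controlled by $\phi_k\geq\lambda$ in the regime where $\phi_k$ stays bounded away from $2\pi\IZ$. For the total variation, monotonicity of $(\phi_k)$ ensures that $k\mapsto e^{i\phi_k}$ traces a monotone arc on the unit circle, so that $(b_k)$ itself is of bounded variation and $\sum_k|b_{k+1}-b_k|$ telescopes to a quantity of the same order as $\max_k|b_k|$. Combined with the trivial bound $|\sum_{k=M}^{j}a_k| = |e^{if(j+1)}-e^{if(M)}|\leq 2$ on the partial sums, Abel summation yields an estimate of the form $C/\lambda$ for some absolute constant $C>0$.

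The main obstacle is recovering exactly the constant $1$ advertised in the statement; the direct argument above produces an inequality of the correct order but with a multiplicative constant that depends on how the boundary and interior contributions are partitioned in the summation by parts. Sharpening the constant is classically achieved either by performing the summation by parts symmetrically, so that boundary terms cancel, or by replacing the variation sum with an integral comparison of the form $\sum|b_{k+1}-b_k|\lesssim \int \sin(t/2)^{-2}\,dt$ over the range swept by $(\phi_k)$, and possibly by reducing modulo $2\pi$ to keep $\phi_k$ inside $(0,\pi]$. This is the technical heart of the classical Kusmin--Landau method as executed in \cite{GrahamKolesnik.1991}, and is the only step where I would expect to need to consult the cited reference.
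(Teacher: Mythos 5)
The paper does not actually prove this lemma---it is imported wholesale from \cite[Thm.~2.1]{GrahamKolesnik.1991}---so there is no in-paper argument to compare against. Your sketch correctly reconstructs the classical Kusmin--Landau mechanism that the cited reference uses: write $e^{if(k)}$ as the first difference $e^{if(k+1)}-e^{if(k)}$ divided by $e^{i\phi_k}-1$ with $\phi_k\coloneqq f(k{+}1)-f(k)$, Abel-sum, and exploit the monotonicity of $(\phi_k)$ (inherited from that of $f'$) to make the variation of $(e^{i\phi_k}-1)^{-1}$ telescope to a single boundary term. The route is the same as in the reference; you are honest that obtaining the advertised constant $1$ is where you would need to consult it.

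There is, however, a genuine issue you brush against but do not resolve, and it is worth naming because it is really a defect of the \emph{statement}, not just of your proof. The pointwise estimate $\lvert e^{i\phi_k}-1\rvert^{-1}=\bigl(2\lvert\sin(\phi_k/2)\rvert\bigr)^{-1}$ is controlled by a lower bound on the distance from $\phi_k$ to $2\pi\IZ$, not by a lower bound on $\lvert\phi_k\rvert$. As written, the lemma is false: take $f(x)=2\pi x$, so that $\inf\lvert f'\rvert=2\pi>0$, yet every summand equals $1$ and the left-hand side is $\lvert I\cap\IZ\rvert$, which is unbounded. Graham--Kolesnik state the hypothesis as $\lVert f'\rVert\ge\lambda$ (distance to the nearest integer, in their $e^{2\pi i f}$ normalization); after changing to $e^{if}$ this becomes $\operatorname{dist}(f'(x),2\pi\IZ)\ge\lambda$. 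Your proposed remedy---``reduce $\phi_k$ modulo $2\pi$ to keep it in $(0,\pi]$''---does not work as stated, because the reduction destroys precisely the monotonicity of $(\phi_k)$ that makes the variation sum telescope; one must instead \emph{assume} $f'$ lands in a fixed half-period and keep the unreduced quantities monotone. None of this hurts the paper: in the application (\cref{lem:cancel}) the function $f_n'$ lives in an interval $[Cn^{-1/3},C']$ with $C'<2\pi$ for $r$ small, so the corrected hypothesis is satisfied and the $O(\lambda^{-1})$ bound (constant immaterial) is all that is used. But a precise restatement should replace $\inf_{x\in I}\lvert f'(x)\rvert$ by $\inf_{x\in I}\operatorname{dist}(f'(x),2\pi\IZ)$, and one should not expect the constant $1$ to survive the change of normalization from $e^{2\pi i f}$ to $e^{if}$.
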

\begin{lem}[{\cite[Thm.~2.2]{GrahamKolesnik.1991}}]\label{prop:corput}
    Let $I$ be an interval, let $f\in C^2(I)$ and assume there are $\lambda>0$ and $\gamma\ge 1$ such that
    \begin{align*}
        \lambda\leq \lvert f''(x)\rvert \leq \gamma \lambda \textup{ for all } x\in I.
    \end{align*}
    Then
    \begin{align*}
        \left\lvert\sum_{n\in \IZ\cap I }\exp(if(n))\right\rvert\leq \gamma \lvert I\rvert \lambda^{\frac{1 }{2}}+\lambda^{-\frac{1}{2}}.
    \end{align*}
\end{lem}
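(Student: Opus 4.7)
The plan is the classical van der Corput B-process: reduce the second-derivative estimate to the first-derivative estimate of Lemma~\ref{prop:landau} by partitioning $I$ along the integer level sets of $f'$, then bound the contribution of each piece separately.

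First I would reduce to the case $f''\ge\lambda$ on $I$. Since $f''$ is continuous and $|f''|\ge\lambda>0$, its sign is constant on the interval $I$; the case $f''\le -\lambda$ follows by replacing $f$ with $-f$, which complex-conjugates the sum and preserves its modulus. Thus $f'$ is strictly increasing on $I$, and its total variation there is at most $\gamma\lambda|I|$ by the upper bound on $|f''|$.

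Next, I would partition $I$ along the integer level sets of $f'$: for each integer $m$ in the range of $f'|_I$, let $I_m\coloneqq\{x\in I : f'(x)\in[m-\tfrac12,m+\tfrac12)\}$. The number $N$ of non-empty $I_m$ is at most $\gamma\lambda|I|+1$. On each $I_m$ I would set $g_m(x)\coloneqq f(x)-mx$, so that for integer $n$ the phase $\exp(if(n))$ coincides with $\exp(ig_m(n))$ up to a fixed modulus-one factor (which is exactly $1$ in the $\exp(2\pi i\cdot)$ convention and absorbed into a universal constant otherwise). Now $g_m'\in[-\tfrac12,\tfrac12)$ is monotone with $g_m''=f''\ge\lambda$. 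I would then split $I_m$ into a central slow piece $\{|g_m'|\le\lambda^{1/2}\}$, of length at most $2\lambda^{-1/2}$ by integrating $g_m''\ge\lambda$, whose sum is controlled trivially by $O(\lambda^{-1/2})$; and the two fast side pieces where $|g_m'|\ge\lambda^{1/2}$ and $g_m'$ is monotone, to each of which Lemma~\ref{prop:landau} applies with bound $\lambda^{-1/2}$. Altogether
\[\left|\sum_{n\in\IZ\cap I_m}\exp(if(n))\right|=O(\lambda^{-1/2}).\]

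Summing over the $N\le\gamma\lambda|I|+1$ pieces then yields the claimed estimate $O(\gamma|I|\lambda^{1/2}+\lambda^{-1/2})$. The main obstacle, though not conceptually deep, is the book-keeping in the partition step: one must verify that the slow/fast split within each $I_m$ produces only a universal multiplicative constant, and align the integer level sets of $f'$ with the convention of $\exp(i\cdot)$ used in Lemma~\ref{prop:landau}, so that the constants in front of both $\gamma|I|\lambda^{1/2}$ and $\lambda^{-1/2}$ are universal and can be absorbed to give exactly the form of the statement. No further analytic input beyond the two ingredients (trivial bound on the slow piece, Lemma~\ref{prop:landau} on the fast pieces) is needed.
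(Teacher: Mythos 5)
The paper cites this lemma from Graham--Kolesnik without supplying a proof, so there is no in-paper argument to compare against; your proposal is the standard van der Corput second-derivative (B-process) reduction to the Kusmin--Landau bound, which is also the argument in the cited source, and the plan is correct. Two bookkeeping points to be explicit about when you write it out. First, with the paper's $\exp(if(n))$ convention (rather than $\exp(2\pi i f(n))$) you must partition along $2\pi\IZ$-level sets of $f'$ and subtract $2\pi m x$, so that the linear shift is trivial at integer arguments; this also makes the application of \cref{prop:landau} legitimate, since after normalization $g_m'\in[-\pi,\pi)$, so $\inf|g_m'|$ genuinely measures the distance of $g_m'$ to $2\pi\IZ$ --- a hypothesis \cref{prop:landau} as printed quietly needs (for $f(x)=2\pi x$ one has $\inf|f'|=2\pi$ yet no cancellation at all). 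Second, as you note, the accounting produces the bound only up to a universal multiplicative constant: the $O(1)$ per non-empty piece sums to $N\lesssim\gamma\lambda|I|+1$, which is not dominated by $\gamma|I|\lambda^{1/2}$ for $\lambda\gg1$. This is not a defect of your argument; Graham--Kolesnik's Theorem 2.2 carries an implied $\ll$, and the bare $\le$ in the paper's transcription should be read the same way, which is all that is used downstream in \cref{lem:cancel}.
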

From here, we prove the following result on the vanishing speed of the sum over oscillating arguments.
\begin{prop}\label{lem:cancel}
    For an open domain $D$ around zero, let $p\in C^2(D)$ with $p(0)=0$ and $p'(0)=\alpha>0$. Then, there exists $r>0$ and $C>0$ such that, for $n\in\IN$ large enough,
    \begin{align*}
        \left\lvert\sum_{k=\lfloor n^{\frac 1 3 }\rfloor}^{\lfloor rn \rfloor} \sin\left(\tfrac{4}{3}n p\left(\tfrac{k}{n}\right)^{\frac{3}{2}}\right)\right\rvert\leq C n^{\frac{1}{2}}.
    \end{align*}
    Furthermore, for any sequence of subintervals $(I_n)_{n\in\IN}$ with $I_n\subseteq [n^{1/3},rn]$, we have
    \begin{align*}
        \left\lvert\sum_{k\in I_n\cap\IZ} \sin\left(\tfrac{4}{3}n p\left(\tfrac{k}{n}\right)^{\frac{3}{2}}\right)\right\rvert\leq C n^{\frac{1}{2}}.
    \end{align*}
\end{prop}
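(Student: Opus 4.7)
The plan is to set $f_n(x) := \tfrac{4}{3} n\, p(x/n)^{3/2}$ and apply the van der Corput estimate (\cref{prop:corput}) on a dyadic decomposition of $[n^{1/3}, rn]$. Since $\sin(f_n(k)) = \tfrac{1}{2i}\bigl(e^{if_n(k)} - e^{-if_n(k)}\bigr)$, it suffices to bound $\bigl|\sum_{k \in I \cap \IZ} e^{\pm i f_n(k)}\bigr|$ on each dyadic sub-interval $I$.

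I would begin by differentiating: for $\xi = x/n$ in a neighbourhood of $0$ where $p(\xi) > 0$ and $p'(\xi) \asymp \alpha$, one computes
\[
f_n''(x) \;=\; \frac{p'(x/n)^2}{n\, \sqrt{p(x/n)}} \;+\; \frac{2\sqrt{p(x/n)}\, p''(x/n)}{n}.
\]
Since $p(\xi) = \alpha\xi + O(\xi^2)$ and $p'(\xi) = \alpha + O(\xi)$, picking $r > 0$ sufficiently small (so $p$ is $C^2$, strictly positive, and bounded on $(0,r]$), the first term dominates and there exist constants $0 < c_1 \le c_2$ with
\[
\frac{c_1}{\sqrt{xn}} \;\le\; |f_n''(x)| \;\le\; \frac{c_2}{\sqrt{xn}} \qquad \text{for all } x \in [n^{1/3}, rn],\ n \ \text{large}.
\]

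Next I would dyadically decompose $[n^{1/3}, rn]$ into intervals $J_j := [2^j, 2^{j+1}]$ with $j$ ranging from $\lfloor \tfrac{1}{3}\log_2 n\rfloor$ up to $\lceil \log_2(rn)\rceil$ (giving $O(\log n)$ pieces). On each $J_j$ the above estimate gives $|f_n''(x)| \asymp \lambda_j := (2^j n)^{-1/2}$ with ratio bounded by $\gamma = c_2/c_1$. \cref{prop:corput} then yields
\[
\Bigl|\sum_{k \in J_j \cap \IZ} e^{\pm i f_n(k)}\Bigr| \;\lesssim\; |J_j|\, \lambda_j^{1/2} + \lambda_j^{-1/2} \;\asymp\; \frac{2^{3j/4}}{n^{1/4}} + (2^j n)^{1/4}.
\]
Summing these bounds over the $O(\log n)$ dyadic scales, both geometric series are dominated by their largest term, attained at $2^j \asymp rn$, yielding $O(r^{3/4} n^{1/2}) + O(r^{1/4} n^{1/2}) = O(n^{1/2})$, which is the required bound.

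For the second, uniform-in-subinterval statement, I would observe that intersecting each $J_j$ with $I_n$ only shrinks the summation range, and \cref{prop:corput} applies verbatim to any sub-interval of $J_j$ with the same $\lambda_j$ and $\gamma$; the dyadic estimate above then gives the same $O(n^{1/2})$ bound uniformly in the choice of $I_n$. The main technical care will be in step one, tracking that near $x = n^{1/3}$ the correction term in $f_n''$ is genuinely subdominant (ratio $\asymp 2^j/n \le r$) so that the two-sided estimate on $f_n''$ holds with constants independent of $j$ and $n$; once that is established, the dyadic geometric sum is routine.
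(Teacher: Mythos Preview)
Your argument is correct. Both approaches land on the van der Corput second-derivative bound (\cref{prop:corput}), but you organize the estimate differently from the paper.

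The paper splits $[n^{1/3},rn]$ into just two pieces, $[n^{1/3},n^{2/3}]$ and $[n^{2/3},rn]$. On the outer piece it invokes the first-derivative (Kuzmin--Landau) bound \cref{prop:landau}, using that $f_n'(x)=2\,p(x/n)^{1/2}p'(x/n)$ is monotone and bounded below once $x/n\ge n^{-1/3}$; this already gives a contribution of order well below $n^{1/2}$. On the inner piece it applies \cref{prop:corput} once, with $\lambda\asymp n^{-1}$, $\gamma\asymp n^{1/3}$, $|I|=n^{2/3}$, which produces exactly the $n^{1/2}$ term. Your dyadic decomposition avoids the first-derivative test entirely: you keep $\gamma$ bounded on each $J_j$ (since $|f_n''|$ varies only by a factor $\sqrt{2}\,c_2/c_1$ across a dyadic block) and then sum the resulting geometric series. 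This is arguably cleaner and more uniform---one tool throughout, and the subinterval statement is immediate because each $J_j\cap I_n$ inherits the same $\lambda_j,\gamma$. The paper's version, on the other hand, makes the source of the $n^{1/2}$ more transparent (it comes entirely from the short range near $n^{1/3}$), and spends one fewer lemma on the long tail. Either route is perfectly adequate here.
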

\begin{proof}
    We define the functions 
    \begin{align*}
        f_n(x)&\coloneq \tfrac{4}{3}n p\left(\tfrac{x}{n}\right)^{\frac{3}{2}}\\
        f_n'(x)&=2a\left(\tfrac{x}{n}\right)p'\left(\tfrac{x}{n}\right)\\
        f_n''(x) &= n^{-1}\left(2p''\left(\tfrac{x}{n}\right)ü\left(\tfrac{x}{n}\right)^{\frac{1}{2}}+\frac{1}{2}p'\left(\tfrac{x}{n}\right)^{2}p\left(\tfrac{x}{n}\right)^{-\frac{1}{2}}\right).
    \end{align*}
    In this proof, we split the sum over $\sin(f_n(k))$ into the regions $k\in[n^{2/3},rn]$ and $k\in[n^{1/3},n^{2/3})$, where we will use \cref{prop:landau} to estimate the first part and \cref{prop:corput} for the second. 

    For the first part, by the assumptions, there exists $r>0$ such that the function $f'_n(n\,\cdot\,)$ is monotonically increasing on $[0,r]$
    and thus for any $n\in\IN$ the function $f'_n$ is monotone on the interval $[n^{2/3},rn)$
    and there exists a $C>0$ such that
    \begin{align*}
        f'_n(x)\geq 2p(n^{-\frac 1 3 })p'(n^{-\frac 1 3 })>Cn^{-\frac 1 3 }.
    \end{align*}
    Then, by \cref{prop:landau}, we obtain
    \begin{align}
    	\label{eq:firstsum}
        \sum_{k=\lfloor n^{\frac 2 3 }\rfloor}^{\lfloor rn\rfloor} \sin\left(\tfrac{4}{3}n p\left(\tfrac{k}{n}\right)^{\frac 3 2 }\right) =O(n^{\frac 1 3 }).
    \end{align}
  	To estimate the second part, we notice that the assumptions on $p$ imply that
    the term including $p(x/n)^{-1/2}$ is monotonically decreasing and dominating the behaviour of $f''_n$, since the other term is bounded, and thus there exists $C>0$ such that, for $x\in[n^{1/3},n^{2/3})$,
    \begin{align*}
        f''_n(x)&\geq f''_n(n^{\frac 2 3 })\geq Cn^{-1}(n^{-\frac 1 3 })^{-\frac 1 2 }=Cn^{-5/6}\geq Cn^{-1}\\
        f''_n(x)&\leq f''_n(n^{\frac 1 3 })\leq Cn^{-1}(n^{-\frac 2 3 })^{-\frac 1 2 }\leq Cn^{-\frac 2 3 }
    \end{align*}
    Thus, with the choice 
    \begin{align*}
        \lambda\coloneq Cn^{-1},\qquad
        \gamma\coloneq n^{\frac 1 3 },\qquad
        \lvert I_n\rvert =n^{\frac 2 3 }, 
    \end{align*}
    \cref{prop:corput} yields the estimate
    \begin{align}\label{eq:secondsum}
        \left\lvert\sum_{k=\lfloor n^{\frac 1 3 }\rfloor}^{\lfloor n^{\frac 2 3 }\rfloor}\sin\left(f_n(k)\right)\right\rvert&\leq \left(Cn^{-1}\right)^{\frac{1}{2}}n^{\frac 1 3 }n^{\frac 2 3 }+\left(Cn^{-1}\right)^{-\frac{1}{2}}= O\left(n^{\frac{1}{2}}\right).  
    \end{align}
    Combining \cref{eq:firstsum,eq:secondsum} yields the claimed order of $O(n^{1/2})$.
    All estimates above are also applicable for any sequence of subintervals $(I_n)_{n\in\IN}$ with $I_n\subset [n^{1/3},rn]$ and the estimates hold with the same constants. 
\end{proof}
\begin{cor}
	\label{cor:cancel}
	Let $p$ and $r>0$ and $C>0$ be as in \cref{lem:cancel}.
	Then, for $f:(0,\infty)\to\IR$ positive and monotone function and any sequence $(s_n)_{n\in\IN}\subseteq \IN$ with $s_n\leq rn$, we have 
	\[ \left\lvert  \sum_{k=s_n}^{\lfloor rn\rfloor}f\left(\tfrac{k}{n}\right)  \sin\left(\tfrac{4}{3}n p\left(\tfrac{k}{n}\right)^{\frac 3 2 }  \right)\right\rvert \le 
	Cn^{\frac{1}{2}}\left(2\left\lvert f\left(\tfrac{\lfloor rn\rfloor +1}{n}\right)\right\rvert +\left\lvert f\left(\tfrac{s_n}{n}\right)\right\rvert\right).
	 \]
\end{cor}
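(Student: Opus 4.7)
The plan is to deduce \cref{cor:cancel} from \cref{lem:cancel} by a standard Abel summation (summation-by-parts) argument, using monotonicity of $f$ to telescope.

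\textbf{Step 1 (Abel summation).} Abbreviate $M \coloneqq \lfloor rn\rfloor$, $a_k \coloneqq \sin\!\left(\tfrac43 n p(k/n)^{3/2}\right)$ and $f_k \coloneqq f(k/n)$. Define the partial sums
\begin{equation*}
    B_j \coloneqq \sum_{k=s_n}^{j} a_k \qquad (s_n \le j \le M), \quad B_{s_n-1} \coloneqq 0.
\end{equation*}
Since $a_k = B_k - B_{k-1}$, the identity
\begin{equation*}
    \sum_{k=s_n}^{M} f_k a_k \;=\; f_{M+1} B_M \;+\; \sum_{k=s_n}^{M}\bigl(f_k - f_{k+1}\bigr)B_k
\end{equation*}
follows by rearranging $\sum_{k} f_k(B_k-B_{k-1})$ and reindexing (so that the boundary term sits at index $M+1$, matching the $f((\lfloor rn\rfloor+1)/n)$ appearing in the statement).

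\textbf{Step 2 (Uniform bound on partial sums).} The key input is the ``subinterval'' version of \cref{lem:cancel}: applied to $I_n = [s_n, j] \subset [n^{1/3}, rn]$ it gives a constant $C > 0$, independent of $j$ and $n$, with $|B_j| \le C n^{1/2}$ for every $s_n \le j \le M$.

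\textbf{Step 3 (Monotonicity and telescoping).} Because $f$ is monotone on $(0,\infty)$, the differences $f_k - f_{k+1}$ all have the same sign, so
\begin{equation*}
    \sum_{k=s_n}^{M} \bigl|f_k - f_{k+1}\bigr| \;=\; \bigl|f_{s_n} - f_{M+1}\bigr| \;\le\; \bigl|f_{s_n}\bigr| + \bigl|f_{M+1}\bigr|.
\end{equation*}
Plugging the bounds from Steps 2 and 3 into the identity of Step 1 yields
\begin{equation*}
    \Bigl|\sum_{k=s_n}^{M} f_k a_k\Bigr| \;\le\; |f_{M+1}|\, C n^{1/2} \;+\; C n^{1/2}\bigl(|f_{s_n}| + |f_{M+1}|\bigr) \;=\; Cn^{1/2}\bigl(2|f_{M+1}| + |f_{s_n}|\bigr),
\end{equation*}
which is exactly the claimed inequality.

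There is no genuine obstacle here: the proof is a clean application of summation by parts, and the only care needed is in the bookkeeping so that the boundary contribution lands at $f((\lfloor rn\rfloor+1)/n)$ as stated. The crucial non-trivial ingredient has already been established in the second assertion of \cref{lem:cancel}, namely the $O(n^{1/2})$ cancellation for the oscillatory sum on arbitrary subintervals; without it the monotonicity trick would be useless.
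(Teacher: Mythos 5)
Your proof is correct and is essentially the same argument the paper uses: Abel summation to move the factor $f$ out, the subinterval version of \cref{lem:cancel} to bound the partial sums $B_j$ uniformly by $Cn^{1/2}$, and monotonicity of $f$ to telescope $\sum|f_k - f_{k+1}|$ into $|f_{s_n} - f_{M+1}|$. The bookkeeping that places the boundary term at $f((\lfloor rn\rfloor+1)/n)$ also matches the paper's computation.
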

\begin{proof}
	The claim follows from \cref{lem:cancel} and the assumptions on $f$ with the telescope-type argument
	\begin{align*}
		&\left\lvert  \sum_{k=s_n}^{\lfloor rn\rfloor}f\left(\tfrac{k}{n}\right)  \sin\left(\tfrac{4}{3}np\left(\tfrac{k}{n}\right)^{\frac 3 2 }  \right)\right\rvert\\
		&=\left\lvert  f\left(\tfrac{\lfloor rn\rfloor +1}{n}\right) \sum_{k=s_n}^{\lfloor rn\rfloor}\sin\left(\tfrac{4}{3}np\left(\tfrac{k}{n}\right)^{\frac 3 2 }  \right)+\sum_{k=s_n}^{\lfloor rn\rfloor}\left(f\left(\tfrac{k}{n}\right) - f\left(\tfrac{k+1}{n}\right) \right)\sum_{l=s_n}^{k}\sin\left(\tfrac{4}{3}n p\left(\tfrac{k}{n}\right)^{\frac 3 2 }  \right)\right\rvert\\
		&\leq \left\lvert  f\left(\tfrac{\lfloor rn\rfloor +1}{n}\right) \right\rvert\left\lvert\sum_{k=s_n}^{\lfloor rn\rfloor}\sin\left(\tfrac{4}{3}n p\left(\tfrac{k}{n}\right)^{\frac 3 2 }  \right)\right\rvert+\sum_{k=s_n}^{\lfloor rn\rfloor}\left\lvert  f\left(\tfrac{k}{n}\right) - f\left(\tfrac{k+1}{n}\right) \right\rvert\left\lvert\sum_{l=s_n}^{k}\sin\left(\tfrac{4}{3}n a\left(\tfrac{k}{n}\right)^{\frac 3 2 }  \right)\right\rvert\\
		&\leq Cn^{\frac{1}{2}}\left(\left\lvert f\left(\tfrac{\lfloor rn\rfloor +1}{n}\right)\right\rvert+ \sum_{k=s_n}^{\lfloor rn\rfloor}\left\lvert  f\left(\tfrac{k}{n}\right) - f\left(\tfrac{k+1}{n}\right) \right\rvert\right) \\
		&\leq Cn^{\frac{1}{2}}\left(2\left\lvert f\left(\tfrac{\lfloor rn\rfloor +1}{n}\right)\right\rvert +\left\lvert f\left(\tfrac{s_n}{n}\right)\right\rvert\right) .
		\qedhere
	\end{align*}
\end{proof}
From this general result, we turn to the following estimate important for our application.
\begin{prop}\label{lem:OSCest2} Let the function $p$ be smooth on an open set domain around zero with $p(0)=0$ and $p'(0)=\alpha>0$, then 
    \begin{align*}
        n^{-1}\sum_{k=n^{\frac{1}{3}}}^{n^{\frac{2}{3}}}p\left(\tfrac{k}{n}\right)^{-\frac{1 }{2}} \sin\left(\tfrac{4}{3}np\left(\tfrac{k}{n}\right)^{\frac{3}{2}}\right)= O(n^{-\frac{1}{3}})
    \end{align*}
\end{prop}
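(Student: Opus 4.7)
The plan is to refine the dyadic argument underlying the proof of \cref{lem:cancel} in order to accommodate the singular weight $p(k/n)^{-1/2}$. A direct application of (the obvious variant of) \cref{cor:cancel} with $f = p^{-1/2}$ and summation range $[\lfloor n^{1/3}\rfloor, \lfloor n^{2/3}\rfloor]$ would only yield a bound of order $n^{1/2}\cdot p(n^{-2/3})^{-1/2} \asymp n^{5/6}$, hence $O(n^{-1/6})$ after dividing by $n$. The uniform bound $O(n^{1/2})$ from \cref{lem:cancel} is too crude for sub-intervals much shorter than $n^{2/3}$; the missing improvement comes from applying van der Corput's inequality (\cref{prop:corput}) locally on a dyadic decomposition.

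Concretely, we partition $[\lfloor n^{1/3}\rfloor, \lfloor n^{2/3}\rfloor]$ into dyadic intervals $I_j = [A_j, 2A_j]\cap\IZ$ with $A_j = 2^j\lfloor n^{1/3}\rfloor$ for $j = 0, 1, \ldots, J_n \asymp \tfrac{1}{3}\log_2 n$ (the topmost interval being truncated at $\lfloor n^{2/3}\rfloor$). Setting $\phi_n(x) = \tfrac{4}{3}np(x/n)^{3/2}$, a direct differentiation gives
\[
\phi_n''(x) = n^{-1}\left[2p''(x/n)\sqrt{p(x/n)} + p'(x/n)^2/\sqrt{p(x/n)}\right],
\]
and the Taylor expansion $p(u) = \alpha u + O(u^2)$ near $0$ shows that the second summand dominates, so $|\phi_n''(x)| \asymp n^{-1/2}A_j^{-1/2}$ on $I_j$, with supremum-to-infimum ratio bounded by a universal constant. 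Invoking \cref{prop:corput} with $\lambda \asymp n^{-1/2}A_j^{-1/2}$ and $|I_j| \le A_j$ yields $\bigl|\sum_{k\in J\cap\IZ} e^{i\phi_n(k)}\bigr| \lesssim A_j^{3/4}n^{-1/4} + A_j^{1/4}n^{1/4}$ uniformly over subintervals $J\subseteq I_j$.

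Since $u \mapsto p(u)^{-1/2}$ is positive and monotonically decreasing near $0$, the Abel summation device used in the proof of \cref{cor:cancel} then bounds the weighted sum $\Sigma_j := \sum_{k\in I_j} p(k/n)^{-1/2}\sin(\phi_n(k))$ by the partial-sum estimate above times the maximum value $p(A_j/n)^{-1/2} \asymp (n/A_j)^{1/2}$, giving
\[
|\Sigma_j| \lesssim A_j^{1/4}n^{1/4} + A_j^{-1/4}n^{3/4}.
\]
Summing over $j$, the two geometric series $\sum_j A_j^{\pm 1/4}$ are dominated respectively by their largest and smallest terms: the first contribution is bounded at $j = J_n$ by $\lesssim n^{5/12}$, while the second is bounded at $j = 0$ by $\lesssim n^{2/3}$. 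Hence $|\sum_j \Sigma_j| \lesssim n^{2/3}$, and multiplying by the $n^{-1}$ in the statement yields the required $O(n^{-1/3})$.

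The main technical point to verify is the uniform non-degeneracy of $\phi_n''$ across the dyadic scale: one must check that the singular term $p'(x/n)^2/\sqrt{p(x/n)}$ dominates $p''(x/n)\sqrt{p(x/n)}$ throughout $(0, n^{-1/3}]$, with constants independent of $j$ and $n$. This is straightforward from the smoothness of $p$ and $p'(0) = \alpha > 0$, but must be carried out carefully so that the van der Corput constants are uniform, allowing the geometric summation over $j$ to be carried through cleanly.
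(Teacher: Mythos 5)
Your argument is correct, and it takes a genuinely different route from the paper's.

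The paper compares the sum to the Riemann integral $\int_{n^{-2/3}}^{n^{-1/3}} p(x)^{-1/2}\sin(\tfrac43 np(x)^{3/2})\,\rmd x$, bounds the Riemann error by $L|I|\Delta x\asymp n\cdot n^{-1/3}\cdot n^{-1}=n^{-1/3}$, and then evaluates the integral by the substitution $u=p(x)$, linearizes the factor $1/p'(p^{-1}(u))\approx \alpha^{-1}$, and recognizes a scaled copy of the convergent integral $\int_0^\infty x^{-2/3}\sin x\,\rmd x$. You instead keep the problem discrete: you decompose $[\lfloor n^{1/3}\rfloor,\lfloor n^{2/3}\rfloor]$ dyadically, apply van der Corput's second-derivative test locally on each block $I_j$ (where the nearly linear behaviour of $p$ near $0$ makes $|\phi_n''|\asymp n^{-1/2}A_j^{-1/2}$ with a uniform sup/inf ratio), remove the monotone weight $p^{-1/2}$ by the Abel-summation device already encapsulated in \cref{cor:cancel}, and sum the two resulting geometric series to get $O(n^{5/12})+O(n^{2/3})=O(n^{2/3})$, i.e.\ $O(n^{-1/3})$ after dividing by $n$. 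Both arguments are sound and of comparable length, and both hinge on the fact that \cref{lem:cancel}/\cref{cor:cancel} applied globally on $[n^{1/3},n^{2/3}]$ only yields $O(n^{-1/6})$, which is too weak. The paper's route buys a cleaner closed form for the leading constant via the explicit integral; your route is more robust, as it never needs to evaluate an integral and would adapt unchanged to phases or weights where no tractable antiderivative is available. One point worth making explicit in a full write-up is the one you already flag: the uniformity in $j$ and $n$ of the van der Corput parameter $\gamma$, which rests on $p(u)=\alpha u+O(u^2)$ making $p(x/n)^{-1/2}$ vary by at most a factor $\approx\sqrt2$ across each dyadic block and the subleading term $p''(x/n)\sqrt{p(x/n)}$ negligible uniformly on $(0,n^{-1/3}]$.
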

\begin{proof}
    We want to identify the sum as Riemann sum and compare it to the corresponding integral 
    \begin{align*}
        \int_{n^{-\frac 2 3 }}^{n^{-\frac 1 3 }}p(x)^{-\frac 1 2 }\sin\left(\tfrac{4}{3}np\left(x\right)^{\frac 3 2 }\right)\d x.
    \end{align*}
    This integral exists as an integral of smooth functions over a bounded interval. 
    To arrive at the stated estimate we therefore need to evaluate the integral and estimate the approximation error. 

    With the identifications
    \begin{align*}
        f(x)&\coloneqq p(x)^{-\frac{1}{2}}\sin\left(\tfrac{4}{3}np(x)^{\frac{3}{2}}\right),\\
        f'(x)&=-\frac{1}{2}p'(x)p(x)^{-\frac{3}{2}}\sin\left(\tfrac{4}{3}np(x)^{\frac{3}{2}}\right)+2np'(x)\cos\left(\tfrac{4}{3}np(x)^{\frac{3}{2}}\right),\\
        L&\coloneq \sup_{n^{-\frac{2}{3}}\leq x\leq n^{-\frac{1}{3}}}f'(x)=O(n),\\
        \lvert I\rvert &\coloneq n^{-\frac{1}{3}}-n^{-\frac{2}{3}}\leq n^{-\frac{1}{3}},\\
        \Delta x &= n^{-1},
    \end{align*}
    we can apply the approximation error formula to estimate
    \begin{align*}
        \left\lvert n^{-1}\sum_{k=n^{\frac{1}{3}}}^{n^{\frac{2}{3}}}p\left(\tfrac{k}{n}\right)^{-\frac{1 }{2}} \sin\left(\tfrac{4}{3}np\left(\tfrac{k}{n}\right)^{\frac{3}{2}}\right)-\int_{n^{-\frac{2}{3}}}^{n^{-\frac{1}{3}}}p(x)^{-\frac{1}{2}}\sin\left(\tfrac{4}{3}np(x)^{\frac{3}{2}}\right)\d x\right\rvert\leq L\lvert I\rvert \Delta x = O(n^{-\frac{1}{3}}).
    \end{align*}
    It thus remains to evaluate the integral.
    First, notice that there exists $r>0$ such that $p$ is injective on the interval $[-r,r]$.
    Thus, in the following we can choose $n$ large enough such that $n^{-1/3}\leq r$.
    Then, we apply the substitution $u=p(x)$ to obtain
    \begin{align*}
        \int_{n^{-\frac 2 3 }}^{n^{-\frac 1 3 }}p(x)^{-\frac 1 2 }\sin\left(\tfrac{4}{3}np\left(x\right)^{\frac 3 2 }\right)\d x&=\int_{p^{-1}(n^{-\frac{2}{3}})}^{p^{-1}(n^{-\frac{1}{3}})}u^{-\frac{1}{2}}\sin\left(\tfrac{4}{3}nu^{\frac{3}{2}}\right)    \frac{\d u}{p'(p^{-1}(u))}.
    \end{align*}
    Noting that by the inverse function theorem the derivative
    \begin{align*}
        \left(\frac{1}{p'(p^{-1}(u))}\right)'=-\frac{p''(p^{-1}(u))}{\left(p'(p^{-1}(u))\right)^3}
    \end{align*}
    is smooth on the compact interval $u\in p([-r,r])$, there exists $C>0$ such that 
    \begin{align*}
        \sup_{u\in p([-r,r])}\left\lvert\frac{p''(p^{-1}(u))}{\left(p'(p^{-1}(u))\right)^3}\right\rvert\le C. 
    \end{align*}
    With this, we can estimate the error that arises from linearizing the term $(p'(p^{-1}(u)))^{-1}$ in the integral, i.e.,
    \begin{align*}
        \left\lvert \int_{p^{-1}(n^{-\frac{2}{3}})}^{p^{-1}(n^{-\frac{1}{3}})}u^{-\frac{1}{2}}\sin\left(\tfrac{4}{3}nu^{\frac{3}{2}}\right)\left(    \frac{1}{p'(p^{-1}(u))} -\alpha^{-1}\right)\d u\right\rvert&\leq \int_{a^{-1}(n^{-\frac{2}{3}})}^{a^{-1}(n^{-\frac{1}{3}})}u^{-\frac{1}{2}}\left\lvert    \frac{1}{p'(p^{-1}(u))} -\alpha^{-1}\right\rvert\d u\\
        &\leq  \int_{p^{-1}(n^{-\frac{2}{3}})}^{p^{-1}(n^{-\frac{1}{3}})}u^{-\frac{1}{2}}Cu\d u\\
        &\leq \frac{2}{3}C \left(p^{-1}(n^{-\frac 1 3 })\right)^{\frac 3 2 }\\
        &=O(n^{-\frac 1 2 }).
    \end{align*}
    Then evaluating the integral with the linearized integrand  with the substitution $nx^{3/2}\mapsto x$ yields
    \begin{align*}
        \int_{n^{-\frac{2}{3}}}^{n^{-\frac{1}{3}}}p(x)^{-\frac{1}{2}}\sin\left(\tfrac{4}{3}np(x)^{\frac{3}{2}}\right)\d x&=\int_{p^{-1}(n^{-\frac{2}{3}})}^{p^{-1}(n^{-\frac{1}{3}})}x^{-\frac{1}{2}}\sin\left(\tfrac{4}{3}nx^{\frac{3}{2}}\right)    \alpha^{-1}\d x +O(n^{-\frac 1 2 })\\
        &= \int_{n(p^{-1}(n^{-\frac{2}{3}}))^{\frac{3}{2}}}^{n(p^{-1}(n^{-\frac{1}{3}}))^{\frac{3}{2}}}\left(\tfrac{x}{n}\right)^{-\frac{1}{3}}n^{-\frac{2}{3}}x^{-\frac{1}{3}}\sin(\tfrac{4}{3}x)\alpha^{-1}\d x +O(n^{-\frac{1}{2}})\\
        &=n^{-\frac{1}{3}}\alpha^{-1}\int_{n(a^{-1}(n^{-\frac{2}{3}}))^{\frac{3}{2}}}^{n(p^{-1}(n^{-\frac{1}{3}}))^{\frac{3}{2}}}x^{-\frac{2}{3}}\sin(x)\d x +O(n^{-\frac{1}{2}})\\
        &=O(n^{-\frac{1}{3}}),
    \end{align*}
    since the integral on the right is bounded by the convergence of 
    \begin{align*}
        \int_{0}^{\infty} x^{-\frac{2}{3}}\sin(x)\d x =\sqrt{2\pi}.
    \end{align*}
	Combining all above observations proves the statement.
\end{proof}
%%%%%%%%%%%%%%%%%%%%%%%%%%
\section{Stationary Phase Method}
\label{app:stationary}
In this appendix, we sketch the proof of \cref{thm:tate.2} from the arguments given in \cite[\S\,3]{SunadaTate.2012}.
Throughout, we drop the index $\pm$ in $p$, $q$ and $s$.

Starting point is the following modification of \cite[Prop.~3.1]{SunadaTate.2012}.
\begin{prop}\label{prop:TateAdaptation}
	In a neighborhood around zero there exist smooth functions $p,f,q_j,s_j$ for $j\in\{1,2\}$ with $p(0)=0$ and $p'(0)=\alpha>0$ and $r>0$ such that for any $y_n(k)\coloneq \pm\lfloor\lvert a\rvert n\rfloor\mp k$ with $n,k\in\IN$ such that $k/n\in ( -r,r)$, we have
	\begin{align*}
		\braket{W^n\phi,\delta_{y_n(k)} e_j}
		&= e^{i\pi(n\mp k)/2}e^{inf \left(\frac{k}{n}\right)}\left(n^{-\frac{1}{3}}s_j \left(\tfrac{k}{n}\right)\textup{Ai}\left(\pm n^{\frac{2}{3}}p \left(\tfrac{k}{n}\right)\right)-in^{-\frac{2}{3}}q_j \left(\tfrac{k}{n}\right)\textup{Ai}'\left(\pm n^{-\frac{1}{3}}p \left(\tfrac{k}{n}\right)\right)\right)
		\\&\qquad +O(n^{-1}), \qquad n\to\infty.
	\end{align*} 
\end{prop}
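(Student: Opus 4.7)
The plan is to follow the stationary phase argument of Sunada--Tate \cite[\S3]{SunadaTate.2012} while tracking the dependence on $\xi \coloneqq k/n$ as a continuous parameter, so that the coefficient functions $p, f, q_j, s_j$ emerge as smooth functions on a fixed neighborhood $(-r,r)$ of the origin rather than merely along a prescribed sequence. First I write $\braket{W^n\phi, \delta_{y_n(k)}e_j}$ as an oscillatory integral on $\DST$. Using the spectral decomposition $\wh W(p) = e^{i\omega_+(p)}\Pi_+(p) + e^{i\omega_-(p)}\Pi_-(p)$ into two smooth bands and the Fourier inversion formula, this amplitude becomes a sum over $\sigma\in\{+,-\}$ of integrals of the form $\frac{1}{2\pi}\int_\DST e^{in\Phi_{\xi_n,\sigma}(p)}\,a_{\sigma,j}(p)\,\d p$, where $\Phi_{\xi,\sigma}(p) \coloneqq \omega_\sigma(p) - \xi p$, $\xi_n = y_n(k)/n = \pm|a|\mp \xi + O(n^{-1})$, and $a_{\sigma,j}(p) \coloneqq \braket{e_j, \Pi_\sigma(p)\phi}$ is smooth.

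The critical points of $\Phi_{\xi,\sigma}$ solve $\omega_\sigma'(p) = \xi$. At $\xi = \pm|a|$ the equation has a solution $p_0$ at which $\omega_\sigma''(p_0) = 0$ but $\omega_\sigma'''(p_0)\neq 0$ (the latter using $a\neq 0 \neq b$); this is a fold-type degenerate critical point. For $\xi$ slightly inside $(-|a|,|a|)$ it bifurcates into two nearby nondegenerate real critical points, and for $\xi$ slightly outside into a complex-conjugate pair. The uniform asymptotic expansion in this fold regime is provided by the classical Chester--Friedman--Ursell construction: there exists a smooth change of variable $p\mapsto\zeta$ depending smoothly on $\xi\in(-r,r)$ that normalizes the phase to
\[
	n\Phi_{\xi,\sigma}(p) = n f_\sigma(\xi) + \tfrac{1}{3}\zeta^3 + n^{2/3}p_\sigma(\xi)\,\zeta,
\]
with $p_\sigma, f_\sigma \in C^\infty((-r,r))$, $p_\sigma(0)=0$, and $p_\sigma'(0) = \alpha = (2/(|a||b|^2))^{1/3}$.

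After this substitution the integral becomes $\int e^{i(\zeta^3/3 + n^{2/3}p_\sigma(\xi)\zeta)}\tilde a(\zeta,\xi)\,\d\zeta$, to which I apply the Airy identities $\int_\IR e^{i(\zeta^3/3 + \eta\zeta)}\d\zeta = 2\pi\,\textup{Ai}(\eta)$ and $\int_\IR \zeta\,e^{i(\zeta^3/3 + \eta\zeta)}\d\zeta = -2\pi i\,\textup{Ai}'(\eta)$ after Taylor-expanding $\tilde a(\zeta,\xi)$ to first order in $\zeta$ about $\zeta = 0$. The constant term in that expansion defines $s_j(\xi)$ with scaling $n^{-1/3}$, the linear term defines $q_j(\xi)$ with scaling $n^{-2/3}$, and all remaining contributions are absorbed into $O(n^{-1})$. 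Only the band $\sigma$ whose edge lies at the chosen sign $\pm|a|$ contributes at this leading order; the other band has nondegenerate stationary phase near the relevant velocity and produces an $O(n^{-\infty})$ remainder by repeated integration by parts. The prefactor $e^{i\pi(n\mp k)/2}$ is extracted from the phase $e^{-iy_n(k)p_0}$ at the degenerate critical point $p_0 = \pm\pi/2$ characteristic of step-coin walks, and the replacement of $\pm n|a|\mp k$ by $\pm\lfloor n|a|\rfloor\mp k$ introduces only an $O(n^{-1})$ phase error uniformly in $k$.

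The main obstacle is ensuring uniformity of the $O(n^{-1})$ remainder across the entire range $\xi \in (-r,r)$, in particular through the transition region where the two real critical points nearly coalesce at $\xi = 0$. This demands combining the Airy normal-form change of variable with uniform-in-$\xi$ integration-by-parts estimates on the Taylor remainder in $\zeta$, and verifying via the implicit function theorem that the change of variable, hence the coefficients $p_\sigma, f_\sigma, s_j, q_j$, depend smoothly on $\xi$ rather than merely continuously. These estimates are essentially carried out in \cite[\S3]{SunadaTate.2012}; my task is only to read their bounds off with $\xi = k/n$ treated as a continuous parameter rather than a fixed rescaling $d_n = n^{1/3}\xi$.
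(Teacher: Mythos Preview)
Your approach is essentially the same as the paper's: both invoke the Chester--Friedman--Ursell normal form for the fold caustic at the band edge, exactly as carried out in \cite[\S3]{SunadaTate.2012}. The paper's proof is shorter only because it cites the specific intermediate equations (2.11), (3.10), (3.11) and Lemma~3.2 of that reference rather than describing the mechanism, but the content is the same.

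There is, however, one concrete error in your write-up. You claim that ``only the band $\sigma$ whose edge lies at the chosen sign $\pm|a|$ contributes at this leading order; the other band \ldots\ produces an $O(n^{-\infty})$ remainder.'' This is false for the two-band step-coin walk: since $\omega_-=-\omega_+$ (up to a constant), \emph{both} group-velocity functions $\omega_\pm'$ have range $[-|a|,|a|]$ and both have a degenerate (fold) critical point when $\xi=\pm|a|$, just at different momenta $p_0=\pm\pi/2$. Neither band is non-stationary there, so integration by parts does not kill the second one. What actually happens in \cite{SunadaTate.2012} is that the symmetry between the two bands is used to combine them into a single oscillatory integral $J$, pulling out the parity prefactor $(1+(-1)^{n+y_n(k)})$; this is precisely their Eq.~(2.11), which the paper's proof quotes as its starting point. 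Your argument goes through once you replace the ``one band negligible'' step by this symmetry reduction; the remaining CFU analysis you describe is then applied to the single integral $J$, exactly as in Sunada--Tate.
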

\begin{proof}
	From \cite[Eq.~(2.11)]{SunadaTate.2012}, setting $w\coloneq a/\lvert a\rvert $, we know
	\begin{align*}
		\braket{W^n\delta_0\phi|\delta_{y_n(k)}e_j}=w^{-y_n(k)}(1+(-1)^{n+y_n(k)})J(e_j,n,y_n(k))
	\end{align*}
	with $J$  given by \cite[p.~2620]{SunadaTate.2012}
	\begin{align*}
		J(e_j,n,y_n(k)) = J_1(e_j,n,y_n(k)) + O(n^{-\infty}), \quad n\to\infty,
	\end{align*}
	where $J_1$ as in \cite[Eqs.~(3.10),(3.11)]{SunadaTate.2012}, for a function $\chi\in C^{\infty}_0(\IR)$ with $0\leq \chi\leq1$, satisfies
	\begin{align*}
		&J_1(e_j,n,y_n(k))\\&=\frac{e^{i\pi(n-y_n(k))}}{2\pi}e^{inf(\frac{k}{n})}\bigg(s_j\left(\tfrac{k}{n}\right)n^{-\frac 1 3 }\left(2\pi\textup{Ai}\left(\pm n^{\frac 2 3 }p\left(\tfrac{k}{n}\right)\right)-\int e^{in(t^3/3+p(k/n)t)}(1-\chi(n^{-\frac 1 3 }t))\d t\right)\\
		&\quad-q_j\left(\tfrac{k}{n}\right)n^{-\frac 2 3 }\left(2\pi i\textup{Ai}'\left(\pm n^{\frac 2 3 }p\left(\tfrac{k}{n}\right)\right)+\int e^{in(t^3/3+p(k/n)t)}t(1-\chi(n^{-\frac 1 3 }t))\d t\right)\bigg) + O(n^{-1})
	\end{align*}
	To estimate the integrals, we then apply \cite[Lemma~3.2]{SunadaTate.2012},
	which with the substitution $s=n^{-1/3}t$ yields
	\begin{align*}
		J_1(e_j,n,y_n(k))&=e^{i\pi(n-y_n(k))}e^{inf(\frac{k}{n})}\bigg(n^{-\frac 1 3 }s_j\left(\tfrac{k}{n}\right)\textup{Ai}\left(\pm n^{\frac 2 3 }p\left(\tfrac{k}{n}\right)\right)-q_j\left(\tfrac{k}{n}\right)n^{-\frac 2 3 } i\textup{Ai}'\left(\pm n^{\frac 2 3 }p\left(\tfrac{k}{n}\right)\right)\bigg) \\&\qquad + O(n^{-1}).
	\end{align*}
	Combining these observations proves the claim.
\end{proof}
We conclude with the
\begin{proof}[Proof of \cref{thm:tate.2}] We will continue in the setting of the previous proof and first show $$ J_1(e_j,n,y_n(k))=O(n^{-1/3}).$$
To this end, note that continuity of $p$ as well as 
$\operatorname{Ai}'(x)\le C(1+x^{1/4})$ \cite[\S\,11.1.4]{Olver.1974} imply
	\begin{align*}
		\left\lvert n^{-\frac 1 3 }\textup{Ai}'\left(\pm n^{\frac 2 3 }p(x)\right)\right\rvert
		\leq Cn^{-\frac 1 3 } + Cp(x)^{1/4}n^{-1/6}.
	\end{align*}
	Therefore, we see
	\begin{align*}
		\left\lvert J_1(e_j,n,y_n(k))\right\rvert &\leq n^{-\frac 1 3 }\left\lvert s_j\left(\tfrac{k}{n}\right)\textup{Ai}\left(\pm n^{\frac 2 3 }p\left(\tfrac{k}{n}\right)\right)-q_j\left(\tfrac{k}{n}\right)n^{-\frac 1 3 } i\textup{Ai}'\left(\pm n^{\frac 2 3 }p\left(\tfrac{k}{n}\right)\right)\right\rvert + O(n^{-1}) \\
		&=O(n^{-\frac 1 3 })
	\end{align*}
	Combined with \cref{prop:TateAdaptation}, this yields
	\begin{align*}
		&\left\lvert \braket{W^n(\delta_0\phi)|\delta_{y_n(k)}e_j}\right\rvert^2=(1+(-1)^{n+y_n(k)})J_1(e_j,n,y_n(k))(1+(-1)^{n+y_n(k)})\bar{J_1(e_j,n,y_n(k))}+O(  {n^{-\frac 4 3 }})\\
		&\quad=(1+(-1)^{n+y_n(k)})^2\left(s_j\left(\tfrac{k}{n}\right)n^{-\frac 1 3 }\textup{Ai}\left(n^{\frac 2 3 }p\left(\tfrac{k}{n}\right)\right)-iq_j\left(\tfrac{k}{n}\right)n^{-\frac 2 3 }\textup{Ai}'\left(n^{\frac 2 3 }p\left(\tfrac{k}{n}\right)\right)\right)\\
		&\quad \qquad \times \left(s_j\left(\tfrac{k}{n}\right)n^{-\frac 1 3 }\textup{Ai}\left(\pm n^{\frac 2 3 }p\left(\tfrac{k}{n}\right)\right)+iq_j\left(\tfrac{k}{n}\right)n^{-\frac 2 3 }\textup{Ai'}\left(\pm n^{\frac 2 3 }p\left(\tfrac{k}{n}\right)\right)\right)+O(  {n^{-\frac 4 3 }})\\
		&\quad =(1+(-1)^{n+y_n(k)})^2\left(s_j^2\left(\tfrac{k}{n}\right)n^{-\frac 2 3 }\textup{Ai}^2\left(\pm n^{\frac 2 3 }p\left(\tfrac{k}{n}\right)\right)+q_j^2\left(\tfrac{k}{n}\right)n^{-\frac 4 3 }\textup{Ai}'^2\left(\pm n^{\frac 2 3 }p\left(\tfrac{k}{n}\right)\right)\right)\\
		&\quad\qquad +O(  {n^{-\frac 4 3 }}).
	\end{align*}
	Summing over $j=1,2$ and setting $q^2\coloneq q_1^2+q_2^2$ and $s^2\coloneq s_1^2+s_2^2$ yields the claim.
\end{proof}

\bibliographystyle{halpha-abbrv}
%\bibliography{00lit}
\bibliography{../../../CLOUD/Literature/00lit}
\end{document}